\documentclass[11pt,a4paper]{article}
\usepackage{amssymb}
\usepackage{amsmath}
\usepackage{amsfonts}
\usepackage{bm}
\usepackage{bbm}
\usepackage{amsthm}
\usepackage{mathrsfs}
\usepackage{hyperref}
\usepackage{color}
\usepackage{mathdots}
\usepackage[margin=2.41cm]{geometry}
\usepackage[all,cmtip]{xy}
\usepackage[utf8]{inputenc}
\usepackage{graphicx}
\usepackage{varwidth}
\usepackage{comment}
\usepackage{enumitem}
\usepackage{refcount}

\usepackage{mathtools}

\usepackage{upgreek}
\usepackage{rotating}

\usepackage{tikz}
\usetikzlibrary{shapes.geometric}
\usetikzlibrary{patterns.meta}
\usetikzlibrary{arrows.meta}
\usetikzlibrary{calc} 
\usetikzlibrary{cd}
\usetikzlibrary{matrix,calc}
\usetikzlibrary{decorations.pathreplacing,positioning}
\usetikzlibrary{decorations.markings,shapes.geometric,shapes.misc}    
\usetikzlibrary{graphs,decorations.pathmorphing,decorations.markings}

\tikzcdset{
  smooth squiggle/.style={
    decorate,
    decoration={
      snake,
      amplitude=0.45mm,
      segment length=2.2mm,
      pre length=0mm,
      post length=0.5mm
    },
    ->
  }
}

\definecolor{darkred}{rgb}{0.8,0.1,0.1}
\hypersetup{
	colorlinks=true,         
	urlcolor=darkred,
	linkcolor=darkred,
	citecolor=blue,
}

\theoremstyle{plain}
\newtheorem{theo}{Theorem}[section]
\newtheorem{lem}[theo]{Lemma}
\newtheorem{propo}[theo]{Proposition}
\newtheorem{cor}[theo]{Corollary}

\theoremstyle{definition}

\makeatletter
\newenvironment{theobis}[1]{%
  \renewcommand{\thetheo}{\getrefnumber{#1}$'$}%
  \begin{theo}
}{%
  \end{theo}
  \addtocounter{theo}{-1}
}
\makeatother

\newtheorem{exx}[theo]{Example}
\newenvironment{ex}
{\pushQED{\qed}\exx}
{\popQED\endexx}

\numberwithin{equation}{section}

\def\nn{\nonumber}

\def\bbC{\mathbb{C}}

\def\bbZ{\mathbb{Z}}

\def\d{{\rm d}}
\def\ii{{{\rm i}}}

\def\id{\mathrm{id}}
\def\supp{\mathrm{supp}}
\def\deg{\mathrm{deg}}
\def\dd{\mathrm{d}}

\def\cc{\mathrm{c}}

\def\1{I}

\def\ad{\mathrm{ad}}

\def\F{\mathcal{F}}

\def\L{\mathcal{L}}

\newcommand\wedgepair[2]{\left\langle #1 {}\stackrel{\scalebox{0.6}{$\wedge$}}{,}{} #2\right\rangle}
\newcommand{\pair}[2]{\langle\!\langle #1 , #2 \rangle\!\rangle}
\newcommand{\pairsize}[2]{\left\langle\!\left\langle #1 , #2 \right\rangle\!\right\rangle}

\DeclareMathAlphabet{\mathdutchcal}{U}{dutchcal}{m}{n}
\def\kay{\mathdutchcal{k}}

\def\g{\mathfrak{g}}

\def\CP{\mathbb{C}P^1}

\def\sk{\vspace{2mm}}

\makeatletter
\let\@fnsymbol\@alph
\makeatother

\title{%
Computational homological methods for integrable field theories
}

\author{%
Marco Benini$^{1,2,a}$, Ryan A.~Cullinan$^{3,b}$, Alexander Schenkel$^{4,c}$\ and\ Beno\^{\i}t Vicedo$^{3,d}$\vspace{4mm}\\
{\small ${}^1$ Dipartimento di Matematica, Dipartimento di Eccellenza 2023-27, Universit\`a di Genova,}\\
{\small Via Dodecaneso 35, 16146 Genova, Italy.}\vspace{2mm}\\
{\small ${}^2$ INFN, Sezione di Genova,}\\
{\small Via Dodecaneso 33, 16146 Genova, Italy.}\vspace{2mm}\\
{\small ${}^3$ Department of Mathematics, University of York,}\\
{\small Heslington, York YO10 5GH, United Kingdom.}\vspace{2mm}\\
{\small ${}^4$~Dipartimento di Matematica, Universit{\`a} di Trento and INFN-TIFPA,}\\
{\small Via Sommarive 14, 38123 Povo (Trento), Italy.}\vspace{4mm}\\
{\small \begin{tabular}{ll}
Email: & ${}^a$~\href{mailto:marco.benini@unige.it}{\texttt{marco.benini@unige.it}}\\
& ${}^b$~\href{mailto:ryan.cullinan@york.ac.uk}{\texttt{ryan.cullinan@york.ac.uk}}\\
& ${}^c$~\href{mailto:alexander.schenkel@unitn.it}{\texttt{alexander.schenkel@unitn.it}}\\
& ${}^d$~\href{mailto:benoit.vicedo@gmail.com}{\texttt{benoit.vicedo@gmail.com}}
\vspace{2mm}
\end{tabular}
}
}

\date{July 2026}

\begin{document}

\maketitle

\begin{abstract}
\noindent 
We develop explicit computational tools for the recent homological approach to the construction of $2$-dimensional integrable field theories on $\Sigma$ from $4$-dimensional semi-holomorphic Chern-Simons theory on $\Sigma \times C$. In this framework, the operation of integrating out the spectral curve $C$ is realized by homotopy transfer of a cyclic $L_\infty$-algebra associated with the $4$-dimensional theory with prescribed singularities and boundary conditions. We construct explicit strong deformation retracts for divisor-twisted Dolbeault complexes on $C=\mathbb{C}P^1$ and use them to make the transferred $L_\infty$-structure computationally accessible. As an application, we study the choice of meromorphic $1$-form corresponding to the principal chiral model with a Wess-Zumino term. We compute the transferred Maurer-Cartan action and the associated Lax connection, showing that the former resums to the standard principal chiral model action with a Wess-Zumino term and that the latter reproduces the usual Lax connection.
\end{abstract}
\vspace{-1mm}

\paragraph*{Keywords:} integrable field theories, $4$-dimensional semi-holomorphic Chern-Simons theory, $L_\infty$-algebras, homotopy transfer
\vspace{-2mm}

\paragraph*{MSC 2020:} 70Sxx, 81Txx, 55Uxx
\vspace{-2mm}

\renewcommand{\baselinestretch}{0.8}\normalsize
\setcounter{tocdepth}{1}
\tableofcontents
\renewcommand{\baselinestretch}{1.0}\normalsize


\section{\label{sec:intro}Introduction and summary}
A remarkable feature of the gauge-theoretic approach to integrability based on $4$-dimensional
semi-holomorphic Chern-Simons theory, originally proposed in Nekrasov's thesis \cite{Nekrasov}
and later independently rediscovered and further developed by Costello, Witten and Yamazaki
\cite{Costello:2013zra,Costello:2013sla,Witten:2016spx,Costello:2017dso,Costello:2018gyb,CY3},
is that the geometric origin of the spectral parameter is made manifest.
More concretely, in the application to $2$-dimensional integrable field theories pioneered by
Costello and Yamazaki \cite{CY3}, the $2$-dimensional spacetime $\Sigma$ is extended to the
$4$-dimensional product manifold $X=\Sigma\times C$, on which semi-holomorphic Chern-Simons
theory is formulated, with the Riemann surface $C$ carrying the complex spectral parameter $z$
of the Lax connection.
The passage from this $4$-dimensional theory on $X$ to the $2$-dimensional integrable field theory on 
$\Sigma$ is usually described as ``integrating out $C$''. Through this mechanism, 
a remarkably broad family of $2$-dimensional integrable field theories is organized 
in terms of the input data of the associated $4$-dimensional theory, consisting of 
1.)~a structure group $G$, 2.)~a Riemann surface $C$, 3.)~a meromorphic $1$-form $\omega$ 
on $C$, and 4.)~prescribed boundary conditions and singularities for the gauge field $A$ 
at the poles and zeros of $\omega$, respectively. In this paper, we focus on the 
Riemann sphere $C=\CP$ and therefore specialize to this setting throughout the introduction.
\sk

Classically, ``integrating out $C = \CP$'' is understood to mean solving part 
of the bulk equations of motion in the holomorphic gauge $A_{\bar z} = 0$,
see \cite{CY3} as well as \cite{DLMV,Lacroix,BSV} for further developments. 
Let us briefly recall how this procedure works. The starting point is to write 
the $4$-dimensional gauge field $A$ as a ``formal'' gauge transformation \cite{CY3,DLMV,Lacroix}
\begin{equation} \label{A formal gauge}
A = - (\bar\partial + \d_\Sigma) \widehat{g} \, \widehat{g}^{-1} + \widehat{g} \, L \, \widehat{g}^{-1},
\end{equation}
where $\widehat{g}$ is a smooth $G$-valued function on $X$ while $L$ is a 
$\g$-valued $1$-form with components only along $\Sigma$ that have the same 
singularities as the $\Sigma$-components of $A$ at the zeros of $\omega$. 
Importantly, however, $L$ is not required to satisfy the boundary conditions 
imposed at the poles of $\omega$. It is in this sense that the gauge 
transformation \eqref{A formal gauge} is only ``formal'' since it does not 
preserve the space of allowed fields of the $4$-dimensional theory. Upon 
enforcing two of the bulk equations of motion for $A$, the field $L$ becomes 
meromorphic along $C$ with poles at the zeros of $\omega$, and is identified 
with the Lax connection of the $2$-dimensional integrable field theory. 
Substituting \eqref{A formal gauge} into the $4$-dimensional action produces 
the action of the $2$-dimensional integrable field theory, whose dynamics 
is given by the remaining bulk equation of motion encoding the flatness of $L$.
\sk

Recently, a mathematically rigorous homological refinement of this construction 
was developed in \cite{BSV2}, which conceptually clarifies the mechanism recalled 
above by which $4$-dimensional semi-holomorphic Chern-Simons theory on 
$X=\Sigma\times C$ gives rise to $2$-dimensional integrable field theories on 
$\Sigma$. In particular, it gives the operation of ``integrating out $C$'' 
a precise cohomological meaning and provides a clear interpretation of the 
``formal'' gauge transformation \eqref{A formal gauge} used in the construction 
of the Lax connection. The results of \cite{BSV2} were formulated and proved 
in the homological framework of (cyclic) $L_\infty$-algebras which is used to 
perturbatively encode action functionals and gauge symmetries of classical field 
theories. In the literature this approach is also known as the Batalin-Vilkovisky formalism, 
see e.g.\ \cite{CG1,CG2} for a modern perspective and also \cite{JRSW} for a review. 
In order to explain the goal of the present paper it is useful to first give a 
brief overview of the construction from \cite{BSV2}, focusing on those aspects relevant for 
the comparison with the standard approach recalled in the previous paragraph.
\sk

The starting point of \cite{BSV2} is the identification of the 
cyclic $L_\infty$-algebra $\big(\F(X),\ell\big)$ corresponding to $4$-dimensional 
semi-holomorphic Chern-Simons theory on $X = \Sigma \times C$. Since the theory is 
topological along $\Sigma$ and holomorphic along $C = \CP$, the underlying cochain complex 
is modeled on the complex of de Rham-Dolbeault forms on $X$ with differential 
$\d_\Sigma + \bar\partial$ given by the sum of the de Rham differential $\d_\Sigma$ 
along $\Sigma$ and the Dolbeault differential $\bar\partial$ along $C$. However,
to allow singularities and enforce boundary conditions at the zeros and 
poles of $\omega$, the 
various Dolbeault forms entering $\F(X)$ are twisted by suitable 
divisors. For instance, a generic field $A \in \F(X)^1$ in degree $1$ is
given by a $\g$-valued $1$-form $A = A_\Sigma + A_{\bar z}$ 
whose components display the following distinctive features: The components 
$A_\Sigma$ along $\Sigma$ have certain singularities determined by a splitting of the 
zeros of $\omega$, the $(0,1)$-component $A_{\bar z}$ along $C$ is 
instead smooth, and both $A_\Sigma$ and $A_{\bar z}$
are subject to certain boundary conditions at the poles of $\omega$. 
The resulting cochain complex $\big( \F(X), \d_\Sigma + \bar\partial \big)$
carries a natural Lie bracket $\ell_2$ making it into a differential graded Lie algebra, which we regard as
an $L_\infty$-algebra $\big( \F(X), \ell \big)$ with $L_\infty$-structure $\ell = (\ell_n)_{n \geq 1}$ given by
the differential $\ell_1 = \d_\Sigma + \bar\partial$, the Lie bracket $\ell_2$
and vanishing higher brackets $\ell_n = 0$, for $n \geq 3$. Furthermore, 
the twists of the various Dolbeault forms are 
chosen so that $\big( \F(X), \ell \big)$ admits also a cyclic structure. The Maurer-Cartan
action functional associated with the resulting cyclic $L_\infty$-algebra $\big( \F(X), \ell \big)$ coincides
with the action functional of $4$-dimensional semi-holomorphic Chern-Simons theory from \cite{CY3}.
\sk

In this language, the operation of ``integrating out $C$'' can be given a rigorous 
homological meaning \cite{BSV2}: It is realized by homotopy transfer of the cyclic 
$L_\infty$-algebra $\big( \mathcal F(X), \ell \big)$ along a strong deformation retract 
to the partial Dolbeault cohomology $\F(\Sigma) = \mathsf H^\bullet \big( \F(X), \bar\partial \big)$ along $C$. The upshot of this construction is a 
cyclic $L_\infty$-algebra $\big( \F(\Sigma), \ell' \big)$
 which defines a 
(perturbative) $2$-dimensional field theory on $\Sigma$. The homotopy transfer 
theorem also provides an equivalence of $L_\infty$-algebras
\begin{equation} \label{infty morph i intro}
\begin{tikzcd}
\widetilde{i}_\infty \; : \; \big( \F(\Sigma), \ell' \big) \arrow[r, smooth squiggle, "\sim"] & \big( \F(X), \ell \big),
\end{tikzcd}
\end{equation}
which formally establishes the classical equivalence between the $4$-dimensional 
semi-holomorphic Chern-Simons theory on $X$ and the associated $2$-dimensional field theory on $\Sigma$.
\sk

While this construction gives a conceptual and mathematically precise account of the 
passage from the $4$-dimensional semi-holomorphic Chern-Simons theory on $X$ to the 
$2$-dimensional integrable field theory on $\Sigma$, the transferred cyclic 
$L_\infty$-structure $\ell'$ on $\F(\Sigma)$ was only defined implicitly in 
\cite{BSV2}, with no computationally accessible presentation.
Indeed, the homotopy transfer construction from \cite{BSV2} involves the choice of strong deformation retracts
\begin{equation} \label{SDRs intro}
\begin{tikzcd}
\mathsf H^\bullet\Omega^{0,\bullet}(C,L_D)
 \arrow[r,"i_D",shift left=1mm]
&
\Omega^{0,\bullet}(C,L_D)
 \arrow[l,"p_D",shift left=1mm]
 \arrow["h_D"',out=-15,in=15,distance=8mm]
\end{tikzcd}
\end{equation}
for each divisor-twisted Dolbeault complex appearing in 
$\F(X)$. Although Hodge theory ensures the existence of such 
data, it does not provide a computationally accessible way 
of evaluating the transferred cyclic $L_\infty$-structure $\ell'$ 
or the $\infty$-quasi-isomorphism $\widetilde{i}_\infty$ in \eqref{infty morph i intro}. One of the central 
aims of the present paper will be to remedy this shortfall by explicitly constructing 
strong deformation retracts as in \eqref{SDRs intro} for divisors $D : C \to \bbZ$ 
of any degree, see Section \ref{sec:explicitSDR},
thus paving the way for the explicit computation of the transferred Maurer-Cartan action functional and associated Lax connection.
\sk

The construction of the Lax connection associated with the above $2$-dimensional 
field theory, modeled by the transferred cyclic $L_\infty$-algebra $\big( \F(\Sigma), \ell' \big)$,
also admits a conceptual and transparent homological interpretation in the approach of 
\cite{BSV2}. Specifically, one can define a differential graded Lie algebra, which 
we regard as an $L_\infty$-algebra $\big(\L(X),\ell\big)$, in 
much the same way as $\big(\F(X),\ell\big)$, where one allows singularities only, while no boundary condition is imposed.
In other words, although $\L(X)$ is modeled again on the complex of de 
Rham-Dolbeault forms on $X$, the various Dolbeault forms are 
twisted exclusively by non-negative divisors, which are related to the zero divisor of 
$\omega$, to allow for specified singularities at the zeros of $\omega$. For instance, a generic field 
$\widetilde{A} \in \L(X)^1$ in degree $1$ is given by a $\g$-valued $1$-form 
$\widetilde{A} = \widetilde{A}_\Sigma + \widetilde{A}_{\bar z}$ whose components display 
the following distinctive features: The components 
$\widetilde{A}_\Sigma$ along $\Sigma$ have certain singularities specified by a
splitting of the zeros of $\omega$, the $(0,1)$-component $\widetilde{A}_{\bar z}$ 
along $C$ is instead smooth, and neither $\widetilde{A}_\Sigma$ nor $\widetilde{A}_{\bar z}$
are subject to any boundary condition. The same homotopy transfer construction as above produces out of the 
$L_\infty$-algebra $\big( \L(X), \ell \big)$ an equivalent  
$L_\infty$-algebra $\big(\L(\Sigma),\ell'\big)$, which in this specific case turns out to be a differential
graded Lie algebra, describing meromorphic flat 
connections along $\Sigma$. The equivalence is witnessed by an $\infty$-quasi-isomorphism,
which in this specific case turns out to be a quasi-isomorphism of differential graded Lie algebras
\begin{equation} \label{morph i intro}
\begin{tikzcd}
i \; : \; \big( \L(\Sigma), \ell' \big) \arrow[r, "\sim"] & \big( \L(X), \ell \big),
\end{tikzcd}
\end{equation}
whose quasi-inverse is an $\infty$-quasi-isomorphism
\begin{equation} \label{infty morph p intro}
\begin{tikzcd}
\widetilde{p}_\infty \; : \; \big( \L(X), \ell \big) \arrow[r, smooth squiggle, "\sim"] & \big( \L(\Sigma), \ell' \big).
\end{tikzcd}
\end{equation}
With this additional structure, the assignment of the Lax connection can now be described as 
follows: Define an $\infty$-morphism via the composition
\begin{equation} \label{lambda infty def intro}
\begin{tikzcd}[column sep=12mm, row sep=10mm]
(\mathcal F(\Sigma), \ell') \arrow[d, smooth squiggle, "\sim", "\widetilde{i}_\infty"'] \arrow[r, smooth squiggle, "\lambda_\infty"] & (\mathcal L(\Sigma), \ell')\\
(\mathcal F(X), \ell) \arrow[r, "\psi"'] & (\mathcal L(X), \ell) \arrow[u, smooth squiggle, "\sim"', "\widetilde{p}_\infty"]
\end{tikzcd}
\end{equation}
where the bottom morphism is defined by 
forgetting the boundary conditions at the poles of $\omega$. The resulting $\infty$-morphism 
$\lambda_\infty$ in \eqref{lambda infty def intro} then assigns to each on-shell 
field configuration of the $2$-dimensional field theory on $\Sigma$, i.e.\ to each 
Maurer-Cartan element in $\big( \mathcal F(\Sigma), \ell' \big)$, its flat Lax 
connection, i.e.\ a Maurer-Cartan element in $\big( \mathcal L(\Sigma), \ell' \big)$.
\sk

The main aim of the present paper is to render the abstract homological 
framework of \cite{BSV2} explicit and computationally accessible. To this end, 
in Section \ref{sec:explicitSDR} we construct explicit strong deformation retracts 
for divisor-twisted Dolbeault complexes as in \eqref{SDRs intro}. 
These enable us to explicitly compute both the Maurer-Cartan action functional $S_{\rm MC}$, associated with the transferred 
cyclic $L_\infty$-algebra $\big(\mathcal F(\Sigma),\ell'\big)$ encoding the resulting 
$2$-dimensional field theory, and the corresponding Lax connection. For simplicity 
we will focus on a particular choice of meromorphic $1$-form $\omega$, which corresponds 
to the principal chiral model with a Wess-Zumino term (in short, WZ-term). We show in Section \ref{sec: deriving MC action}
that in this case the Maurer-Cartan action functional $S_{\rm MC}[\phi]$ evaluated on a field configuration 
$\phi \in \F(\Sigma)^1 = \Omega^0(\Sigma, \g)$ resums to the action functional of the 
principal chiral model with a WZ-term for the group-valued field given by the formal 
exponential $g = e^\phi$. Furthermore, in Section \ref{sec:Lax connection} we show that, up to a gauge transformation, 
$\lambda_\infty(\phi)$ coincides with the usual Lax connection 
$L \in \L(\Sigma)^1$ of the principal chiral model with a WZ-term. More specifically, we show that
\begin{equation} \label{A formal gauge precise}
\psi(A) = - (\d_\Sigma + \bar\partial) g_2 g_2^{-1} + g_2 \,i(L)\, g_2^{-1},
\end{equation}
where $A = \widetilde{i}_\infty(\phi) \in \F(X)^1$ is the image of the field configuration
$\phi \in \F(\Sigma)^1$ under the $\infty$-quasi-isomorphism \eqref{infty morph i intro} 
and $i(L) \in \L(X)^1$ is the image of the Lax connection $L \in \L(\Sigma)^1$ under the 
quasi-isomorphism \eqref{morph i intro}. The honest gauge transformation in \eqref{A formal gauge precise} should 
be compared with the ``formal'' gauge transformation in \eqref{A formal gauge} relating the 
singular gauge field $A \in \F(X)^1$ satisfying boundary conditions to the meromorphic 
connection $L \in \L(\Sigma)^1$, or its canonical inclusion $i(L) \in \L(X)^1$ in 
$\L(X)$, where singular gauge fields are not subject to boundary conditions.
After applying the additional map $\psi$, which forgets the boundary conditions, 
we see that this ``formal'' gauge transformation becomes an honest gauge transformation 
in the $L_\infty$-algebra $\big( \L(X), \ell \big)$.
\sk

Although we focus on the principal chiral model with a WZ-term as a concrete proof of 
concept, we expect the computational methods developed here to apply more broadly. 
In particular, the computations of Section \ref{sec:application} should extend to more 
general choices of meromorphic $1$-form $\omega$ on $C=\CP$, as well as to other 
choices of singularities and local boundary conditions in the $L_\infty$-algebra 
$\big(\F(X),\ell\big)$ of $4$-dimensional semi-holomorphic Chern-Simons theory. Our treatment 
of the principal chiral model with a WZ-term should also be compared with the recent work 
\cite{Alfonsi:2026pth}, which develops a complementary homotopy-algebraic approach to the 
principal chiral model. In that work, cyclic $L_\infty$-algebras are constructed for both 
$4$-dimensional semi-holomorphic Chern-Simons theory and the principal chiral model, and an 
explicit $\infty$-quasi-isomorphism between them is constructed using the known Lax connection 
of the principal chiral model as input. By contrast, in the present work the Lax connection 
of the principal chiral model with a WZ-term is derived by explicitly evaluating the $\infty$-quasi-isomorphism $\widetilde{i}_\infty$ in \eqref{infty morph i intro}, 
determined by the homotopy transfer theorem, on the $2$-dimensional field $\phi \in \F(\Sigma)^1$. 
This allows us to obtain a closed-form expression for the $4$-dimensional gauge field 
$A = \widetilde{i}_\infty(\phi) \in \F(X)^1$ from which the Lax connection can be read off 
after applying the morphism $\psi$, which forgets the boundary conditions, as in 
\eqref{A formal gauge precise}. Moreover, the action of the resulting $2$-dimensional 
integrable field theory is itself derived by homotopy transfer of the cyclic $L_\infty$-algebra 
$\big(\F(X),\ell\big)$. More generally, we expect an important advantage of our constructive 
approach to emerge in applications to higher-dimensional topological-holomorphic Chern-Simons 
theories, such as those considered from a homological perspective in \cite{BCSV}, where the 
direct generalizations of standard techniques appear to be more limited in scope, see \cite{SV,CL}.
\sk

Let us briefly outline the structure of the paper. In Section \ref{sec:prelim} we recall 
the necessary homological background and review the relevant results of \cite{BSV2}. In 
Section \ref{sec:explicitSDR} we construct explicit strong deformation retracts for 
divisor-twisted Dolbeault complexes on $\CP$, which play a pivotal role in all subsequent computations. 
In Section \ref{sec:application} we use these strong deformation retracts to make the framework of \cite{BSV2} 
explicit in a specific example, where by computing both the 
transferred Maurer-Cartan action functional and the associated Lax connection
we recognize the principal chiral model with a WZ-term. The technical verification of the 
strong deformation retract identities from Section \ref{sec:explicitSDR} is deferred to 
Appendix \ref{sec:proof of SDR}.


\section{\label{sec:prelim}Preliminaries}
In this section we recall the main results of \cite{BSV2}, which provide a homological 
perspective on the Costello-Yamazaki construction of $2$-dimensional integrable field 
theories from $4$-dimensional semi-holomorphic Chern-Simons theory.

\subsection{Perturbative classical field theories and $L_\infty$-algebras} \label{sec:pertFT and Linfty}
From a field-theoretic point of view, $L_\infty$-algebras provide an efficient and 
systematic way of perturbatively encoding the equations of motion of a classical field 
theory and its gauge symmetries. Specifically, an $L_\infty$-algebra is a pair $(L, \ell)$ 
consisting of a $\bbZ$-graded vector space $L = (L^i)_{i \in \bbZ}$ and a family 
$\ell = (\ell_n)_{n \geq 1}$ of graded antisymmetric linear maps
\begin{equation}
\ell_n : L^{\otimes n} \longrightarrow L
\end{equation}
of degree $|\ell_n| = 2-n$. These maps are required to satisfy the homotopy Jacobi identities
\begin{equation} \label{L-infty algebra relations}
\sum_{k+l-1 = n} \sum_{\sigma \in \text{Sh}(l, k-1)} (-1)^{|\sigma|} (-1)^{k-1} \ell_k \circ \Big( \ell_l \otimes \id^{\otimes (k-1)} \Big) \circ \gamma_\sigma = 0
\end{equation}
for every $n \geq 1$, where $\text{Sh}(l,l^\prime) \subset \Sigma_n$ is the set 
of $(l,l^\prime)$-shuffle permutations, and for a permutation $\sigma \in \Sigma_n$ 
we denote by $(-1)^{|\sigma|}$ its parity and by $\gamma_\sigma : L^{\otimes n} \to L^{\otimes n}$ 
its action on $L^{\otimes n}$ induced by the symmetric braiding on graded vector spaces given 
by $x \otimes y \mapsto (-1)^{|x| |y|} y \otimes x$, for homogeneous $x, y \in L$. Note 
that for $n=1$ the relation \eqref{L-infty algebra relations} reduces to the condition
$\ell_1 \circ \ell_1 = 0$, so that every $L_\infty$-algebra $(L, \ell)$ has an underlying 
cochain complex $(L, \ell_1)$.
\sk

In the convention we use here, the fields of the associated classical field theory 
are degree $1$ elements $\alpha \in L^1$ whose field equations are given by the Maurer-Cartan equation
\begin{subequations} \label{MC prelim}
\begin{equation} \label{MC eq}
\sum_{n\geq 1} \frac{1}{n!} \, \ell_n(\alpha^{\otimes n}) = 0 .
\end{equation}
If $L^0$ is non-trivial then degree $0$ elements $\epsilon \in L^0$ parametrize 
infinitesimal gauge transformations, under which a field $\alpha \in L^1$ transforms as
\begin{equation} \label{MC gauge tr}
\delta_\epsilon \alpha = \sum_{n \geq 0} \frac{1}{n!} \, \ell_{n+1}(\alpha^{\otimes n}, \epsilon) .
\end{equation}
Similarly, the elements in negative degrees $L^i$ for $i < 0$ parametrize 
higher gauge transformations. When the $L_\infty$-algebra is equipped with a compatible 
cyclic structure, namely a degree $-3$ graded symmetric linear map 
$\langle\!\langle \cdot, \cdot \rangle\!\rangle : L \otimes L \to \bbC$ invariant 
under the $L_\infty$-structure, the action functional of the classical field theory is given by
\begin{equation} \label{MC action}
S[\alpha] = \sum_{n\geq 1} \frac{1}{(n+1)!} \big\langle\!\big\langle \alpha, \ell_n(\alpha^{\otimes n}) \big\rangle\!\big\rangle \quad .
\end{equation}
\end{subequations}
The Euler-Lagrange equations of the action are precisely the Maurer--Cartan equation \eqref{MC eq}.
\sk

Importantly, the $L_\infty$-algebra relations \eqref{L-infty algebra relations} 
ensure the consistency of the equations of motion \eqref{MC eq} in the form of generalized 
Bianchi identities, their covariance under gauge transformations \eqref{MC gauge tr}, 
and the closure of the gauge algebra up to higher transformations and possibly equations of motion.
One should view the (possibly infinitely many) terms in the sums in \eqref{MC eq}, 
\eqref{MC gauge tr} and \eqref{MC action} as the terms in the perturbative expansion 
of the equations of motion, infinitesimal gauge transformation and action functional, 
respectively. The cyclic $L_\infty$-algebra framework therefore provides a perturbative 
packaging of the field equations, (higher) gauge transformations and action functional 
of a classical field theory into a single algebraic structure.
\begin{ex} \label{ex: 3d CS}
A key example of the $L_\infty$-algebra framework, closely related to the main 
example considered in \cite{BSV2} to be recalled below, is that of Chern-Simons 
theory on a $3$-dimensional oriented manifold $M$. Let $\g$ be a fixed Lie algebra 
with Lie bracket $[\cdot, \cdot] : \g \otimes \g \to \g$ and equipped with a non-degenerate 
invariant bilinear form $\langle \cdot, \cdot \rangle : \g \otimes \g \to \bbC$. The 
underlying cochain complex $(L, \ell_1)$ is the $\g$-valued de Rham complex on $M$, namely
\begin{equation}
\big(\mathcal F_{\rm CS}(M), \d\big) \; \coloneqq \; \Bigg(
\begin{tikzcd}
\overset{(0)}{\Omega^0(M, \g)} \arrow[r, "\d_M"] & \overset{(1)}{\Omega^1(M, \g)} \arrow[r, "\d_M"] & \overset{(2)}{\Omega^2(M, \g)} \arrow[r, "\d_M"] & \overset{(3)}{\Omega^3(M, \g)}
\end{tikzcd}
\Bigg).
\end{equation}
The $L_\infty$-algebra $(L, \ell)$ has a non-trivial $\ell_2$-bracket 
$\ell_2 : \mathcal F_{\rm CS}(M)^{\otimes 2} \to \mathcal F_{\rm CS}(M)$, 
$\alpha \otimes \beta \mapsto [\alpha \,\overset{\wedge},\, \beta]$ given by 
combining the Lie bracket on $\g$ with the wedge product on forms, but all of 
the higher $\ell_n$-brackets for $n \geq 3$ are trivial $\ell_n = 0$. In this 
case we say that $\ell = (\ell_1, \ell_2)$ defines a differential graded Lie algebra 
structure on $L$. The cyclic structure is defined only on compactly supported fields and given by
\begin{equation}
\langle\!\langle \cdot, \cdot \rangle\!\rangle \; : \; \mathcal F_{{\rm CS}, c}(M) \otimes \mathcal F_{{\rm CS}, c}(M) \;\longrightarrow\; \bbC, \qquad \alpha \otimes \beta \; \longmapsto \; \int_M \langle \alpha \, \overset{\wedge}, \, \beta \rangle .
\end{equation}
The field content is given by a $\g$-valued $1$-form $A \in \mathcal F_{\rm CS}(M)^1 = \Omega^1(M, \g)$ 
and the Maurer-Cartan action \eqref{MC action} for a compactly supported field coincides with the Chern-Simons action
\begin{equation}
S[A] \;=\; \int_M \big\langle A \, \overset{\wedge}, \, \tfrac 12 \d_M A + \tfrac{1}{3!} [A \, \overset{\wedge}, \, A] \big\rangle .
\end{equation}
The corresponding equation of motion \eqref{MC eq} is the flatness condition 
$\d_M A + \tfrac 12 [A \,\overset{\wedge},\, A] = 0$ and \eqref{MC gauge tr} recovers 
the usual infinitesimal gauge transformation formula 
$\delta_\epsilon A = \d_M \epsilon + [A \, \overset{\wedge}, \, \epsilon]$ 
with parameter $\epsilon \in \mathcal F_{\rm CS}(M)^0 = \Omega^0(M, \g)$.
\end{ex}

\subsection{$4$-dimensional semi-holomorphic Chern-Simons theory}
The $4$-dimensional semi-holomorphic Chern-Simons theory of Costello and Yamazaki 
\cite{CY3} is a topological-holomorphic analogue of Chern-Simons theory defined 
on the $4$-dimensional product manifold
\begin{equation}
X \coloneqq \Sigma \times C
\end{equation}
with $\Sigma$ a $2$-dimensional Lorentzian spacetime and $C$ a compact Riemann 
surface. For simplicity, throughout this paper we will only be concerned with 
the case of the Riemann sphere $C = \CP$, but we shall keep denoting it by $C$ for brevity.
We shall be interested in two separate $L_\infty$-algebras associated with this theory, 
constructed in \cite{BSV2}. To motivate their definition, it is useful to recall the 
form of the action of $4$-dimensional semi-holomorphic Chern-Simons theory.

\subsubsection{The underlying raw $L_\infty$-algebra $\mathcal E(X)$}
Let $\g$ be a Lie algebra with Lie bracket $[\cdot, \cdot] : \g \otimes \g \to \g$ 
and equipped with a non-degenerate invariant bilinear pairing 
$\langle \cdot, \cdot \rangle : \g \otimes \g \to \bbC$. Given a choice of 
meromorphic $1$-form $\omega$ on $C$, the action of $4$-dimensional 
semi-holomorphic Chern-Simons theory on $X$ takes the form
\begin{equation} \label{4d CS action}
S_{\rm 4dCS}[A] \;=\; \frac{\ii}{2\pi}\int_X \omega \wedge \big\langle A \, \overset{\wedge}, \, \tfrac 12 (\d_\Sigma + \bar\partial) A + \tfrac{1}{3!} [A\, \overset{\wedge}, \, A] \big\rangle
\end{equation}
for a $\g$-valued $1$-form field $A$ on $X$. By analogy with the Chern-Simons 
theory setup of Example \ref{ex: 3d CS}, the mixed topological-holomorphic 
nature of the action \eqref{4d CS action} along the two factors $\Sigma$ and $C$ 
suggests working with the cochain complex of $\g$-valued mixed de Rham-Dolbeault forms
\begin{equation} \label{Tot dR-Dol}
\big( \mathcal E(X), \d \big) \;\coloneqq\; \Big( \textup{Tot}^{\oplus} \big( \g \otimes \Omega^{\bullet, (0, \bullet)}(X) \big), \d_\Sigma + \bar\partial \Big)
\end{equation}
defined as the totalization of the bicomplex, see \cite[(3.2)]{BSV2},
\begin{equation} \label{dR-Dol bicpx}
\big( \g \otimes \Omega^{\bullet, (0, \bullet)}(X), \d_\Sigma, \bar\partial \big) \;\coloneqq\; \g \otimes \big( \Omega^{\bullet}(\Sigma), \d_\Sigma \big) \; \widehat{\otimes}\; \big( \Omega^{(0,\bullet)}(C), \bar\partial \big) ,
\end{equation}
where $\widehat{\otimes}$ is the completed projective tensor product of 
locally convex topological vector spaces. The cochain complex \eqref{Tot dR-Dol} 
can be written out explicitly as follows
\begin{equation}
\begin{tikzcd}[
  column sep=15mm,
  row sep=6mm
]
\g \otimes \Omega^{0,(0,0)}(X) \arrow[r, "\d_\Sigma + \bar\partial"] & |[alias=A]| \g \otimes \Omega^{1,(0,0)}(X) \oplus \g \otimes \Omega^{0,(0,1)}(X) & \\
& |[alias=B]| \g \otimes \Omega^{1,(0,1)}(X) \oplus \g \otimes \Omega^{2,(0,0)}(X) \arrow[r, "\d_\Sigma + \bar\partial"] & \g \otimes \Omega^{2,(0,1)}(X) .
\arrow["\d_\Sigma + \bar\partial"' near end, from=A, to=B, out=0, in=180, looseness=1.8, overlay]
\end{tikzcd}
\end{equation}
This is canonically a differential graded Lie algebra, i.e.\ an $L_\infty$-algebra 
with $\ell_n = 0$ for $n \geq 3$, where the $\ell_2$-bracket is given by 
$\ell_2 : \mathcal E(X)^{\otimes 2} \to \mathcal E(X)$, $\alpha \otimes \beta 
\mapsto [\alpha \,\overset{\wedge},\,  \beta]$. However, the $L_\infty$-algebra 
$\big( \mathcal E(X), (\d, \ell_2) \big)$ does not yet describe $4$-dimensional 
semi-holomorphic Chern-Simons theory as it does not come equipped with a suitable cyclic structure.
\sk

Indeed, crucially, the field $A$ in \eqref{4d CS action} is allowed to develop 
singularities at the zeros of $\omega$ and in order for the action functional 
to be well-defined and gauge invariant, the field $A$ should also satisfy certain 
boundary conditions at the poles of $\omega$, see for instance \cite{CY3, DLMV, BSV}. 
This motivates twisting the ordinary Dolbeault complexes on $C$ entering 
\eqref{dR-Dol bicpx} by suitable holomorphic line bundles in order to accommodate 
these singularities and boundary conditions.

\subsubsection{Twisted Dolbeault complexes}
Denote by $\mathcal O$, $\mathcal O^\ast$, $\mathcal M$ and $\mathcal M^{(1)}$ 
the sheaves of holomorphic functions, non-vanishing holomorphic functions, meromorphic 
functions and meromorphic $1$-forms on $C = \CP$, respectively.
\sk

A divisor $D$ on $C$ is a function $D : C \to \bbZ$ whose support 
$\supp(D) \coloneqq \{ p \in C \,|\, D(p) \neq 0 \}$ is a finite subset 
of $C$. The degree of $D$ is given by the finite sum $\deg(D) \coloneqq \sum_{p \in C} D(p)$. 
Given two divisors $D, D' : C \to \bbZ$ we write $D \leq D'$ if $D(p) \leq D'(p)$ for all $p \in C$.
\sk

To every $f \in \mathcal M(C) \setminus \{0\}$ is associated its principal divisor 
$(f) : C \to \bbZ$, encoding the locations and orders of its poles and zeros. In particular, we have 
a canonical decomposition $(f) = (f)_0 + (f)_\infty$ into the divisors 
$(f)_0, (f)_\infty : C \to \bbZ$ of zeros and poles of $f$, with $(f)_0 \geq 0$ and 
$(f)_\infty \leq 0$. We always have $\deg (f) = 0$. Likewise, every 
$\omega \in \mathcal M^{(1)}(C) \setminus \{ 0 \}$ defines a divisor 
$(\omega) : C \to \bbZ$ with its canonical decomposition
\begin{equation}
(\omega) = (\omega)_0 + (\omega)_\infty
\end{equation}
into the divisors of zeros and poles $(\omega)_0, (\omega)_\infty : C \to \bbZ$ 
with $(\omega)_0 \geq 0$ and $(\omega)_\infty \leq 0$. Since we are working on 
$C = \CP$, we have $\deg (\omega) = -2$.
\sk

We fix a choice of meromorphic $1$-form $\omega \in \mathcal M^{(1)}(C)\setminus \{0\}$, 
which will correspond to the one entering the action \eqref{4d CS action}. 
In what follows, we will only be interested in divisors $D:C\to\mathbb Z$ 
which are built out of $(\omega)$ in the sense that $D = D_0 + D_\infty$ 
with $0 \leq D_0 \leq (\omega)_0$ and $(\omega)_\infty \leq D_\infty \leq 0$.
\sk

To any divisor $D : C \to \bbZ$ one associates a holomorphic line bundle 
$L_D \to C$ and to any divisor inequality $D \leq D'$ one associates a holomorphic line 
bundle morphism $L_D \to L_{D'}$, as recalled in \cite[Construction 2.3]{BSV2}. 
In particular, the holomorphic line bundle $L_D \to C$ is constructed explicitly as follows:
Let $p_0, p_\infty \in C \setminus \supp (\omega)$ with $p_0 \neq p_\infty$, and 
introduce the open subsets $U_0 \coloneqq C \setminus \{p_\infty \}$ and 
$U_\infty \coloneqq C \setminus \{ p_0 \}$. Let $z:U_0\to\bbC$ 
and $w : U_\infty\to \bbC$ be holomorphic coordinates with $z(p_0)=0$
and $w(p_\infty)=0$ such that $w = 1/z$ on the overlap $U_{0\infty} \coloneqq U_0\cap U_\infty$.
We choose a pair of meromorphic functions 
$\psi^D_0 \in \mathcal M(U_0)$ and $\psi^D_\infty \in \mathcal M(U_\infty)$ 
such that $(\psi^D_0) = (\psi^D_\infty) = D$ by
\begin{equation} \label{psi 0 and infty}
\psi^D_0(p) \coloneqq \prod_{p' \in \supp(D)} \big( z(p) - z(p') \big)^{D(p')}, \qquad
\psi^D_\infty(q) \coloneqq \prod_{p' \in \supp(D)} \big( 1 - w(p')^{-1} w(q) \big)^{D(p')},
\end{equation}
for $p \in U_0$ and $q \in U_\infty$. On the overlap $U_{0\infty} = U_0\cap U_\infty$, 
their ratio is holomorphic and nowhere vanishing, and defines the transition function 
$g_{0\infty}^D \in \mathcal O^\ast(U_{0\infty})$ of the holomorphic line bundle $L_D \to C$, i.e.\
\begin{equation} \label{transition function}
g_{0\infty}^D(p) \coloneqq \frac{\psi^D_0(p)}{\psi^D_\infty(p)} = z(p)^{\deg(D)},
\end{equation}
for any $p \in U_{0\infty}$, where in the last step we have used the relation $w(p)^{-1} = z(p)$. 
\sk

A smooth $L_D$-valued $(0,r)$-form $s \in \Omega^{0,r}(C, L_D)$, for 
$r \in \{0,1\}$, consists of a pair $s = (s_0, s_\infty)$ of smooth $(0,r)$-forms 
$s_0 \in \Omega^{0,r}(U_0)$ and $s_\infty \in \Omega^{0,r}(U_\infty)$ satisfying the gluing condition
\begin{equation}
s_0(p) = g_{0\infty}^D(p) s_\infty(p),
\end{equation}
for $p \in U_{0\infty}$. The twisted Dolbeault complex associated with $D$ is then
\begin{equation} \label{twisted Dolbeault complex}
\big( \Omega^{0,\bullet}(C,L_D),\bar\partial \big),
\end{equation}
where $\bar\partial$ acts component-wise in the above local description. 
This action is well-defined since the transition function \eqref{transition function} is holomorphic.
\sk

One can equivalently model the complex \eqref{twisted Dolbeault complex} as follows:
Let $\Gamma_D^{0,r}(C)$, for $r \in \{0,1\}$, denote the space of smooth $(0,r)$-forms 
$\sigma \in \Omega^{0,r}\big( C\setminus\supp(D) \big)$ with prescribed local behavior 
at every $q\in\supp(D)$ given in any local coordinate $t$ centred at $q$ by
\begin{equation}
\sigma = t^{-D(q)} \widetilde\sigma,
\end{equation}
for some smooth $(0,r)$-form $\widetilde\sigma$ defined in a neighborhood of $q$. 
In other words, positive values of $D(q)$ correspond to poles of order at most $D(q)$, 
while negative values correspond to zeros of order at least $-D(q)$. We then have an isomorphism
\begin{equation} \label{Gamma O iso}
\Gamma_D^{0,r}(C) \;\overset{\cong}\longrightarrow\; \Omega^{0,r}(C,L_D), \quad
\sigma \;\longmapsto \; \big( \psi_0^D\,\sigma|_{U_0}, \psi_\infty^D\,\sigma|_{U_\infty} \big),
\end{equation}
under which the Dolbeault differential on $\Omega^{0,\bullet}(C,L_D)$ is 
represented by the ordinary Dolbeault differential on $\Omega^{0,\bullet}\big( C\setminus\supp(D) \big)$.

\subsubsection{Allowing singular structures: the $L_\infty$-algebra $\mathcal L(X)$} \label{sec:L(X) def}
We allow for prescribed singularities at the zeros of $\omega$ following the approach of 
\cite[Section 3.2]{BSV2}. To assign different singular behaviors to the light-cone 
components of the $4$-dimensional semi-holomorphic Chern-Simons field $A$, we make 
use of the Hodge star operator $\ast_\Sigma$ to decompose
\begin{equation}
\Omega^1(\Sigma)=\Omega^+(\Sigma)\oplus \Omega^-(\Sigma),
\qquad
\ast_\Sigma \alpha^\pm=\pm \alpha^\pm
\end{equation}
into $\pm$-eigenspaces associated with a choice of conformal class 
of Lorentzian metrics on $\Sigma$ and choose a decomposition of the zero divisor
\begin{equation} \label{zero divisor splitting}
(\omega)_0=(\omega)^+_0+(\omega)^-_0 ,
\qquad
(\omega)^\pm_0\geq 0 .
\end{equation}
The de Rham-Dolbeault bicomplex $\Omega^{\bullet, (0, \bullet)}(X)$ 
underlying $\mathcal E(X)$ in \eqref{Tot dR-Dol} is then replaced by the divisor-twisted bicomplex
\begin{equation} \label{bicomplex sing}
\Omega^{\bullet,(0,\bullet)}_{\rm sgl}(X)
\;\coloneqq\;
\left(
\begin{tikzcd}[column sep=10mm]
\Omega^{0,(0,\bullet)}(X)
\arrow[r, "\d_\Sigma"] &
{\displaystyle \bigoplus_{i=\pm}} \; \Omega^{i,(0,\bullet)}\big( X,L_{(\omega)^i_0} \big) \arrow[r, "\d_\Sigma"] &
\Omega^{2,(0,\bullet)}\big( X, L_{(\omega)_0} \big)
\end{tikzcd}
\right).
\end{equation}
The vertical Dolbeault differential $\bar\partial$ along $C$ is 
suppressed here and the horizontal differentials $\d_\Sigma = \d_\Sigma^+ + \d_\Sigma^-$ 
are given by the tensor product of the light-cone components of the de Rham 
differentials along $\Sigma$ with corresponding holomorphic line bundle morphisms induced by 
the divisor inequalities $0\leq (\omega)^\pm_0\leq (\omega)_0$, see 
\cite[Construction 2.3]{BSV2}. Explicit formulae for $\d_\Sigma^\pm$ in the 
context of the application considered in the present paper are given in 
\eqref{d pm Lax} below.
\sk

Tensoring the bicomplex \eqref{bicomplex sing} with $\g$ and forming its totalization 
yields the cochain complex
\begin{equation} \label{L X def}
\big(\mathcal L(X),\ell_1\big)
\;\coloneqq\;
\left(
\textup{Tot}^\oplus\big(\g\otimes\Omega^{\bullet,(0,\bullet)}_{\rm sgl}(X)\big),
\d_\Sigma+\bar\partial
\right),
\end{equation}
which can be written out explicitly as
\begin{equation}
\begin{tikzcd}[
  column sep=5mm,
  row sep=6mm
]
\g \otimes \Omega^{0,(0,0)}(X) \arrow[r, "\d_\Sigma + \bar\partial"] & |[alias=A]| {\displaystyle \bigoplus_{i=\pm}} \; \g \otimes \Omega^{i,(0,0)}\big( X, L_{(\omega)_0^i} \big) \oplus \g \otimes \Omega^{0,(0,1)}(X) & \\
& |[alias=B]| {\displaystyle \bigoplus_{i=\pm}} \; \g \otimes \Omega^{i,(0,1)}\big( X, L_{(\omega)_0^i} \big) \oplus \g \otimes \Omega^{2,(0,0)}\big( X, L_{(\omega)_0} \big) \arrow[r, "\d_\Sigma + \bar\partial"] & \g \otimes \Omega^{2,(0,1)}\big( X, L_{(\omega)_0} \big).
\arrow["\d_\Sigma + \bar\partial"' near end, from=A, to=B, out=0, in=180, looseness=1.6, overlay]
\end{tikzcd}
\end{equation}
Importantly, the assignment of holomorphic line bundles in \eqref{bicomplex sing} 
ensures that the $\wedge$-product is well defined using the canonical isomorphism
\begin{equation}
L_{(\omega)^+_0}\otimes L_{(\omega)^-_0} \cong L_{(\omega)_0},
\end{equation}
which follows from the decomposition \eqref{zero divisor splitting}.
As a result, the totalization \eqref{L X def} comes equipped with the 
structure of a differential graded Lie algebra with Lie bracket
\begin{equation} \label{Lie bracket L(X)}
\ell_2 \;:\; \mathcal L(X)^{\otimes 2} \longrightarrow \mathcal L(X) , \qquad (\alpha,\beta) \longmapsto [\alpha \,\overset{\wedge}{,}\, \beta].
\end{equation}
We write $\ell \coloneqq (\ell_1, \ell_2)$ and denote the corresponding 
differential graded Lie algebra as $\big( \mathcal L(X), \ell \big)$.

\subsubsection{Imposing boundary conditions: the $L_\infty$-algebra $\mathcal F(X)$} \label{sec:F(X) def}
To extract a $2$-dimensional integrable field theory from $4$-dimensional semi-holomorphic 
Chern-Simons theory one must also impose boundary conditions at the poles of $\omega$. 
According to \cite[Section 3.3]{BSV2}, this amounts to refining the previous twisting 
of the Dolbeault complexes on $C$ appearing in the bicomplex \eqref{bicomplex sing} 
by non-positive divisors supported at the poles of $\omega$.
\sk

A choice of local boundary condition consists of a pair of decompositions
\begin{subequations} \label{pole divisor split}
\begin{equation} \label{saturation conditions}
(\omega)^0_\infty+(\omega)^2_\infty=(\omega)_\infty = (\omega)^+_\infty+(\omega)^-_\infty,
\end{equation}
of the pole divisor of $\omega$, satisfying the inequalities
\begin{equation} \label{pole divisor ineq}
(\omega)^0_\infty\leq (\omega)^\pm_\infty\leq (\omega)^2_\infty\leq 0
\end{equation}
and such that
\begin{equation} \label{degree conditions}
\deg\big((\omega)^+_0+(\omega)^+_\infty\big) = \deg\big((\omega)^-_0+(\omega)^-_\infty\big) = g - 1 = -1 ,
\end{equation}
\end{subequations}
where we took into account that $C=\CP$ has genus $g=0$.
The divisor-twisted de Rham-Dolbeault bicomplex $\Omega^{\bullet, (0, \bullet)}_{\rm sgl}(X)$ 
from \eqref{bicomplex sing} is then further replaced by the bicomplex with both 
singularities and boundary conditions
\begin{equation} \label{sgl bdy bicomplex}
\Omega^{\bullet,(0,\bullet)}_{\rm sgl,bdy}(X)
\coloneqq
\left(\!\!
\begin{tikzcd}[column sep=4mm]
\Omega^{0,(0,\bullet)}\big( X,L_{(\omega)^0_\infty} \big)
  \arrow[r, "\d_\Sigma"]
&
{\displaystyle \bigoplus_{i=\pm}} \; \Omega^{i,(0,\bullet)}\big( X,L_{(\omega)^i_0 + (\omega)^i_\infty} \big)
  \arrow[r, "\d_\Sigma"]
&
\Omega^{2,(0,\bullet)}\big( X,L_{(\omega)_0+(\omega)^2_\infty} \big)
\end{tikzcd}
\!\!\right).
\end{equation}
Just as in \eqref{bicomplex sing}, the vertical Dolbeault differentials 
$\bar\partial$ along $C$ are suppressed and the definition of the horizontal 
differentials $\d_\Sigma = \d_\Sigma^+ + \d_\Sigma^-$ makes use of the holomorphic line bundle 
morphisms induced by the divisor inequalities \eqref{pole divisor ineq}, see 
\cite[Construction 2.3]{BSV2}. In the context of the application considered in the present paper, 
explicit formulae for the light-cone components of the horizontal differential are displayed in \eqref{d pm}.
\sk

The totalization of the bicomplex \eqref{sgl bdy bicomplex} tensored with $\g$, namely
\begin{equation} \label{F X def}
\big(\mathcal F(X),\ell_1\big)
\;\coloneqq\;
\left(
\textup{Tot}^\oplus\big(\g\otimes\Omega^{\bullet,(0,\bullet)}_{\rm sgl,bdy}(X)\big),
\d_\Sigma+\bar\partial
\right),
\end{equation}
admits a Lie bracket
\begin{equation}
\ell_2 \;:\; \mathcal F(X)^{\otimes 2} \longrightarrow \mathcal F(X) , \qquad (\alpha,\beta) \longmapsto [\alpha \,\overset{\wedge}{,}\, \beta]
\end{equation}
similar to \eqref{Lie bracket L(X)} with the additional subtlety 
that the $\wedge$-product on the bicomplex \eqref{sgl bdy bicomplex} now 
involves non-trivial holomorphic line bundle morphisms $L_D \to L_{D'}$ for some 
divisor inequalities $D \leq D'$ determined by \eqref{pole divisor split}, 
see for instance \eqref{eqn:Lie} below in the particular example to be considered in this paper.
We write $\ell \coloneqq (\ell_1, \ell_2)$ and denote the resulting differential 
graded Lie algebra by $\big( \mathcal F(X), \ell \big)$.
Importantly, this $L_\infty$-algebra admits a cyclic structure on compactly supported sections
\begin{equation} \label{cyclic structure FX}
\langle\!\langle \cdot, \cdot \rangle\!\rangle_\omega \;:\; \mathcal F_c(X) \otimes \mathcal F_c(X) \longrightarrow \bbC , \qquad
(\alpha, \beta) \longmapsto \frac{\ii}{2 \pi} \int_X \omega \wedge \langle \alpha \,\overset{\wedge}{,}\, \beta \rangle
\end{equation}
and the corresponding Maurer-Cartan action \eqref{MC action} coincides 
exactly with the action of $4$-dimensional semi-holomorphic Chern-Simons theory in \eqref{4d CS action}.

\subsection{$2$-dimensional integrable field theories from homotopy transfer}
A fundamental feature of $4$-dimensional semi-holomorphic Chern-Simons theory 
on $X = \Sigma \times C$ is that, after ``integrating out $C=\CP$'', it gives rise to 
$2$-dimensional integrable field theories on the Lorentzian spacetime $\Sigma$ \cite{CY3}. 
From the general perspective recalled in Section \ref{sec:pertFT and Linfty}, such a 
field theory on $\Sigma$ should be (perturbatively) encoded by a cyclic $L_\infty$-algebra $(\F(\Sigma),\ell')$.
\sk

In the approach of \cite{BSV2}, this cyclic $L_\infty$-algebra is obtained from the 
cyclic $L_\infty$-algebra $(\mathcal F(X),\ell)$ of $4$-dimensional semi-holomorphic 
Chern-Simons theory with singularities and boundary conditions, reviewed in Section 
\ref{sec:F(X) def}, by homotopy transfer to its partial cohomology along $C$. This gives 
a precise homological meaning to the operation of ``integrating out $C$''. Moreover, the 
construction of the Lax connection of the resulting $2$-dimensional integrable field 
theory uses an analogous homotopy transfer, now applied to the $L_\infty$-algebra 
$(\mathcal L(X),\ell)$ of $4$-dimensional semi-holomorphic Chern-Simons theory with 
singularities only, reviewed in  Section \ref{sec:L(X) def}. Before recalling these 
constructions in detail, we begin with a brief review of the relevant aspects of 
the homotopy transfer theorem, specializing to the case needed in the present setting.

\subsubsection{\label{sec:HTT}Homotopy transfer theorem}
A \emph{strong deformation retract} $(i,p,h)$ from a cochain complex $(L,\d)$ 
to another cochain complex $(L', \d')$ is a diagram
\begin{subequations} \label{SDR def}
\begin{equation} \label{SDR diagram}
\begin{tikzcd}
(L', \d') \arrow[r, "i", shift left=1mm] & (L, \d) \arrow[l, "p", shift left=1mm] \arrow["h"', out=-15,in=15,distance=6mm]
\end{tikzcd}
\end{equation}
consisting of a pair of cochain maps $i : L' \to L$ and $p : L \to L'$, i.e.\ 
degree $0$ linear maps such that $\d \, i = i \, \d'$ and $\d' \, p = p \, \d$, 
and a degree $-1$ linear map $h : L \to L$, satisfying the identities
\begin{equation} \label{SDR relations}
p \, i \;=\; \id_{L'}, \qquad
i\,p \;=\; \id_L + \d \, h + h \, \d,
\end{equation}
together with the side conditions
\begin{equation} \label{SDR side conditions}
h \, i = 0, \qquad
p \, h = 0, \qquad
h^2 = 0.
\end{equation}
\end{subequations}
It follows from \eqref{SDR relations} that $i$ and $p$ are both 
quasi-isomorphisms of cochain complexes which are quasi-inverses of each other.
\sk

An important feature of $L_\infty$-algebra structures is that they can be 
transferred along strong deformation retracts. The formal statement is the content of 
the \emph{homotopy transfer theorem} for $L_\infty$-algebras. For a proof in the general 
context of operad algebras we refer the reader to \cite[Chapter 10.3]{LodayVallette}.
\begin{theo} \label{thm: Homotopy transfer}
Let $(L,\d)$ be the underlying cochain complex of an $L_\infty$-algebra $(L, \ell)$, 
i.e.\ with $\ell_n$-brackets $\ell = (\ell_n)_{n \geq 1}$ such that $\ell_1 = \d$. 
Given a strong deformation retract to another cochain complex $(L', \d')$ as in \eqref{SDR def}, 
there exists a transferred $L_\infty$-algebra structure $\ell' = (\ell'_n)_{n \geq 1}$ on 
$L'$ with $\ell'_1 = \d'$. Moreover, the $L_\infty$-algebras $(L, \ell)$ and $(L', \ell')$ 
are weakly equivalent in the sense that the quasi-isomorphism $i$ extends to an $\infty$-quasi-isomorphism
\begin{equation} \label{i infty HTT}
\begin{tikzcd}
i_\infty \; : \; (L', \ell') \arrow[r, smooth squiggle, "\sim"] & (L, \ell),
\end{tikzcd}
\end{equation}
which consists of a collection of degree $1-n$ linear maps $i_n : L'^{\otimes n} \to L$ 
for $n \in \bbZ_{\geq 1}$, with $i_1 = i$, satisfying the $L_\infty$-morphism identities 
relating the transferred $L_\infty$-algebra structure $\ell'$ on $L'$ to the original 
$L_\infty$-algebra structure $\ell$ on $L$.
\end{theo}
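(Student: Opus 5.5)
We would prove the homotopy transfer theorem by the standard tensor trick, working dually on the bar/coalgebra side. Let $\ell$ be encoded as a degree $1$ coderivation $D = D_1 + D_{\geq 2}$ on the cofree cocommutative coalgebra $\mathrm{S}^c(L[1])$, with $D^2=0$ and $D_1$ induced by $\d$. The strong deformation retract $(i,p,h)$ of \eqref{SDR def} extends to a strong deformation retract
\begin{equation*}
(\mathrm{S}^c(L'[1]), D'_1) \rightleftarrows (\mathrm{S}^c(L[1]), D_1),
\end{equation*}
with maps $\mathrm{S}(i)$, $\mathrm{S}(p)$ and a symmetrized homotopy $\mathrm{S}(h)$. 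The symmetrized homotopy is obtained by averaging $\sum_k \id^{\otimes k}\otimes h \otimes (ip)^{\otimes (n-k-1)}$ over the symmetric group. The side conditions \eqref{SDR side conditions} are exactly what makes this extension again satisfy \eqref{SDR relations} and \eqref{SDR side conditions}.

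Next we treat $D_{\geq 2}$ as a perturbation of $D_1$ and apply the homological perturbation lemma. The perturbation lowers word length on the cofree coalgebra, so the operator $\delta\,\mathrm{S}(h)$, with $\delta = D_{\geq 2}$, is locally nilpotent on each word-length filtration piece. Hence $(\id - \delta\,\mathrm{S}(h))^{-1} = \sum_k (\delta\,\mathrm{S}(h))^k$ is a well-defined finite sum on every element. The lemma produces a new differential
\begin{equation*}
D' = D'_1 + \mathrm{S}(p)\,\delta \textstyle\sum_k (\mathrm{S}(h)\delta)^k\, \mathrm{S}(i)
\end{equation*}
on $\mathrm{S}^c(L'[1])$, a perturbed inclusion $I_\infty = \sum_k (\mathrm{S}(h)\delta)^k\,\mathrm{S}(i)$, and a perturbed projection and homotopy, all forming again a strong deformation retract.

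The main obstacle is checking that $D'$ is a coderivation and that $I_\infty$ is a coalgebra morphism. Without this, $D'$ would not correspond to an $L_\infty$-structure and $I_\infty$ would not be an $\infty$-morphism. We would handle it by verifying that $\mathrm{S}(h)$ is a homotopy of coalgebra type with respect to $\mathrm{S}(i)\mathrm{S}(p)$ and $\id$, and then using the standard lemma (as in Loday--Vallette) that perturbation preserves these structures under the side conditions. Alternatively, one can bypass this step by defining $\ell'_n$ and $i_n$ directly through sums over rooted trees. In that description the leaves are decorated by $i$, the internal edges by $h$, the vertices by $\ell_k$ and the root by $p$ (respectively $h$ for $i_n$, $n\geq 2$). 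One then checks the $L_\infty$-relations \eqref{L-infty algebra relations} by a combinatorial edge-contraction argument that uses $ip = \id + \d h + h\d$.

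Projecting $D'$ and $I_\infty$ to cogenerators gives $\ell'_n$ of degree $2-n$ and $i_n$ of degree $1-n$, with $\ell'_1 = \d'$ and $i_1 = i$. Since $i_1 = i$ is a quasi-isomorphism by \eqref{SDR relations}, $i_\infty$ is an $\infty$-quasi-isomorphism. This establishes Theorem \ref{thm: Homotopy transfer}.
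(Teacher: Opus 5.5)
\textbf{The key step of your main route fails as stated.}

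The paper gives no proof of its own here; it defers to Loday--Vallette, Chapter 10.3. There $\ell'_n$ and $i_n$ are defined by sums over rooted trees (leaves $i$, internal edges $h$, vertices $\ell_k$, root $p$ resp.\ $h$), packaged as a morphism of bar constructions, and all identities are checked directly from $ip-\id=\d h+h\d$. The recursions \eqref{in def} and \eqref{elln def} are the dg Lie case of these trees. Your fallback is that argument. Your primary route, the perturbation lemma on $\mathrm{S}^c(L[1])$, is different, and it breaks exactly at the step you flag.

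No homotopy $H$ on a cocommutative coalgebra that restricts to $h$ in word length one can satisfy $\Delta H=(\id\otimes H+H\otimes \mathrm{S}(ip))\Delta$. Composing with the flip and using cocommutativity forces $\big((\id-\mathrm{S}(ip))\otimes H\big)\Delta=\big(H\otimes(\id-\mathrm{S}(ip))\big)\Delta$. On $y_1y_2$ with $y_1,y_2\in\ker p$ this reads $y_1\otimes hy_2\pm y_2\otimes hy_1=hy_1\otimes y_2\pm hy_2\otimes y_1$, which is false in general. Concretely, under $\mathrm{S}^c(L[1])\cong\mathrm{S}^c(\mathrm{im}\,i[1])\otimes\mathrm{S}^c(\ker p[1])$, your averaged homotopy equals $\tfrac1b\widetilde h$ on the summand with $b\geq 1$ factors from $\ker p$. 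Here $\widetilde h$ is the coderivation extending $h$. This operator even fails the symmetrized identity $\Delta H=\big(H\otimes\tfrac12(\id+\mathrm{S}(ip))+\tfrac12(\id+\mathrm{S}(ip))\otimes H\big)\Delta$ once $b=3$: one side gives coefficient $\tfrac13$, the other $\tfrac12$. So the coalgebra perturbation lemma you want to invoke has unsatisfiable hypotheses here, and it is not how Loday--Vallette argue.

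\textbf{How to repair it.} The conclusion can still be reached along your route, but it needs an argument that uses the factor $\tfrac1b$ essentially.
\begin{itemize}
\item Let $G$ be the coalgebra morphism whose corestriction is $g_1=i$ and, recursively, $g_n=h\,\mathrm{pr}_1\delta G$ on word length $n\geq 2$.
\item Since $\delta G$ is a $G$-coderivation, its word-length-$m$ part is a sum over partitions of the input into $m$ blocks. One block $B$ carries $\mathrm{pr}_1\delta G(u_B)$ and the others carry $g_{|B'|}(u_{B'})$.
\item Apply $H$. The factors $g_k$ with $k\geq 2$ lie in $\mathrm{im}\,h\subseteq\ker p$ and are killed by $h$, using $ph=h^2=0$. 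Hence $H$ annihilates the term with the $\mathrm{im}\,i$-component of $\mathrm{pr}_1\delta G(u_B)$. It turns the $\ker p$-component into $h\,\mathrm{pr}_1\delta G(u_B)=g_{|B|}(u_B)$, using $hi=0$.
\item The factor $\tfrac1b$ exactly cancels the $b$ choices of which non-singleton block carried $\delta$. This gives $G=\mathrm{S}(i)+H\delta G$.
\item This fixed-point equation has a unique solution by word-length induction, so $I_\infty=G$.
\item Finally, $D'=\mathrm{S}(p)(D_1+\delta)I_\infty$ is a coderivation, because $\mathrm{S}(p)$ is a coalgebra map with $\mathrm{S}(p)I_\infty=\id$.
\end{itemize}
If you prefer the tree route, note that it does not bypass the difficulty; it is the whole proof. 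You must fix the symmetry factors and Koszul signs, as in \eqref{chi sign def} and \eqref{zeta sign def}. You must then carry out the telescoping with $ip-\id=\d h+h\d$ both for $\ell'$ and for $i_\infty$.
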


For the main computation to be carried out in Section \ref{sec:application} we shall 
need the explicit formulae for the transferred $\ell'_n$ brackets and the components 
$i_n$ of the $\infty$-quasi-isomorphism \eqref{i infty HTT}. In fact, in the context of 
\cite{BSV2}, the homotopy transfer theorem is applied only in the special case when the 
$L_\infty$-algebra $(L,\ell)$ is a differential graded Lie algebra, specifically 
$(\mathcal L(X), \ell)$ from Section \ref{sec:L(X) def} or $(\mathcal F(X), \ell)$ 
from Section \ref{sec:F(X) def}. We will therefore only recall the relevant formulae 
in this simpler setting. For the general formulae, see for instance \cite{JalaliFarahani:2023sfq}.
\sk

Suppose that $(L, \ell)$ is a differential graded Lie algebra with $\ell = (\d, \ell_2)$. 
Then the components $i_n : L'^{\otimes n} \to L$ for $n \in \bbZ_{\geq 1}$ of the 
$\infty$-quasi-isomorphism \eqref{i infty HTT} from the homotopy transfer theorem are 
given by $i_1(a) \coloneqq i(a)$ for $a \in L'$ and then recursively for all $n \geq 2$ by
\begin{multline} \label{in def}
i_n(a_1, \ldots, a_n) \coloneqq \frac{1}{2} \sum_{m=1}^{n-1} \sum_{\sigma \in \text{Sh}(m, n-m)} \chi(\sigma; a_1, \ldots, a_n)\, \zeta(\sigma; a_1, \ldots, a_n)\\
h \Big( \ell_2\big( i_m(a_{\sigma(1)}, \ldots, a_{\sigma(m)}), i_{n-m}(a_{\sigma(m + 1)}, \ldots, a_{\sigma(n)}) \big) \Big),
\end{multline}
for all homogeneous $a_1, \ldots, a_n \in L'$, where $\chi(\sigma; a_1, \ldots, a_n)$ is the Koszul sign defined by
\begin{equation} \label{chi sign def}
a_1 \wedge \ldots \wedge a_n = \chi(\sigma; a_1, \ldots, a_n)\, a_{\sigma(1)} \wedge \ldots \wedge a_{\sigma(n)}
\end{equation}
and $\zeta(\sigma; a_1, \ldots, a_n)$ is the sign given in the present setting by
\begin{equation} \label{zeta sign def}
\zeta(\sigma; a_1, \ldots, a_n) \coloneqq (-1)^{m (n-m) + m + (1-n+m) \sum_{k=1}^m |a_{\sigma(k)}|}.
\end{equation}
The transferred $\ell'_n$-brackets $\ell'_n : L'^{\otimes n} \to L'$ are then defined for all $n \geq 2$ by
\begin{multline} \label{elln def}
\ell'_n(a_1, \ldots, a_n) \coloneqq \frac{1}{2} \sum_{m=1}^{n-1} \sum_{\sigma \in \text{Sh}(m, n-m)} \chi(\sigma; a_1, \ldots, a_n)\, \zeta(\sigma; a_1, \ldots, a_n)\\
p \Big( \ell_2\big( i_m(a_{\sigma(1)}, \ldots, a_{\sigma(m)}), i_{n-m}(a_{\sigma(m + 1)}, \ldots, a_{\sigma(n)}) \big) \Big),
\end{multline}
for all homogeneous $a_1, \ldots, a_n \in L'$.

\subsubsection{The $L_\infty$-algebra $\F(\Sigma)$} \label{sec: FSigma def}
It was proved in \cite[Propositions 3.13 and 3.16]{BSV2} that the cyclic $L_\infty$-algebra 
$\big( \mathcal F(X), \ell \big)$ of $4$-dimensional semi-holomorphic Chern-Simons theory 
with singularities and boundary conditions, reviewed in Section \ref{sec:F(X) def}, admits 
a weakly equivalent description by the cyclic $L_\infty$-algebra $\big( \mathcal F(\Sigma), \ell' \big)$, where 
the spectral parameter $C = \CP$ has been ``integrated out'', with $\infty$-quasi-isomorphism 
\begin{equation} \label{infty-quasi-iso i}
\begin{tikzcd}
\widetilde{i}_\infty \; : \; \big( \mathcal F(\Sigma), \ell' \big) \arrow[r, smooth squiggle, "\sim"] & \big( \mathcal F(X), \ell \big).
\end{tikzcd}
\end{equation}
To set the stage for the main computations in the paper, it is useful to briefly 
recall from \cite{BSV2} the strategy towards the construction of the strong deformation 
retract $(\widetilde{i},\widetilde{p},\widetilde{h})$ that ``integrates out $C$'', thus 
leading to the cyclic $L_\infty$-algebra $\big( \mathcal F(\Sigma), \ell' \big)$ and 
the $\infty$-quasi-isomorphism \eqref{infty-quasi-iso i} via homotopy transfer, as recalled 
in Section \ref{sec:HTT}. Full details for the specific class of models considered in this 
paper will be given in Section \ref{SDR for FX} below.
\sk

The construction of $(\widetilde{i},\widetilde{p},\widetilde{h})$ makes essential use 
of strong deformation retracts for divisor-twisted Dolbeault complexes to their cohomologies,
\begin{equation} \label{SDRs all D prelim}
\begin{tikzcd}
\mathsf H^\bullet \Omega^{0,\bullet}(C, L_D) \arrow[r, "i_D", shift left=1mm] & \Omega^{0,\bullet}(C, L_D) \arrow[l, "p_D", shift left=1mm] \arrow["h_D"', out=-15,in=15,distance=8mm]
\end{tikzcd}
\end{equation}
for any divisor $D : C \to \bbZ$, which exist by Hodge theory, see 
\cite[Appendix A.2]{BSV2}. Given a choice of such strong deformation retracts, 
the homological perturbation lemma was used to build the strong deformation retract
$(\widetilde{i},\widetilde{p},\widetilde{h})$ relating the cochain complex underlying 
$\big( \mathcal F(X), \ell\big)$ and its partial cohomology along $C$, denoted 
$\mathcal{F}(\Sigma)$. The homotopy transfer theorem could then be applied in 
\cite[Proposition 3.13]{BSV2} to implicitly define the $L_\infty$-algebra 
$\big( \mathcal F(\Sigma), \ell' \big)$ together with the corresponding 
$\infty$-quasi-isomorphism \eqref{infty-quasi-iso i}. It was further shown in 
\cite[Proposition 3.16]{BSV2} that the family of strong deformation retracts 
\eqref{SDRs all D prelim} can be chosen to be compatible with the cyclic 
structure on $\big( \mathcal F(X), \ell \big)$, ensuring that the latter 
can be transferred to $\big( \mathcal F(\Sigma), \ell' \big)$. 
\sk

In Section \ref{SDR for FX} we will make the above strategy concrete 
starting from explicit strong deformation retracts as in \eqref{SDRs all D prelim} 
that will be constructed in Section \ref{sec: maps i p h deg pos}.

\subsubsection{\label{sec:L(Sigma)}Lax connections and the $L_\infty$-algebra $\L(\Sigma)$}
It was proved in \cite[Proposition 3.6]{BSV2} that the $L_\infty$-algebra 
$\big( \mathcal L(X), \ell \big)$ from Section \ref{sec:L(X) def} also admits 
a weakly equivalent description by the $L_\infty$-algebra $\big( \mathcal L(\Sigma), \ell' \big)$, 
obtained along the same lines as in Section \ref{sec: FSigma def} by homotopy 
transfer to the partial cohomology along $C$. In contrast to the field-theoretic 
$L_\infty$-algebra $\big( \mathcal F(\Sigma), \ell' \big)$, the transferred 
$L_\infty$-algebra $\big( \mathcal L(\Sigma), \ell' \big)$ has vanishing higher 
brackets $\ell^\prime_n = 0$ for $n \geq 3$. In particular, both $\big( \mathcal L(X), \ell \big)$ 
and $\big( \mathcal L(\Sigma), \ell' \big)$ can be regarded as a differential graded 
Lie algebra. Furthermore, it turns out that the strong deformation retract 
$(i, \widetilde{p},h)$ relating the cochain complexes underlying 
$\big( \mathcal L(X), \ell \big)$ and $\big( \mathcal L(\Sigma), \ell' \big)$ 
is particularly well-behaved in that $i$ preserves the Lie brackets and 
hence provides a quasi-isomorphism of differential graded Lie algebras
\begin{equation} \label{i infty-quasi-iso prelim}
\begin{tikzcd}
i \; : \; \big( \mathcal L(\Sigma), \ell' \big) \arrow[r, "\sim"] & \big( \mathcal L(X), \ell \big).
\end{tikzcd}
\end{equation}
On the other hand, $\widetilde{p}$ fails to preserve the Lie brackets and 
thereby extends to an $\infty$-quasi-isomorphism 
\begin{equation} \label{infty-quasi-iso p}
\begin{tikzcd}
\widetilde{p}_\infty \; : \; \big( \mathcal L(X), \ell \big) \arrow[r, smooth squiggle, "\sim"] & \big( \mathcal L(\Sigma), \ell' \big),
\end{tikzcd}
\end{equation}
which provides a quasi-inverse to \eqref{i infty-quasi-iso prelim}.
\sk

As argued in \cite{BSV2}, the transferred $L_\infty$-algebra 
$\big( \mathcal L(\Sigma), \ell' \big)$ should be thought of as the space of 
Lax connections for integrable field theories on $\Sigma$ associated with the 
meromorphic $1$-form $\omega$ and the choice of splitting of its zero divisor as 
in \eqref{zero divisor splitting}. Indeed, a Maurer-Cartan element in the 
$L_\infty$-algebra $\big( \mathcal L(\Sigma), \ell' \big)$ is given by a meromorphic 
$\g$-valued $1$-form $L = L_+ + L_-$ along $\Sigma$, with light-cone components 
$L_\pm\in \Omega^\pm(\Sigma,\g)\otimes \mathcal O_{(\omega)^\pm_0}(C)$, satisfying the flatness equation
\begin{equation}
\d_\Sigma L+\tfrac12 \; [L\,\overset{\wedge}{,}\, L]=0 .
\end{equation}
Furthermore, the Lax connection of the $2$-dimensional integrable field theory 
associated with the cyclic $L_\infty$-algebra $\big( \F(\Sigma), \ell' \big)$ 
was given a succinct homological interpretation in \cite[Section 3.4]{BSV2} in 
terms of the $L_\infty$-morphism $\lambda_\infty = (\lambda_n)_{n\geq 1}$ defined 
through the following composition
\begin{equation} \label{lambda infty def}
\begin{tikzcd}[column sep=12mm, row sep=10mm]
(\mathcal F(\Sigma), \ell') \arrow[d, smooth squiggle, "\sim", "\widetilde{i}_\infty"'] \arrow[r, smooth squiggle, "\lambda_\infty"] & (\mathcal L(\Sigma), \ell')\\
(\mathcal F(X), \ell) \arrow[r, "\psi"'] & (\mathcal L(X), \ell) \arrow[u, smooth squiggle, "\sim"', "\widetilde{p}_\infty"]
\end{tikzcd}
\end{equation}
where the left and right vertical $\infty$-quasi-isomorphisms are given in 
\eqref{infty-quasi-iso i} and \eqref{infty-quasi-iso p}, respectively. The bottom 
strict $L_\infty$-algebra morphism is induced from the canonical bundle morphisms 
$L_D \to L_{D'}$ for $D \leq D'$ which drops the negative divisors in $\F(X)$ 
that are not present in $\L(X)$, see Sections \ref{sec:L(X) def} and \ref{sec:F(X) def}.


\section{\label{sec:explicitSDR}Explicit strong deformation retracts}
In this section we give explicit models for the strong deformation retracts from 
the divisor-twisted Dolbeault complexes on $C=\CP$ to their cohomologies, as in 
\eqref{SDRs all D prelim}, for all divisors $D : C \to \bbZ$. The cases $\deg(D) \geq 0$, 
$\deg(D) = -1$ and $\deg(D) \leq -2$ are treated separately. The verification of the 
strong deformation retract identities and side conditions, which relies on standard 
results from complex analysis, is deferred to Appendix \ref{sec:proof of SDR}.

\subsection{Case \texorpdfstring{$\deg(D) \geq 0$}{deg D geq 0}} \label{sec: maps i p h deg pos}
The cohomology of $\Omega^{0, \bullet}(C, L_D)$ for $\deg(D) \geq 0$ is concentrated 
in degree $0$ and is isomorphic to $\bbC^{1+\deg(D)}$.
Define the cochain map
\begin{subequations}\label{i def 0}
\begin{equation}
i_D : \bbC^{1+\deg(D)} \longrightarrow \Omega^{0, \bullet}(C, L_D)
\end{equation}
sending $\alpha = (\alpha_j)_{j=0}^{\deg(D)} \in \bbC^{1+\deg(D)}$ to 
$i_D(\alpha) = (i_D(\alpha)_0, i_D(\alpha)_\infty) \in \Omega^{0, 0}(C, L_D)$, where
\begin{align}
i_D(\alpha)_0(z) &:= \sum_{j=
0}^{\deg(D)} \alpha_j z^j, \\
i_D(\alpha)_\infty(w) &
:= \sum_{j=0}^{\deg(D)} \alpha_j w^{\deg(D)-j}
\end{align}
\end{subequations}
are defined with respect to the chosen local charts.
Define the cochain map
\begin{subequations} \label{p def 0}
\begin{equation}
p_D : \Omega^{0, \bullet}(C, L_D) \longrightarrow \bbC^{1+\deg(D)},
\end{equation}
to act trivially on $\Omega^{0, 1}(C, L_D)$ and by sending $s = 
(s_0, s_\infty) \in \Omega^{0, 0}(C, L_D)$ to the coefficients 
$\big( p_D(s)_j \big)_{j=0}^{\deg(D)}$ of the polynomial
\begin{equation} \label{p def 0 b}
\sum_{j=0}^{\deg(D)} p_D(s)_j z^j \coloneqq - \frac{1 + \deg(D)}{2 \pi \ii} \int_\bbC \frac{s_0(z') (1+ z \bar z')^{\deg(D)}}{(1+|z'|^2)^{2 + \deg(D)}} \, \d z' \wedge \d \bar z'.
\end{equation}
\end{subequations}
Define the degree $-1$ linear map
\begin{subequations} \label{h def 0}
\begin{equation}
h_D : \Omega^{0,\bullet}\big( C, L_D \big) \longrightarrow \Omega^{0,\bullet-1}\big( C, L_D \big),
\end{equation}
whose only non-trivial component sends $\zeta = (\zeta_0, \zeta_\infty) 
\in \Omega^{0,1}(C, L_D)$ to $h_D(\zeta) = \big( h_D(\zeta)_0, h_D(\zeta)_\infty \big) \in \Omega^{0,0}(C, L_D)$, where
\begin{align}
\label{h def 0 b} h_D(\zeta)_0(z) &\coloneqq - \frac{1}{2 \pi \ii} \int_\mathbb{C} \frac{(1+ z \bar z')^{1+\deg(D)}}{(1 + |z'|^2)^{1+\deg(D)}} \frac{\d z' \wedge \zeta_0(z')}{z' - z}, \\
\label{h def 0 c} h_D(\zeta)_\infty(w) &\coloneqq - \frac{1}{2 \pi \ii} \int_\mathbb{C} \frac{(1+ w \bar w')^{1+\deg(D)}}{(1 + |w'|^2)^{1+\deg(D)}}  \frac{\d w' \wedge \zeta_\infty(w')}{w' - w}
\end{align}
\end{subequations}
are defined with respect to the chosen local charts.
\begin{theo} \label{thm: explicit SDRs 0}
For a divisor $D : C \to \bbZ$ with $\deg(D) \geq 0$, we have 
a continuous strong deformation retract
\begin{equation}
\begin{tikzcd}
\bbC^{1+\deg(D)} \arrow[r, "i_D", shift left=1mm] & \Omega^{0, \bullet}(C, L_D) \arrow[l, "p_D", shift left=1mm] \arrow["h_D"', out=-15,in=15,distance=8mm]
\end{tikzcd}
\end{equation}
with $i_D$, $p_D$ and $h_D$ defined in \eqref{i def 0}, \eqref{p def 0} and \eqref{h def 0}, respectively.
\end{theo}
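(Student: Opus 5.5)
The plan is to verify the conditions of a strong deformation retract, \eqref{SDR relations} and \eqref{SDR side conditions}, one by one, working on the chart $U_0$ with coordinate $z$ and using the gluing $s_0 = z^{\deg(D)} s_\infty$ to handle the point $p_\infty$. Throughout, write $d \coloneqq \deg(D)$.

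\textbf{Well-definedness and cochain maps.} First I check that $i_D(\alpha)$ is a section of $L_D$. With $w=1/z$,
\[
z^{d}\sum_j \alpha_j w^{d-j}=\sum_j \alpha_j z^j ,
\]
so the two local expressions glue. They are holomorphic, hence $\bar\partial\, i_D = 0$ and $i_D$ is a cochain map into the degree-$0$ part. The map $p_D$ vanishes on $\Omega^{0,1}$ and the target complex has zero differential, so $p_D$ is automatically a cochain map.

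The substantive part of well-definedness is that $h_D(\zeta)$ is a smooth section of $L_D$. This requires two things.
\begin{itemize}
\item[(a)] The integrals converge. The gluing condition gives $\zeta_0(z')\sim |z'|^{2d}\,|z'|^{-4}\,\d\bar z'$ at infinity, since $\zeta_\infty$ is smooth at $w=0$. This growth is compensated by the weight $(1+z\bar z')^{d+1}/(1+|z'|^2)^{d+1}$ together with the factor $1/(z'-z)$.
\item[(b)] The chart-$0$ formula and the chart-$\infty$ formula satisfy $h_D(\zeta)_0 = z^{d}\, h_D(\zeta)_\infty$. I will prove this by the substitution $w'=1/z'$, using $\zeta_0 = z'^{d}\zeta_\infty$ and the elementary identity
\[
\frac{(1+z\bar z')^{d+1}}{(1+|z'|^2)^{d+1}}\,\frac{\d z'}{z'-z}
=\frac{z^{d}\,(1+w\bar w')^{d+1}}{w'^{\,d}\,(1+|w'|^2)^{d+1}}\,\frac{\d w'}{w'-w}
+(\text{a term holomorphic in } z \text{ that integrates to zero}).
\]
\end{itemize}
I expect this chart-compatibility step, together with the bookkeeping of the decay at infinity, to be the main obstacle. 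My first attempt will be a direct algebraic manipulation of the kernel. If that becomes unwieldy, the fallback is to show that both local expressions solve the same $\bar\partial$-equation and differ by a global holomorphic section that is forced to vanish.

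\textbf{Homotopy relation in degree $1$.} Here $i_Dp_D=0$, so the relation to prove is $\bar\partial h_D\zeta=-\zeta$. I split the kernel as
\[
\frac{(1+z\bar z')^{d+1}}{(1+|z'|^2)^{d+1}}\frac{1}{z'-z}
=\frac{1}{z'-z}+\frac{(1+z\bar z')^{d+1}-(1+|z'|^2)^{d+1}}{(1+|z'|^2)^{d+1}(z'-z)} .
\]
The second term is a polynomial in $z$, because the numerator vanishes at $z=z'$, so it is holomorphic in $z$ and is killed by $\bar\partial$. The first term is the Cauchy--Green operator, and the Cauchy--Pompeiu formula gives the required identity with exactly the sign fixed by the prefactor $-1/(2\pi\ii)$.

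\textbf{Homotopy relation in degree $0$.} Here the relation is $h_D\bar\partial s = i_Dp_D s - s$. I apply the generalized Cauchy--Pompeiu formula on a disc of radius $R$ to $s_0$ multiplied by the kernel, and let $R\to\infty$. The boundary term survives in the limit because $s_0$ grows like $|z|^{d}$. Rewriting it by Stokes as an area integral, I expect it to reproduce exactly the reproducing-kernel expression \eqref{p def 0 b}.

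\textbf{The identity $p_Di_D=\mathrm{id}$.} Since $\d z'\wedge \d\bar z'=-2\ii\,\d x\,\d y$, the map $p_D$ is the orthogonal projection onto polynomials of degree at most $d$ in the weighted space $L^2\big(\bbC,(1+|z|^2)^{-d-2}\d x\,\d y\big)$, and it has reproducing kernel
\[
\tfrac{d+1}{\pi}\,(1+z\bar z')^{d} .
\]
I verify the reproducing property on monomials in polar coordinates: the angular integral kills all cross terms, and the remaining radial integrals are Beta integrals that give the normalization.

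\textbf{Side conditions and continuity.}
\begin{itemize}
\item $h_D i_D=0$ and $h_D^2=0$ hold for degree reasons.
\item For $p_Dh_D=0$ I avoid computing integrals. Apply the degree-$0$ relation to $s=h_D\zeta$ and use $\bar\partial h_D\zeta=-\zeta$:
\[
i_Dp_Dh_D\zeta = h_D\zeta + h_D\bar\partial h_D\zeta = h_D\zeta - h_D\zeta = 0 .
\]
Since $i_D$ is injective, $p_Dh_D\zeta=0$.
\item Continuity with respect to the Fréchet topologies follows from standard estimates on singular integral operators with the above kernels, obtained by differentiating under the integral sign.
\end{itemize}
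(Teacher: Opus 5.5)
Most of your plan follows the paper's proof. Both use the reproducing kernel for $p_Di_D=\id$ (the paper's Lemma~\ref{lem: reproducing kernel}), Cauchy--Pompeiu-type formulas on discs $D_R(0)$ with $R\to\infty$, and degree arguments for $h_Di_D=0=h_D^2$. Your derivation of $p_Dh_D=0$ is a slightly more self-contained version of the paper's (which invokes $\mathsf H^1=0$): you feed $s=h_D\zeta$ into the degree-$0$ relation and use injectivity of $i_D$.

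\textbf{Main gap: the degree-$0$ relation.} The relation $i_Dp_D=\id+h_D\bar\partial$ is the heart of the theorem, and you leave it as an expectation with the wrong mechanism. Write $d=\deg(D)$ as you do, and set $W(z,z')=\big(\frac{1+z\bar z'}{1+|z'|^2}\big)^{d+1}$, so that $h_D$ has kernel $-\frac{1}{2\pi\ii}W(z,z')\frac{\d z'}{z'-z}$. Apply Cauchy--Pompeiu on $D_R(0)$ to $z'\mapsto W(z,z')s_0(z')$. The outer boundary term does \emph{not} survive. On $|z'|=R$ one has $|W|=O(R^{-d-1})$, which exactly compensates $s_0(z')=z'^{d}s_\infty(1/z')=O(R^{d})$, so that term is $O(R^{-1})$. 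The projection term comes instead from the failure of $W$ to be holomorphic in $z'$. Split $\bar\partial_{z'}(Ws_0)=W\bar\partial s_0+s_0\bar\partial_{z'}W$. In the limit the first piece gives $-h_D(\bar\partial s)_0(z)$. The second gives $i_Dp_D(s)_0(z)$, because
\[
\frac{\partial_{\bar z'}W(z,z')}{z'-z}=-(d+1)\,\frac{(1+z\bar z')^{d}}{(1+|z'|^2)^{d+2}},
\]
which is precisely the kernel in \eqref{p def 0 b}. This identity is the missing idea, and it is exactly what the paper's weighted Koppelman formula uses. You may instead have meant an unweighted Cauchy--Pompeiu formula with a surviving boundary term. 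In that case, converting the unweighted Cauchy transform of $\bar\partial s_0$ into $h_D(\bar\partial s)$ by Stokes runs through this same identity, which you never state.

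\textbf{Fixable gap: the degree-$1$ relation.} Since $\zeta_0(z')=z'^{d}\zeta_\infty(1/z')$ and $\d\bar w'=-\bar z'^{-2}\d\bar z'$, one gets $|\zeta_0|\sim|z'|^{d-2}$ at infinity. It is not $|z'|^{2d-4}$; with your rate, claim (a) would actually fail for $d\geq4$. So for $d\geq1$ neither piece of your kernel splitting converges absolutely: both $\int\frac{\d z'\wedge\zeta_0(z')}{z'-z}$ and the integral against the polynomial remainder behave like $\int|z'|^{d-3}$ at infinity. Only the weighted combination decays like $|z'|^{-4}$. Applying Cauchy--Pompeiu to one piece and differentiating the other under the integral is therefore unjustified as written. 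You can localize with a cutoff, since $\bar\partial$ is local and the full kernel is holomorphic in $z$ off the diagonal. Or, as the paper does, keep the weight: $W$ is holomorphic in $z$ with $W(z,z)=1$, so the weighted Koppelman formula with $R\to\infty$ gives $\zeta_0=-\bar\partial h_D(\zeta)_0$ directly.

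\textbf{Chart compatibility.} The point you flag as the main obstacle, which the paper leaves implicit, is easy. Under $w'=1/z'$ one has exactly
\[
W(z,z')\,\frac{\d z'}{z'-z}=z^{d}\,w'^{\,d}\,\frac{(1+w\bar w')^{d+1}}{(1+|w'|^2)^{d+1}}\,\frac{\d w'}{w'-w}.
\]
The prefactor is $z^{d}w'^{\,d}$, not $z^{d}/w'^{\,d}$, and there is no leftover holomorphic term. The $w'^{\,d}$ cancels against $\zeta_0(z')=w'^{-d}\zeta_\infty(w')$, giving $h_D(\zeta)_0=z^{d}h_D(\zeta)_\infty$. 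Your fallback would not have worked as stated: $L_D$ has a $(d+1)$-dimensional space of holomorphic sections, so nothing forces the difference to vanish.
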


It will be convenient to have an equivalent description of the strong deformation 
retract in Theorem \ref{thm: explicit SDRs 0} using a different model for the 
cohomology of $\Omega^{0, \bullet}(C, L_D)$. Explicitly, for a divisor 
$D : C \to \bbZ$ with $\deg(D) \geq 0$, consider the space of $D$-conditioned meromorphic functions
\begin{equation}
\mathcal O_D(C) \coloneqq \big\{ f \in \mathcal M(C) \,|\, (f) \geq - D \big\}.
\end{equation}
We then have an isomorphism
\begin{subequations} \label{ED def}
\begin{equation}
E_D \,:\, \bbC^{1+\deg(D)} \overset{\cong}\longrightarrow \mathcal O_D(C)
\end{equation}
sending $\alpha \in \bbC^{1+\deg(D)}$ to the $D$-conditioned meromorphic function $E_D(\alpha) \in \mathcal O_D(C)$ given by
\begin{equation} \label{ED def U0}
E_D(\alpha)(p) \coloneqq \frac{i_D(\alpha)_0(p)}{\psi^D_0(p)},
\end{equation}
for $p\in U_0$, and by
\begin{equation} \label{ED def Uinf}
E_D(\alpha)(q) \coloneqq \frac{i_D(\alpha)_\infty(q)}{\psi^D_\infty(q)},
\end{equation}
\end{subequations}
for $q\in U_\infty$. The two expressions in \eqref{ED def U0} and \eqref{ED def Uinf} 
agree on the overlap $U_{0\infty} = U_0 \cap U_\infty$ since 
$i_D(\alpha) = \big( i_D(\alpha)_0, i_D(\alpha)_\infty \big) \in \Omega^{0,\bullet}(C, L_D)$ 
and using the second relation in \eqref{transition function}.
\sk

Define the cochain map
\begin{equation} \label{i def 0 bis}
i'_D : \mathcal O_D(C) \longrightarrow \Omega^{0, \bullet}(C, L_D)
\end{equation}
sending $f \in \mathcal O_D(C)$ to $\big( \psi^D_0 f, \psi^D_\infty f \big) \in \Omega^{0, 0}(C, L_D)$. 
Equivalently, in view of the isomorphism \eqref{Gamma O iso}, we can equally describe this 
map as the canonical inclusion $i'_D : \mathcal O_D(C) \hookrightarrow \Gamma_D^{0,\bullet}(C)$. 
Alternatively, using the isomorphism \eqref{ED def}, we have $i'_D = i_D \circ E_D^{-1}$.
Define also the cochain map
\begin{equation} \label{p def 0 bis}
p'_D : \Omega^{0, \bullet}(C, L_D) \longrightarrow \mathcal O_D(C),
\end{equation}
given as $p'_D \coloneqq E_D \circ p_D$ or explicitly using \eqref{p def 0 b} as
\begin{equation} \label{p def 0 b bis}
p'_D(s)(z) \coloneqq - \frac{1 + \deg(D)}{2 \pi \ii \, \psi^D_0(z)} \int_\bbC \frac{s_0(z') (1+ z \bar z')^{\deg(D)}}{(1+|z'|^2)^{2 + \deg(D)}} \, \d z' \wedge \d \bar z'.
\end{equation}
The following is then an immediate consequence of Theorem \ref{thm: explicit SDRs 0}.
\begin{theobis}{thm: explicit SDRs 0} \label{thm: explicit SDRs 0 bis}
For a divisor $D : C \to \bbZ$ with $\deg(D) \geq 0$, we have a continuous strong deformation retract
\begin{equation}
\begin{tikzcd}
\mathcal O_D(C) \arrow[r, "i'_D", shift left=1mm] & \Omega^{0, \bullet}(C, L_D) \arrow[l, "p'_D", shift left=1mm] \arrow["h_D"', out=-15,in=15,distance=8mm]
\end{tikzcd}
\end{equation}
with $i'_D$, $p'_D$ and $h_D$ defined in \eqref{i def 0 bis}, \eqref{p def 0 bis} and \eqref{h def 0}, respectively.
\end{theobis}

\subsection{Case \texorpdfstring{$\deg(D) = -1$}{deg D = -1}} \label{sec: maps i p h deg -1}
The cohomology of $\Omega^{0, \bullet}(C, L_D)$ is trivial so that both of the 
maps $i_D : 0 \to \Omega^{0, \bullet}(C, L_D)$ and $p_D : \Omega^{0, \bullet}(C, L_D) \to 0$ are trivial.
Define the degree $-1$ linear map
\begin{subequations} \label{h def -1}
\begin{equation} \label{h def -1 a}
h_D : \Omega^{0,\bullet}\big( C, L_D \big) \longrightarrow \Omega^{0,\bullet-1}\big( C, L_D \big)
\end{equation}
whose only non-trivial component sends $\zeta = (\zeta_0, \zeta_\infty)
\in \Omega^{0,1}(C, L_D)$ to $h_D(\zeta) = \big( h_D(\zeta)_0, h_D(\zeta)_\infty \big)  \in \Omega^{0,0}(C, L_D)$, where
\begin{align}
\label{h def -1 b} h_D(\zeta)_0(z) &\coloneqq - \frac{1}{2 \pi \ii} \int_\mathbb{C} \frac{\d z' \wedge \zeta_0(z')}{z' - z}, \\
\label{h def -1 c} h_D(\zeta)_\infty(w) &\coloneqq - \frac{1}{2 \pi \ii} \int_\mathbb{C} \frac{\d w' \wedge \zeta_\infty(w')}{w' - w}
\end{align}
\end{subequations}
are defined with respect to the chosen local charts.
\begin{theo} \label{thm: explicit SDRs -1}
For a divisor $D : C \to \bbZ$ with $\deg(D) = -1$, we have a continuous strong deformation retract
\begin{equation}
\begin{tikzcd}
0 \arrow[r, "i_D", shift left=1mm] & \Omega^{0, \bullet}(C, L_D) \arrow[l, "p_D", shift left=1mm] \arrow["h_D"', out=-15,in=15,distance=8mm]
\end{tikzcd}
\end{equation}
with $h_D$ defined in \eqref{h def -1}.
\end{theo}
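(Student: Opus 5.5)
Since $\mathsf H^\bullet\Omega^{0,\bullet}(C,L_D)=0$ and $i_D=0$, $p_D=0$, the side conditions $h_D i_D=0$ and $p_D h_D=0$ hold trivially. Also $h_D^2=0$ for degree reasons: $h_D$ is nonzero only on $\Omega^{0,1}$ and lands in $\Omega^{0,0}$. The identity $p_D i_D=\id$ is trivial. So the statement reduces to four things:
\begin{itemize}
\item[(a)] $h_D$ is well defined, i.e.\ $h_D(\zeta)$ is a smooth section of $L_D$;
\item[(b)] $h_D$ is continuous;
\item[(c)] $\bar\partial h_D(\zeta)=-\zeta$ for all $\zeta\in\Omega^{0,1}(C,L_D)$;
\item[(d)] $h_D\bar\partial s=-s$ for all $s\in\Omega^{0,0}(C,L_D)$.
\end{itemize}
Together (c) and (d) give $\id+\bar\partial h_D+h_D\bar\partial=0$. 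Note that, by \eqref{transition function}, $L_D$ only enters through the transition function $g^D_{0\infty}=z^{-1}$. The support of $D$ therefore plays no role, and all arguments take place in the two charts.

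\emph{Step 1: convergence and gluing.} I first extract decay at infinity from the gluing condition. Write $\zeta_\infty=g(w)\,\d\bar w$ with $g$ smooth near $w=0$. Since $\d\bar w=-\d\bar z/\bar z^2$, we get
\[
\zeta_0 = z^{-1}\zeta_\infty = O(|z|^{-3})\,\d\bar z \quad\text{as } |z|\to\infty .
\]
Hence the integral in \eqref{h def -1 b} converges absolutely. The kernel $1/(z'-z)$ is locally integrable, so by the standard regularity of the Cauchy transform $h_D(\zeta)_0$ is smooth on $\bbC$, and likewise $h_D(\zeta)_\infty$.

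For the gluing I substitute $w'=1/z'$ in \eqref{h def -1 c}, using $\d w'=-\d z'/z'^2$, $\zeta_\infty(w')=z'\zeta_0(z')$ and
\[
\frac1{z'}-\frac1z=\frac{z-z'}{zz'} .
\]
This gives
\[
\frac{\d w'\wedge\zeta_\infty}{w'-w}=z\,\frac{\d z'\wedge\zeta_0}{z'-z}.
\]
The substitution is an orientation-preserving biholomorphism off a null set, so $h_D(\zeta)_\infty(1/z)=z\,h_D(\zeta)_0(z)$. This is exactly the gluing condition $h_0=z^{-1}h_\infty$, which settles (a).

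\emph{Step 2: the homotopy identities.} For (c), write $\zeta_0=f\,\d\bar z$. Then $h_D(\zeta)_0$ equals $-\tfrac{1}{2\pi\ii}\int_\bbC \frac{f(z')}{z'-z}\,\d z'\wedge\d\bar z'$, which is minus the standard solution operator of $\partial_{\bar z}u=f$. The usual Cauchy–Pompeiu argument gives $\partial_{\bar z}h_D(\zeta)_0=-f$. The decay $O(|z|^{-3})$ is enough to justify this for non-compactly supported $f$, for instance by splitting $f$ with a cutoff into a compactly supported part and a part vanishing near $z$. Since both sides of $\bar\partial h_D(\zeta)=-\zeta$ are sections of $L_D$ that agree on $U_0$, and $U_0$ is dense, the identity holds globally.

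For (d), let $s=(s_0,s_\infty)$. Gluing gives $s_0=z^{-1}s_\infty(1/z)=O(1/|z|)$. I apply the Cauchy–Pompeiu formula on a disk of radius $R$:
\[
s_0(z)=\frac{1}{2\pi\ii}\oint_{|z'|=R}\frac{s_0(z')\,\d z'}{z'-z}+\frac{1}{2\pi\ii}\int_{|z'|<R}\frac{\partial_{\bar z'}s_0}{z'-z}\,\d z'\wedge\d\bar z'.
\]
The boundary term is $O(R^{-2})\cdot R\to0$. Letting $R\to\infty$ and comparing with \eqref{h def -1 b} gives $h_D(\bar\partial s)_0=-s_0$, and gluing again yields $h_D\bar\partial s=-s$.

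\emph{Main obstacle.} The delicate part is the analysis at infinity rather than the algebra. One must show that the decay forced by the transition function $z^{-1}$ justifies the improper integrals, the vanishing of the boundary term, and differentiation under the integral. For continuity (b) in the Fréchet topology, I would estimate derivatives of $h_D(\zeta)_0$ on compact sets by seminorms of $\zeta$ on both charts. For this I would use the cutoff splitting: the compact part is handled by the classical Cauchy transform estimates, and the far part by direct differentiation of a kernel that is smooth there. The chart $U_\infty$ is treated symmetrically.
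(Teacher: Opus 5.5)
Your proposal is correct and follows essentially the same route as the paper. The paper reduces the claim to $\id = -\bar\partial h_D - h_D\bar\partial$ and proves this with the Cauchy--Pompeiu formula (the weighted Koppelman formula with trivial weight $W=1$) on discs $D_\rho(0)$ with $\rho\to\infty$, using the decay forced by the transition function $z^{-1}$ to kill the boundary term; your explicit checks of the gluing condition for $h_D(\zeta)$ and of continuity fill in details the paper leaves implicit.
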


\subsection{Case \texorpdfstring{$\deg(D) \leq -2$}{deg D leq -2}} \label{sec: maps i p h deg neg}
The cohomology of $\Omega^{0, \bullet}(C, L_D)$ for $\deg(D) \leq -2$ is 
concentrated in degree $1$ and is isomorphic to $\bbC^{-1-\deg(D)}$.
Define the cochain map
\begin{subequations} \label{i def -2}
\begin{equation}
i_D : \bbC^{-1-\deg(D)}[-1] \longrightarrow \Omega^{0, \bullet}(C, L_D)
\end{equation}
sending $\alpha = ( \alpha_j )_{j=0}^{-2-\deg(D)} \in \bbC^{-1-\deg(D)}$ 
to $i_D(\alpha) = (i_D(\alpha)_0, i_D(\alpha)_\infty) \in \Omega^{0,1}(C, L_D)$, where
\begin{align}
i_D(\alpha)_0(z) &:= (1 + |z|^2)^{\deg(D)} \sum_{j=0}^{-2-\deg(D)} \alpha_j \bar z^j \d \bar z, \\
i_D(\alpha)_\infty(w) &:= - (1 + |w|^2)^{\deg(D)} \sum_{j=0}^{-2-\deg(D)} \alpha_j \bar w^{-2-\deg(D)-j} \d \bar w
\end{align}
\end{subequations}
are defined with respect to the chosen local charts.
Define the cochain map
\begin{subequations} \label{p def -2}
\begin{equation}
p_D : \Omega^{0, \bullet}(C, L_D) \longrightarrow \bbC^{-1-\deg(D)}[-1]
\end{equation}
to act trivially on $\Omega^{0, 0}(C, L_D)$ and by sending 
$\zeta = (\zeta_0, \zeta_\infty) \in \Omega^{0, 1}(C, L_D)$ to the coefficients 
$\big( p_D(\zeta)_j \big)_{j=0}^{-2-\deg(D)}$ of the polynomial
\begin{equation} \label{p def -2 b}
\sum_{j=0}^{-2-\deg(D)} p_D(\zeta)_j \bar z^j \coloneqq \frac{1+\deg(D)}{2 \pi \ii} \int_\bbC (1 + \bar z z')^{-2-\deg(D)} \d z' \wedge \zeta_0(z').
\end{equation}
\end{subequations}
Define the degree $-1$ linear map
\begin{subequations} \label{h def -2}
\begin{equation}
h_D : \Omega^{0,\bullet}\big( C, L_D \big) \longrightarrow \Omega^{0,\bullet-1}\big( C, L_D \big),
\end{equation}
whose only non-trivial component sends $\zeta = (\zeta_0, \zeta_\infty) 
\in \Omega^{0,1}(C, L_D)$ to $h_D(\zeta) = \big( h_D(\zeta)_0, h_D(\zeta)_\infty \big)  \in \Omega^{0,0}(C, L_D)$, where
\begin{align}
\label{h def -2 b} h_D(\zeta)_0(z) &\coloneqq - \frac{1}{2 \pi \ii} \int_\mathbb{C} \frac{(1+ \bar z z')^{-1-\deg(D)}}{(1+|z|^2)^{-1-\deg(D)}} \frac{\d z' \wedge \zeta_0(z')}{z' - z}, \\
\label{h def -2 c} h_D(\zeta)_\infty(w) &\coloneqq - \frac{1}{2 \pi \ii} \int_\mathbb{C} \frac{(1+ \bar w w')^{-1-\deg(D)}}{(1+|w|^2)^{-1-\deg(D)}} \frac{\d w' \wedge \zeta_\infty(w')}{w' - w}
\end{align}
\end{subequations}
are defined with respect to the chosen local charts.
\begin{theo} \label{thm: explicit SDRs -2}
For a divisor $D : C \to \bbZ$ with $\deg(D) \leq -2$ we have a 
continuous strong deformation retract
\begin{equation}
\begin{tikzcd}
\bbC^{-1-\deg(D)}[-1] \arrow[r, "i_D", shift left=1mm] & \Omega^{0, \bullet}(C, L_D) \arrow[l, "p_D", shift left=1mm] \arrow["h_D"', out=-15,in=15,distance=8mm]
\end{tikzcd}
\end{equation}
with $i_D$, $p_D$ and $h_D$ defined in \eqref{i def -2}, \eqref{p def -2} and \eqref{h def -2}, respectively.
\end{theo}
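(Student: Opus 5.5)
The plan is to verify the strong deformation retract data of \eqref{SDR def} directly, exploiting the fact that everything is concentrated in two degrees. Throughout, write $k \coloneqq -1-\deg(D) \geq 1$, so that the polynomial degree appearing in \eqref{i def -2} and \eqref{p def -2} is $n = k-1 = -2-\deg(D)$.

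\textbf{Well-definedness and cochain maps.} First I will check that $i_D(\alpha)$ and $h_D(\zeta)$ are genuine sections of $L_D$, i.e.\ that the two chart expressions satisfy the gluing condition with transition function $z^{\deg(D)}$ from \eqref{transition function}.
\begin{itemize}
\item For $i_D$ this is a direct substitution $w = 1/z$, $\d\bar w = -\bar z^{-2}\d\bar z$. It uses $(1+|w|^2)^{\deg(D)} = |z|^{-2\deg(D)}(1+|z|^2)^{\deg(D)}$ and the reindexing $j \mapsto n-j$, which explains the minus sign in $i_D(\alpha)_\infty$.
\item For $h_D$ I will change variables $w' = 1/z'$ in \eqref{h def -2 c} and use $\zeta_0 = z^{\deg(D)}\zeta_\infty$. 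A short algebraic manipulation of the kernel should then show $h_D(\zeta)_0 = z^{\deg(D)} h_D(\zeta)_\infty$. The identity needed is
\begin{equation*}
\frac{1+\bar z z'}{1+|z|^2} = \frac{z'}{z}\,\frac{1+\bar w w'}{1+|w|^2}\bigg|_{w=1/z,\,w'=1/z'} .
\end{equation*}
\end{itemize}
Since $i_D$ lands in degree $1$ and $\bar\partial$ vanishes on $(0,1)$-forms on a curve, $i_D$ is automatically a cochain map. For $p_D$ one needs $p_D(\bar\partial s)=0$. The integrand $(1+\bar z z')^{n}\,\d z'\wedge \bar\partial s_0$ is exact, because $(1+\bar z z')^n$ is holomorphic in $z'$. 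So by Stokes the integral reduces to a boundary term at $|z'|=R\to\infty$. This term vanishes because $s_0 = z'^{\deg(D)} s_\infty$ decays like $|z'|^{\deg(D)}$, which beats the growth $|z'|^{n+1}$ coming from the polynomial and the circle length, since $\deg(D)+n+1=-1$.

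\textbf{The identities $p_D i_D=\id$ and $h_D i_D = 0$.} For $p_D i_D$ I will compute in polar coordinates, using $\d z'\wedge\d\bar z' = -2\ii\, r\,\d r\,\d\theta$. Angular integration kills all cross terms and leaves Beta-type integrals
\begin{equation*}
\int_0^\infty (1+r^2)^{\deg(D)} r^{2j+1}\,\d r ,
\end{equation*}
which should combine with the binomial coefficients of $(1+\bar z z')^n$ and the prefactor $(1+\deg(D))/2\pi\ii$ to give exactly $\alpha_j$. For $h_D i_D$ the same angular argument, applied after expanding $1/(z'-z)$ separately on $|z'|<|z|$ and $|z'|>|z|$, should give zero. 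The remaining side conditions are immediate by degree: $p_D h_D=0$ because $h_D$ lands in degree $0$ where $p_D$ vanishes, and $h_D^2 = 0$ for the same reason.

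\textbf{The homotopy relation $i_D p_D = \id + \bar\partial h_D + h_D\bar\partial$.} This is the key step, split by degree.
\begin{itemize}
\item \emph{Degree $0$.} I need $h_D\bar\partial s = -s$. This is the Cauchy--Pompeiu formula for the kernel
\begin{equation*}
K(z,z') = \frac{(1+\bar z z')^k}{(1+|z|^2)^k\,(z'-z)} .
\end{equation*}
$K$ is holomorphic in $z'$ away from $z'=z$, where it has residue $1$. The boundary term at infinity vanishes by the same decay count as above.
\item \emph{Degree $1$.} I will compute $\partial_{\bar z}$ of $h_D(\zeta)_0$. The factor $1/(z'-z)$ produces the delta-term $-\zeta_0$ (up to the sign conventions fixed by \eqref{h def -2 b}). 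The prefactor contributes through the identity
\begin{equation*}
\partial_{\bar z}\,\frac{1+\bar z z'}{1+|z|^2} = \frac{z'-z}{(1+|z|^2)^2},
\end{equation*}
which exactly cancels the pole. This leaves
\begin{equation*}
k\,(1+|z|^2)^{-k-1}\int (1+\bar z z')^{k-1}\,\d z'\wedge\zeta_0 .
\end{equation*}
Expanding in powers of $\bar z$ and comparing with \eqref{p def -2 b} and \eqref{i def -2} identifies this term with $i_D p_D\zeta$; note that $1+\deg(D) = -k$ accounts for the sign.
\end{itemize}

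\textbf{Continuity and main obstacle.} Continuity follows from standard mapping properties of the Cauchy transform on compactly supported pieces, using a partition of unity adapted to the two charts together with the gluing just established. I expect the main obstacle to be the careful bookkeeping of signs and orientation conventions: the delta-term in $\bar\partial h_D$ and the boundary terms at infinity must come out with exactly the stated signs. A secondary obstacle is verifying the $U_\infty$ gluing of $h_D$, which is needed so that the local Cauchy--Pompeiu computations in the $z$-chart suffice.
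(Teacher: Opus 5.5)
Your proposal is correct and is in substance the paper's proof. The paper combines your Cauchy--Pompeiu step in degree $0$ and your differentiation under the integral in degree $1$ into one application of the weighted Koppelman formula (Theorem~\ref{thm: weighted Koppelman formula}). It uses the weight $W(z,z')=\big(\frac{1+\bar z z'}{1+|z|^2}\big)^{-1-\deg(D)}$, which is holomorphic in $z'$, and whose $\bar\partial_z W$ term produces $i_Dp_D$ exactly as in your computation. It proves $p_Di_D=\id$ by conjugating Lemma~\ref{lem: reproducing kernel}, which amounts to your Beta integrals. Your explicit gluing checks for $i_D$ and $h_D$, and your Stokes argument for $p_D\bar\partial=0$, are correct additions that the paper leaves implicit.

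The one real divergence is $h_Di_D=0$, and there your sketch undersells the work. Angular integration does not kill the terms individually. Already for $\deg(D)=-2$, the regions $|z'|<|z|$ and $|z'|>|z|$ contribute $+2\pi\ii\,\bar z/(1+|z|^2)$ and $-2\pi\ii\,\bar z/(1+|z|^2)$ respectively to $\int_\bbC (1+\bar z z')(1+|z'|^2)^{-2}(z'-z)^{-1}\,\d z'\wedge\d\bar z'$. They cancel only after the radial integrals are done, so in general you would need an identity among incomplete Beta integrals.

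The paper sidesteps this. By your degree-$1$ identity and $p_Di_D=\id$, one gets $\bar\partial h_Di_D(\alpha)=i_Dp_Di_D(\alpha)-i_D(\alpha)=0$. Hence $h_Di_D(\alpha)$ is a holomorphic section of $L_D$ with $\deg(D)<0$, and so it vanishes. Concretely, its $z$-chart representative is entire and, by the gluing you checked, decays like $|z|^{\deg(D)}$, so Liouville applies. Since you prove both ingredients anyway, this is the cleaner way to close that step.
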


\subsection{Compatible pairs}
Let $\omega \in \mathcal M^{(1)}(C) \setminus \{0\}$ be a meromorphic $1$-form so that $\deg (\omega) = -2$.
\sk

Let $D : C \to \bbZ$ be a divisor and define the complementary divisor 
$D' \coloneqq (\omega) - D$. We have the non-degenerate degree $-1$ pairing
\begin{equation} \label{deg -1 pairing}
\langle\!\langle \cdot, \cdot \rangle\!\rangle_\omega : \Omega^{0,\bullet}(C, L_{D'}) \otimes \Omega^{0,\bullet}(C, L_D) \longrightarrow \bbC, \qquad
\zeta \otimes \alpha \longmapsto \frac{\ii}{2 \pi}\int_C \omega \wedge \zeta \wedge \alpha.
\end{equation}
\begin{lem} \label{lem: compatible pairs}
The strong deformation retracts constructed in Theorems \ref{thm: explicit SDRs 0}, 
\ref{thm: explicit SDRs -1} and \ref{thm: explicit SDRs -2} are compatible with the 
pairing \eqref{deg -1 pairing} in the sense that:
\begin{itemize}
  \item[(1)] $\langle\!\langle \zeta, \alpha \rangle\!\rangle_\omega = 0$ for all 
  $\zeta \in \mathrm{im}(i_{D'})$ and $\alpha \in \mathrm{ker}(p_D)$,
  \item[(2)] $\langle\!\langle \zeta, \alpha \rangle\!\rangle_\omega = 0$ for all 
  $\zeta \in \mathrm{ker}(p_{D'})$ and $\alpha \in \mathrm{im}(i_D)$, and
  \item[(3)] $\langle\!\langle h_{D'}(\zeta), \alpha \rangle\!\rangle_\omega = 
  (-1)^{|\zeta|} \langle\!\langle \zeta, h_D(\alpha) \rangle\!\rangle_\omega$ for 
  all $\zeta \in \Omega^{0,\bullet}(C, L_{D'})$ and $\alpha \in \Omega^{0,\bullet}(C, L_D)$.
\end{itemize}
\end{lem}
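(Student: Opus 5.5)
The plan is to reduce all three conditions to explicit integrals over the chart $U_0\cong\bbC$, and then compare them directly with the formulas \eqref{p def 0 b}, \eqref{p def -2 b}, \eqref{h def 0 b}, \eqref{h def -1 b} and \eqref{h def -2 b}.

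\textbf{Step 1: the pairing in the chart $U_0$.} First note that $\deg(D')=-2-\deg(D)$. So, up to swapping the roles of $D$ and $D'$, there are only two situations:
\begin{itemize}
\item $\deg(D)\geq 0$ with $\deg(D')\leq -2$, pairing Theorem \ref{thm: explicit SDRs 0} against Theorem \ref{thm: explicit SDRs -2};
\item $\deg(D)=\deg(D')=-1$, where Theorem \ref{thm: explicit SDRs -1} is used on both sides.
\end{itemize}
Next I simplify the pairing. The complement $C\setminus U_0=\{p_\infty\}$ has measure zero, so the integral can be computed on $U_0$. Since $p_0,p_\infty\notin\supp(\omega)$, comparing divisors gives $\omega=c\,\psi^{(\omega)}_0(z)\,\d z$ on $U_0$ for some constant $c\neq 0$. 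Indeed, the quotient is holomorphic and nowhere vanishing on $U_0$, and checking the order at $p_\infty$ shows it is constant. Moreover $\psi^{D'}_0\psi^D_0=\psi^{(\omega)}_0$ by \eqref{psi 0 and infty}. Passing through the isomorphism \eqref{Gamma O iso}, the $\psi$-factors cancel and we get
\begin{equation*}
\langle\!\langle \zeta,\alpha\rangle\!\rangle_\omega=\frac{\ii c}{2\pi}\int_{\bbC}\d z\wedge\zeta_0\wedge\alpha_0 .
\end{equation*}
By degree, only the cases where one argument is a $(0,0)$-form and the other a $(0,1)$-form contribute.

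\textbf{Step 2: conditions (1) and (2).} When $\deg(D)=-1$, both $i_D$ and $i_{D'}$ vanish, so (1) and (2) are trivial. Now let $d\coloneqq\deg(D)\geq 0$.
\begin{itemize}
\item For (1), take $\zeta=i_{D'}(\beta)$, so that $\zeta_0=(1+|z|^2)^{-2-d}\sum_j\beta_j\bar z^j\d\bar z$ with $0\leq j\leq d$. The pairing with $\alpha$ is then a linear combination of the moments $\int\alpha_0\,\bar z^j(1+|z|^2)^{-2-d}\,\d z\wedge\d\bar z$. Expanding $(1+z\bar z')^d$ binomially in \eqref{p def 0 b}, the coefficients $p_D(\alpha)_j$ are nonzero multiples of exactly these moments. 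Hence $p_D(\alpha)=0$ forces the pairing to vanish.
\item For (2), take $\alpha=i_D(\beta)=\sum_j\beta_j z^j$. The pairing is then a combination of $\int\d z\wedge\zeta_0\,z^j$. By \eqref{p def -2 b} applied to $D'$ (where $-2-\deg(D')=d$), expanding $(1+\bar z z')^d$ shows these are nonzero multiples of the $p_{D'}(\zeta)_j$.
\end{itemize}
The same argument with the roles of $D$ and $D'$ swapped covers the remaining case $\deg(D)\leq -2$.

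\textbf{Step 3: condition (3).} The maps $h$ are nonzero only on $(0,1)$-forms, and the pairing needs total form degree $1$. So both sides vanish unless $\zeta$ and $\alpha$ are both $(0,1)$-forms, in which case we must show
\begin{equation*}
\int\d z\wedge h_{D'}(\zeta)_0\,\alpha_0=-\int\d z\wedge\zeta_0\,h_D(\alpha)_0 .
\end{equation*}
Write $h_D(\alpha)_0(z)=-\frac{1}{2\pi\ii}\int K_D(z,z')\frac{\d z'\wedge\alpha_0(z')}{z'-z}$. Substituting, each side becomes a double integral over $\bbC\times\bbC$. Exchanging $z\leftrightarrow z'$ flips the sign of $1/(z'-z)$. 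So the identity reduces, via Fubini and the reordering of the $2$-forms $\d z\wedge\zeta_0$ and $\d z'\wedge\alpha_0$ (which commute, being even), to the kernel symmetry $K_{D'}(z,z')=K_D(z',z)$. This symmetry is immediate from the formulas:
\begin{itemize}
\item for $\deg(D)=d\geq 0$, we have $K_D(z,z')=\frac{(1+z\bar z')^{1+d}}{(1+|z'|^2)^{1+d}}$ by \eqref{h def 0 b}, while \eqref{h def -2 b} for $D'$ gives $K_{D'}(z,z')=\frac{(1+\bar z z')^{1+d}}{(1+|z|^2)^{1+d}}$;
\item for $\deg(D)=-1$, both kernels are equal to $1$.
\end{itemize}

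\textbf{Main obstacle.} I expect the hardest part to be justifying Fubini in Step 3, and the convergence of the integrals in Steps 2 and 3 near $z=\infty$. For this I would use the gluing condition $\zeta_0=z^{\deg}\zeta_\infty$ together with $\d\bar z=-\bar w^{-2}\d\bar w$ to obtain the decay of $\zeta_0$ and $\alpha_0$ at infinity. Combined with the growth of the kernels and the local integrability of $1/|z-z'|$, this should give absolute integrability on $\bbC^2$. I would also keep careful track of the overall sign and of the constant $c$, which cancels from all three conditions.
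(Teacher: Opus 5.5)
Your proposal is correct and follows essentially the same route as the paper. Both arguments reduce to the cases $\deg(D)=-1=\deg(D')$ and $\deg(D)\geq 0\geq 2+\deg(D')$, and both compute the pairing in the chart $U_0$ with $\omega=c\,dz$. Items (1) and (2) then follow from the vanishing moment conditions encoded by $p_D$ and $p_{D'}$, and item (3) follows from Fubini. Your kernel symmetry $K_{D'}(z,z')=K_D(z',z)$ simply makes explicit what the paper calls ``mimicking case (a)'' for $\deg(D)\geq 0$. Your convergence check, which the paper takes for granted, does go through: the integrand is bounded by $(1+|z|)^{-3}(1+|z'|)^{-3}|z-z'|^{-1}$, which is absolutely integrable on $\bbC^2$.
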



\section{\label{sec:application}Application to \texorpdfstring{$4$}{4}-dimensional Chern-Simons theory}
In this section we will use the strong deformation retracts from Section
\ref{sec:explicitSDR} to explicitly compute the Maurer-Cartan action and Lax connection 
for the $2$-dimensional integrable field theory associated with the cyclic $L_\infty$-algebra
$(\F(\Sigma),\ell')$, obtained by homotopy transfer from
the cyclic $L_\infty$-algebra $(\F(X), \ell)$ of $4$-dimensional Chern-Simons
theory with singularities and boundary conditions, as recalled in Section
\ref{sec:prelim} from \cite{BSV2}. For illustrative purposes, we will make a concrete choice of
meromorphic $1$-form $\omega$, singularities and boundary 
conditions for the purpose of recovering the principal chiral model with a 
WZ-term from the transferred $L_\infty$-algebra $(\F(\Sigma),\ell')$.

\subsection{Setup}
For concreteness, we will focus on the case where the meromorphic $1$-form 
$\omega \in \mathcal M^{(1)}(C) \setminus \{ 0 \}$ of $4$-dimensional Chern-Simons theory is given by
\begin{equation} \label{omega PCM+WZ def}
\omega = \frac{(\alpha-\widetilde{\alpha})^2(z-1)(z+1)}{(z-\alpha)^2 (z-\widetilde{\alpha})^2} \d z,
\end{equation}
where $\alpha, \widetilde{\alpha} \in \bbC \setminus \{0, 1, -1\}$ and $\alpha \neq \widetilde{\alpha}$.
On the $L_\infty$-algebra $(\mathcal E(X),\ell)$ of $4$-dimensional Chern-Simons theory we impose 
singularities by splitting the non-negative divisor of zeros as 
\begin{equation}
(\omega)_0 = (\omega)_0^+ + (\omega)_0^-, \qquad (\omega)^\pm_0 = \delta_{\pm 1},
\end{equation}
and local boundary conditions given by the choice of non-positive divisors
\begin{equation}
(\omega)_\infty^0 = (\omega)_\infty^\pm = (\omega)_\infty^2 = - \delta_\alpha - \delta_{\widetilde\alpha}.
\end{equation}
It will be convenient to introduce the combination
\begin{equation} \label{k def}
\kay \coloneqq \frac{\alpha \widetilde\alpha - 1}{\alpha - \widetilde\alpha}\,\in\,\bbC.
\end{equation}

\subsubsection{Strong deformation retract for $\mathcal L(X)$} \label{SDR for LX}
We use the results of Section \ref{sec:explicitSDR} to set up an explicit strong 
deformation retract from the cochain complex $\mathcal L(X)$ in \eqref{L X def} to its partial cohomology along $C$.
\sk

We can apply Theorem \ref{thm: explicit SDRs 0 bis} to the divisors $0$, $(\omega)_0^\pm$ 
and $(\omega)_0$, all of which are of non-negative degree. Taking the completed tensor
product of the resulting continuous strong deformation retracts with
$\Omega^0(\Sigma, \g)$, $\Omega^\pm(\Sigma, \g)$ and $\Omega^2(\Sigma, \g)$,
respectively, leads to the horizontal strong deformation retracts displayed below:
\begin{equation} \label{all SDRs for 4dCS Lax}
\begin{tikzcd}
\Omega^0(\Sigma, \g) \arrow[r, "i^0", shift left=1mm] \arrow[d, "\d^\pm_\Sigma"] & \g \otimes \Omega^{0, (0, \bullet)}(X) \arrow[l, "p^0", shift left=1mm] \arrow["h^0"', out=-10,in=10,distance=8mm] \arrow[d, "\d^\pm_\Sigma"]\\
\Omega^\pm (\Sigma, \g) \otimes \mathcal O_{(\omega)^\pm_0}(C) \arrow[r, "i^\pm", shift left=1mm] \arrow[d, "\d^\pm_\Sigma"] & \g \otimes \Omega^{\pm, (0, \bullet)}\big(X, L_{(\omega)_0^\pm} \big) \arrow[l, "p^\pm", shift left=1mm] \arrow["h^\pm"', out=-11.75,in=10,distance=8mm] \arrow[d, "\d_\Sigma^\mp"]\\
\Omega^2 (\Sigma, \g) \otimes \mathcal O_{(\omega)_0}(C) \arrow[r, "i^2", shift left=1mm] & \g \otimes \Omega^{2, (0, \bullet)}\big(X, L_{(\omega)_0} \big) \arrow[l, "p^2", shift left=1mm] \arrow["h^2"', out=-10,in=10,distance=8mm]
\end{tikzcd}
\end{equation}
The vertical maps on the left hand side are obtained by tensoring the 
light-cone components of the de Rham differentials along $\Sigma$ with the 
inclusions of $D$-conditioned meromorphic functions induced by the divisor 
inequalities $0\leq (\omega)^\pm_0\leq (\omega)_0$. Similarly, the vertical 
maps on the right hand side are obtained by tensoring with the corresponding 
holomorphic line bundle morphisms. Explicitly, we have
\begin{subequations}\label{d pm Lax}
\begin{align} \label{d pm Sigma Lax}
\d_\Sigma^\pm \,:\, \Omega^{0, (0, \bullet)}(X) &\longrightarrow \Omega^{\pm, (0, \bullet)}\big(X, L_{(\omega)_0^\pm} \big), \notag\\
(s_0, s_\infty) &\longmapsto \big( (z \mp 1) \, \d_\Sigma^\pm s_0, (1 \mp w) \, \d_\Sigma^\pm s_\infty \big), \\
\label{d pm Sigma Lax 2}
\d_\Sigma^\mp \,:\, \Omega^{\pm, (0, \bullet)}\big(X, L_{(\omega)_0^\pm }\big) &\longrightarrow \Omega^{2, (0, \bullet)}\big(X, L_{(\omega)_0}\big), \notag\\
(s_0, s_\infty) &\longmapsto \big( (z \pm 1) \, \d_\Sigma^\mp s_0, (1 \pm w) \, \d_\Sigma^\mp s_\infty \big).
\end{align}
\end{subequations}
We now apply the totalization construction of \cite[Appendix A.1]{BSV2} to the 
horizontal strong deformation retracts from \eqref{all SDRs for 4dCS Lax}, treating 
the vertical differential on the right hand side as a small perturbation in the homological 
perturbation lemma. Noting that the simplification from \cite[Remark A.1]{BSV2} applies in 
the present case since the horizontal $i$ maps in the diagram \eqref{all SDRs for 4dCS Lax} 
commute with the vertical differentials $\d^\pm_\Sigma$, we obtain the perturbed strong deformation retract
\begin{subequations} \label{SDR for Lax complex}
\begin{equation}
\begin{tikzcd}
\big( \L(\Sigma), \ell'_1 \big) \arrow[r, "\widetilde{i}", shift left=1mm] & \big( \L(X), \d_\Sigma + \bar\partial \big) \arrow[l, "\widetilde{p}", shift left=1mm] \arrow["\widetilde{h}"', out=-15,in=15,distance=8mm]
\end{tikzcd}
\end{equation}
with perturbed differential $\ell'_1 \coloneqq \d_\Sigma$ and strong deformation 
retract data $\widetilde{i} \coloneqq i$, $\widetilde{p} \coloneqq p + p\, \d_\Sigma\, h$ 
and $\widetilde{h} \coloneqq h$. Note that the contribution $p\, \d_\Sigma\, h$ to the 
component $\widetilde{p}^0$ of $\widetilde{p}$ vanishes on degree grounds.
Therefore, the non-vanishing components of $\ell'_1, \widetilde{i}, \widetilde{p}, \widetilde{h}$ are
\begin{align}
\label{map tilde d Lax} \ell'_1 &= \d_\Sigma,\\
\label{map tilde i Lax} \widetilde{i}^0 &= i^0, & \widetilde{i}^1 &= i^+ + i^-, & \widetilde{i}^2 &= i^2\\
\label{map tilde p Lax} \widetilde{p}^0 &= p^0, & \widetilde{p}^1 &= p^+ + p^- + (p^+ \d^+_\Sigma + p^- \d^-_\Sigma) h^0, & \widetilde{p}^2 &= p^2 + p^2 (\d^-_\Sigma h^+ + \d^+_\Sigma h^-)\\
\label{map tilde h Lax} \widetilde{h}^1 &= h^0, & \widetilde{h}^2 &= h^+ + h^-, & \widetilde{h}^3 &= h^2.
\end{align}
\end{subequations}
In order to help parse the information contained in the strong deformation retract
\eqref{SDR for Lax complex}, it is convenient to display it in full (with cohomological 
degrees extending vertically from $0$ to $3$):
\begin{equation}
\begin{tikzcd}
\Omega^0(\Sigma, \g) \arrow[d, "\d_\Sigma"] \arrow[r, "i^0", shift left=1mm] & \g \otimes \Omega^{0, (0, 0)}(X) \arrow[l, "p^0", shift left=1mm] \arrow[d, "\d_\Sigma + \bar\partial", shift left=1mm]\\
{\displaystyle \bigoplus_{i=\pm}}\; \Omega^i (\Sigma, \g) \otimes \mathcal O_{(\omega)^i_0}(C) \arrow[d, "\d_\Sigma"] \arrow[r, "i^+ + i^-", shift left=1mm] & {\displaystyle \bigoplus_{i = \pm}} \; \g \otimes \Omega^{i, (0, 0)}\big(X, L_{(\omega)_0^i}\big) \oplus \Omega^{0, (0, 1)}(X) \arrow[l, "\widetilde{p}^1", shift left=1mm] \arrow[d, "\d_\Sigma + \bar\partial", shift left=1mm] \arrow[u, "h^0", shift left=1mm]\\
\Omega^2(\Sigma, \g) \otimes \mathcal O_{(\omega)_0}(C) \arrow[r, "i^2", shift left=1mm] \arrow[d, "0"] & {\displaystyle \bigoplus_{i = \pm}} \; \g \otimes \Omega^{i, (0, 1)}\big(X, L_{(\omega)_0^i}\big) \oplus \g \otimes \Omega^{2, (0, 0)}\big(X, L_{(\omega)_0}\big) \arrow[l, "\widetilde{p}^2", shift left=1mm] \arrow[d, "\d_\Sigma + \bar\partial", shift left=1mm] \arrow[u, "h^+ + h^-", shift left=1mm]\\
0 \arrow[r, "0", shift left=1mm] & \g \otimes \Omega^{2, (0, 1)}\big(X, L_{(\omega)_0}\big) \arrow[l, "0", shift left=1mm] \arrow[u, "h^2", shift left=1mm]
\end{tikzcd}
\end{equation}

\subsubsection{Strong deformation retract for $\mathcal F(X)$} \label{SDR for FX}
Next, we also use the results of Section \ref{sec:explicitSDR} to set up an explicit 
strong deformation retract from the cochain complex $\mathcal F(X)$ in \eqref{F X def} to its partial cohomology along $C$.
\sk

We can apply Theorems \ref{thm: explicit SDRs -2}, \ref{thm: explicit SDRs -1}
and \ref{thm: explicit SDRs 0} to the divisors $(\omega)^0_\infty$,
$(\omega)_0^\pm + (\omega)^\pm_\infty$ and $(\omega)_0 + (\omega)^2_\infty$
of degrees $-2$, $-1$ and $0$, respectively. Then taking the completed tensor
product of the resulting continuous strong deformation retracts with
$\Omega^0(\Sigma, \g)$, $\Omega^\pm(\Sigma, \g)$ and $\Omega^2(\Sigma, \g)$,
respectively, leads to the horizontal strong deformation retracts displayed below:
\begin{equation} \label{all SDRs for 4dCS}
\begin{tikzcd}
\Omega^0(\Sigma, \g)[-1] \arrow[r, "i^0", shift left=1mm] & \g \otimes \Omega^{0, (0, \bullet)}\big(X, L_{(\omega)^0_\infty}\big) \arrow[l, "p^0", shift left=1mm] \arrow["h^0"', out=-10,in=10,distance=8mm] \arrow[d, "\d^\pm_\Sigma"]\\
\qquad\qquad 0 \quad \arrow[r, "i^\pm", shift left=1mm] & \g \otimes \Omega^{\pm, (0, \bullet)}\big(X, L_{(\omega)_0^\pm + (\omega)^\pm_\infty}\big) \arrow[l, "p^\pm", shift left=1mm] \arrow["h^\pm"', out=-11.75,in=10,distance=8mm] \arrow[d, "\d_\Sigma^\mp"]\\
\Omega^2(\Sigma, \g) \arrow[r, "i^2", shift left=1mm] & \g \otimes \Omega^{2, (0, \bullet)}\big(X, L_{(\omega)_0 + (\omega)^2_\infty}\big) \arrow[l, "p^2", shift left=1mm] \arrow["h^2"', out=-10,in=10,distance=8mm]
\end{tikzcd}
\end{equation}
The top strong deformation retract is defined as in Section \ref{sec: maps i p h deg neg}
with degree $-2$ divisor given by $(\omega)^0_\infty$, the middle strong
deformation retract is defined as in Section \ref{sec: maps i p h deg -1}
with degree $-1$ divisor given by $(\omega)_0^\pm + (\omega)^\pm_\infty$
and the bottom strong deformation retract is defined as in Section 
\ref{sec: maps i p h deg pos} with degree $0$ divisor given by $(\omega)_0 + (\omega)^2_\infty$.
The vertical maps on the right hand side of \eqref{all SDRs for 4dCS} are given
by the tensor product of the light-cone components of the de Rham differentials
with corresponding holomorphic line bundle morphisms:
\begin{subequations}\label{d pm}
\begin{align} \label{d pm Sigma}
\d_\Sigma^\pm \,:\, \g \otimes \Omega^{0, (0, \bullet)}\big(X, L_{(\omega)^0_\infty}\big) &\longrightarrow \g \otimes \Omega^{\pm, (0, \bullet)}\big(X, L_{(\omega)_0^\pm + (\omega)^\pm_\infty}\big), \notag\\
(s_0, s_\infty) &\longmapsto \big( (z \mp 1) \, \d_\Sigma^\pm s_0, (1 \mp w) \, \d_\Sigma^\pm s_\infty \big), \\
\label{d pm Sigma 2}
\d_\Sigma^\mp \,:\, \g \otimes \Omega^{\pm, (0, \bullet)}\big(X, L_{(\omega)_0^\pm + (\omega)^\pm_\infty}\big) &\longrightarrow \g \otimes \Omega^{2, (0, \bullet)}\big(X, L_{(\omega)_0 + (\omega)^2_\infty}\big), \notag\\
(s_0, s_\infty) &\longmapsto \big( (z \pm 1) \, \d_\Sigma^\mp s_0, (1 \pm w) \, \d_\Sigma^\mp s_\infty \big).
\end{align}
\end{subequations}
Applying the totalization construction of \cite[Appendix A.1]{BSV2} to the
horizontal strong deformation retracts from \eqref{all SDRs for 4dCS}
and treating the vertical differential on the right hand side as a small
perturbation in the homological perturbation lemma, we obtain the perturbed strong deformation retract
\begin{subequations} \label{SDR for field complex}
\begin{equation}
\begin{tikzcd}
\big( \F(\Sigma), \ell'_1 \big) \arrow[r, "\widetilde{i}", shift left=1mm] & \big( \F(X), \d_\Sigma + \bar\partial \big) \arrow[l, "\widetilde{p}", shift left=1mm] \arrow["\widetilde{h}"', out=-15,in=15,distance=8mm]
\end{tikzcd}
\end{equation}
with perturbed differential $\ell'_1 \coloneqq p \, \d_\Sigma\, i + p \, \d_\Sigma \, h \, \d_\Sigma\, i$
and strong deformation retract data $\widetilde{i} \coloneqq i + h\, \d_\Sigma\, i$,
$\widetilde{p} \coloneqq p + p\, \d_\Sigma\, h$ and $\widetilde{h} \coloneqq h$.
Note that here only contributions up to first order in $h$ enter, 
because the Dolbeault complexes over $\CP$ are concentrated only in degrees $0$ and $1$ 
and the perturbation $\d_\Sigma$ preserves the Dolbeault degree. 
Furthermore, it follows from \eqref{all SDRs for 4dCS} that $\ell'_1$
has vanishing zeroth order contributions $p^\pm \, \d^\pm_\Sigma \, i^0 = 0$ and
$p^2 \, \d^\mp_\Sigma \, i^\pm = 0$ because $p^\pm = 0$ and $i^\pm = 0$.
Similarly, $\widetilde{i}^2$ and $\widetilde{p}^1$ have vanishing first order
contributions because $\d_\Sigma\, i^2 = 0$ and $p^\pm = 0$. Summing up, 
the non-vanishing components of $\ell'_1, \widetilde{i}, \widetilde{p}, \widetilde{h}$ are
\begin{align}
\label{ell'1 def} \ell'_1 &= p^2 \d_\Sigma (h^+ \d^+_\Sigma + h^- \d^-_\Sigma) i^0,\\
\label{tilde i def} \widetilde{i}^1 &= i^0 + (h^+ \d^+_\Sigma + h^- \d^-_\Sigma) i^0, & \widetilde{i}^2 &= i^2,\\
\label{tilde p def} \widetilde{p}^1 &= p^0, & \widetilde{p}^2 &= p^2 + p^2 (\d^-_\Sigma h^+ + \d^+_\Sigma h^-),\\
\widetilde{h}^1 &= h^0, & \widetilde{h}^2 &= h^+ + h^-, & \widetilde{h}^3 &= h^2.
\end{align}
\end{subequations}
Note that, by \eqref{all SDRs for 4dCS}
the zeroth order contribution $p\, \d_\Sigma\, i = 0$ to $\ell'_1$ and the
first order contributions $h\, \d_\Sigma\, i = 0$ to $\widetilde{i}^2$ and
$p\, \d_\Sigma\, h = 0$ to $\widetilde{p}^1$ all vanish.
As in Section \ref{SDR for LX}, to help parse the information contained in the strong deformation retract
\eqref{SDR for field complex} we display it in full below (with cohomological 
degrees extending vertically from $0$ to $3$):
\begin{equation} \label{SDR written in full}
\begin{tikzcd}
0 \arrow[d, "0"] \arrow[r, "0", shift left=1mm] & \g \otimes \Omega^{0, (0, 0)}\big(X, L_{(\omega)^0_\infty}\big) \arrow[l, "0", shift left=1mm] \arrow[d, "\d_\Sigma + \bar\partial", shift left=1mm]\\
\Omega^0(\Sigma, \g) \arrow[d, "\ell'_1"] \arrow[r, "\widetilde{i}^1", shift left=1mm] & {\displaystyle \bigoplus_{i = \pm}} \; \g \otimes \Omega^{i, (0, 0)}\big(X, L_{(\omega)_0^i + (\omega)^i_\infty}\big) \oplus \g \otimes \Omega^{0, (0, 1)}\big(X, L_{(\omega)^0_\infty}\big) \arrow[l, "p^0", shift left=1mm] \arrow[d, "\d_\Sigma + \bar\partial", shift left=1mm] \arrow[u, "h^0", shift left=1mm]\\
\Omega^2(\Sigma, \g) \arrow[d, "0"] \arrow[r, "i^2", shift left=1mm]
& \g \otimes \Omega^{2, (0, 0)}\big(X, L_{(\omega)_0 + (\omega)^2_\infty}\big) \oplus {\displaystyle \bigoplus_{i = \pm}} \; \g \otimes \Omega^{i, (0, 1)}\big(X, L_{(\omega)_0^i + (\omega)^i_\infty}\big) \arrow[l, "\widetilde{p}^2", shift left=1mm] \arrow[d, "\d_\Sigma + \bar\partial", shift left=1mm] \arrow[u, "h^+ + h^-", shift left=1mm]\\
0 \arrow[r, "0", shift left=1mm] & \g \otimes \Omega^{2, (0, 1)}\big(X, L_{(\omega)_0 + (\omega)^2_\infty}\big) \arrow[l, "0", shift left=1mm] \arrow[u, "h^2", shift left=1mm]
\end{tikzcd}
\end{equation}

\subsection{Field complex} \label{sec: ell' 1 differential}
In this section we explicitly compute the differential \eqref{ell'1 def}
of the field complex $\F(\Sigma)$.
\sk

Let $\phi \in \F(\Sigma)^1 = \Omega^0(\Sigma, \g)$. Applying
$i^0$ from \eqref{all SDRs for 4dCS} (see also \eqref{i def -2} for the
explicit formula) we obtain
\begin{equation} \label{tau in application}
i^0(\phi) = \bigg( \phi \, \frac{\d \bar z}{(1+|z|^2)^2},
- \phi \, \frac{\d \bar w}{(1+|w|^2)^2} \bigg).
\end{equation}
Applying further the perturbation differential \eqref{d pm Sigma} gives
\begin{equation} \label{d i0 tau}
\d^\pm_\Sigma i^0 (\phi) = \bigg( \d^\pm_\Sigma \phi \, \frac{(z \mp 1) \, \d \bar z}{(1+|z|^2)^2},
- \d^\pm_\Sigma \phi \, \frac{(1 \mp w) \, \d \bar w}{(1+|w|^2)^2} \bigg).
\end{equation}
Next, we apply $h^\pm$ from \eqref{all SDRs for 4dCS} (see also 
\eqref{h def -1 b} and \eqref{h def -1 c} for the explicit formula) and find
\begin{align} \label{hdi phi}
h^\pm \d^\pm_\Sigma i^0 (\phi) &=
\bigg( \frac{1}{2 \pi \ii}\, \d^\pm_\Sigma \phi \int_\bbC \frac{(z' \mp 1) \, \d z' \wedge \d \bar z'}{(1 + |z'|^2)^2 (z' - z)},
- \frac{1}{2 \pi \ii}\, \d^\pm_\Sigma \phi \int_\bbC \frac{(1 \mp w') \, \d w' \wedge \d \bar w'}{(1 + |w'|^2)^2 (w'-w)} \bigg) \notag \\
&= \bigg( \d^\pm_\Sigma \phi \, \frac{\mp \bar z - 1}{1 + |z|^2},
\d^\pm_\Sigma \phi \, \frac{- \bar w \mp 1}{1 + |w|^2} \bigg),
\end{align}
where the first step takes into account the additional minus sign arising 
from the homotopy $h^\pm$ passing through the de Rham differentials
$\d^\pm_\Sigma \phi$ and the second step evaluates the integrals
using Corollary \ref{cor: del bar problem}.
Applying the perturbation differential \eqref{d pm Sigma 2} to \eqref{hdi phi}
and using the fact that $\d^-_\Sigma \d^+_\Sigma = - \d^+_\Sigma \d^-_\Sigma$ we obtain
\begin{equation} \label{d h d i on tau}
\d_\Sigma (h^+ \d^+_\Sigma + h^- \d^-_\Sigma) i^0 (\phi)
= \big( 2 \d^+_\Sigma \d^-_\Sigma \phi, 2 \d^+_\Sigma \d^-_\Sigma \phi \big).
\end{equation}
Finally, applying $p^2$ from \eqref{all SDRs for 4dCS} (see also \eqref{p def 0}
for the explicit formula) we find
\begin{equation} \label{eqn:ell'_1}
\ell'_1(\phi) = p^2 (\d_\Sigma^- h^+ \d^+_\Sigma + \d_\Sigma^+ h^- \d^-_\Sigma) i^0(\phi)
= 2 \d^+_\Sigma \d^-_\Sigma \phi \; \frac{\ii}{2 \pi} \int_\bbC \frac{\d z' \wedge \d \bar z'}{(1+|z'|^2)^2}
= 2 \d^+_\Sigma \d^-_\Sigma \phi,
\end{equation}
where the integral in the last step evaluates to $1$ using Lemma
\ref{lem: reproducing kernel} with $N=0$ and $P(z) = 1$. In summary, the field complex is therefore given explicitly by
\begin{equation} \label{field complex}
\begin{tikzcd}[column sep=15mm]
\big( \F(\Sigma), \ell'_1 \big) \coloneqq \bigg( \overset{(1)}{\Omega^0(\Sigma, \g)} \arrow[r, "2 \d^+_\Sigma \d^-_\Sigma"] & \overset{(2)}{\Omega^2(\Sigma, \g)} \bigg).
\end{tikzcd}
\end{equation}

\subsection{Maurer-Cartan action} \label{sec: deriving MC action}
In this section we will use the homotopy transfer theorem to compute the Maurer-Cartan action
\begin{equation}\label{eqn:MCaction}
S_{\rm MC}[\phi] = \pairsize{\phi}{\sum_{n \geq 1} \frac{1}{(n+1)!} 
\ell'_n(\phi^{\otimes n})}_{\omega}^\prime
\end{equation}
associated with the cyclic $L_\infty$-algebra structure
$(\ell^\prime_n)_{n \geq 2}, \pair{\cdot}{\cdot}_\omega^\prime$
on the field complex $(\F(\Sigma), \ell'_1)$ obtained via transfer 
from the simple cyclic $L_\infty$-algebra structure $\ell_2, \pair{\cdot}{\cdot}_\omega$
on Chern-Simons theory with singularities and boundary conditions
$(\F(X), \d_\Sigma + \bar\partial)$ along the
strong deformation retract \eqref{SDR for field complex}.
Here $\phi \in \Omega_\cc^0(\Sigma,\g) = \F_\cc(\Sigma)^1$ 
is a degree $1$ element of the compactly supported field complex.
\sk

In order to compute \eqref{eqn:MCaction}, it suffices to consider the action
\begin{subequations}\label{eqn:Lie}
\begin{equation}
\ell_2 \,:\, \F(X)^1 \otimes \F(X)^1 \,\longrightarrow\, \F(X)^2
\end{equation}
of the Lie bracket $\ell_2$ on $(\F(X), \d_\Sigma + \bar\partial)$
on degree $1$ elements $s = (s_0, s_\infty), t = (t_0, t_\infty) \in \F(X)^1$. The latter reads explicitly as
\begin{equation}
\ell_2(s,t) = \big( P^{\alpha,\widetilde\alpha}_0\, [s_0 \,\overset{\wedge},\, t_0] \, , \, P^{\alpha,\widetilde\alpha}_\infty \, [s_\infty \,\overset{\wedge},\, t_\infty] \big) \in \F(X)^2,
\end{equation}
where multiplication by the polynomials
\begin{equation} \label{P alpha poly def}
P^{\alpha,\widetilde\alpha}_0(z) \coloneqq \frac{(z-\alpha) (z - \widetilde\alpha)}{\alpha - \widetilde\alpha}, \qquad
P^{\alpha,\widetilde\alpha}_\infty(w) \coloneqq \frac{(1-\alpha w) (1 - \widetilde\alpha w)}{\alpha - \widetilde\alpha},
\end{equation}
\end{subequations}
which include the normalizing factors $(\alpha - \widetilde\alpha)^{-1}$ 
for later convenience, accounts at the level of local sections for the holomorphic line bundle morphisms
\begin{equation}
L_{(\omega)_0^\pm + (\omega)^\pm_\infty} \otimes L_{(\omega)^0_\infty} \longrightarrow L_{(\omega)_0^\pm + (\omega)^\pm_\infty} ,\qquad
L_{(\omega)_0^\pm + (\omega)^\pm_\infty} \otimes L_{(\omega)_0^\mp + (\omega)^\mp_\infty} \longrightarrow L_{(\omega)_0 + (\omega)^2_\infty}
\end{equation}
entering the action of the Lie bracket $\ell_2$ on degree $1$ elements of $\F(X)$,
see \eqref{SDR written in full}.
\sk

To determine \eqref{eqn:MCaction},
we are interested in evaluating, for all $n \geq 2$, the transferred $\ell'_n$-brackets
from \eqref{elln def} on $\phi^{\otimes n}$, where $\phi \in \Omega^0(\Sigma, \g) = \F(\Sigma)^1$.
In this case, both the Koszul sign \eqref{chi sign def} and the sign \eqref{zeta sign def}
are identically equal to $1$. Moreover, the terms in the sum over the shuffle
$\sigma \in \text{Sh}(m,n-m)$ become independent of $\sigma$ since we have
$a_{\sigma(i)} = \phi$ for all $i =1,\ldots, n$ and hence this sum simply
produces an overall factor of $|\text{Sh}(m,n-m)| = \binom{n}{m}$. Therefore
\eqref{in def} and \eqref{elln def} simplify, for all $n \geq 2$, to
\begin{align}
\label{in on phi} \widetilde{i}_n(\phi^{\otimes n}) &= \frac{1}{2} \sum_{m=1}^{n-1} \binom{n}{m} \, \widetilde{h} \, \ell_2\Big( \widetilde{i}_m(\phi^{\otimes m}), \widetilde{i}_{n-m}\big(\phi^{\otimes (n-m)} \big) \Big),\\
\label{elln on phi} \ell'_n(\phi^{\otimes n}) &= \frac{1}{2} \sum_{m=1}^{n-1} \binom{n}{m} \, \widetilde{p} \, \ell_2\Big( \widetilde{i}_m(\phi^{\otimes m}), \widetilde{i}_{n-m}\big( \phi^{\otimes (n-m)} \big) \Big),
\end{align}
where $\widetilde{i}_1 \coloneqq \widetilde{i}$.
In the case of \eqref{in on phi} there is yet a further simplification coming
from the fact that all maps $\widetilde{i}_n$, for $n \geq 2$, end with a homotopy
$\widetilde{h}$ and therefore they land in forms of Dolbeault degree $0$.
Since our homotopy $\widetilde{h}$ vanishes on forms of Dolbeault degree $0$
and $\ell_2$ preserves Dolbeault degree, it follows that in \eqref{in on phi}
the contributions to the sum vanish for $2 \leq m \leq n-2$. This leads to
$\widetilde{i}_2(\phi^{\otimes 2}) = \widetilde{h} \big( \ell_2\big( \widetilde{i}(\phi), \widetilde{i}(\phi) \big) \big)$
and, using also the graded antisymmetry of $\ell_2$,
$\widetilde{i}_n(\phi^{\otimes n}) = n \, \widetilde{h} \, 
\ell_2\big( \widetilde{i}(\phi), \widetilde{i}_{n-1}\big( \phi^{\otimes (n-1)} \big) \big)$,
for all $n \geq 3$. Therefore, for all $n \geq 2$, by recursion we conclude
\begin{equation}\label{in on phi simplified}
\widetilde{i}_n(\phi^{\otimes n}) = \frac{n!}2 \, \widetilde{h} \, \ell_2\Big( \widetilde{i}(\phi), \ldots \widetilde{h} \ell_2 \big( \widetilde{i}(\phi), \widetilde{i}(\phi) \big) \Big),
\end{equation}
where $\widetilde{h} \ell_2(\widetilde{i}(\phi),\cdot)$ is iterated $n-1$ times.
This expression can be represented graphically as a tree with a single long
branch extending to the right (or to the left by graded antisymmetry of $\ell_2$):
\begin{equation}
\widetilde{i}_n(\phi^{\otimes n}) \; = \; \frac{n!}2
\raisebox{-12.5mm}{
\begin{tikzpicture}[cir/.style={circle,draw=black,inner sep=0pt,minimum size=2mm},
        poin/.style={rectangle, inner sep=2pt,minimum size=0mm},scale=0.8, every node/.style={scale=0.8},
		every edge quotes/.style = {auto, font=\footnotesize}]

\node[poin] (i)   at (0,-4) {};
\node[poin] (v1)  at (0,-3) {$\ell_2$};
\node[poin] (v2)  at (-1,-2) {$\phi$};
\node[poin] (v3)  at (1,-2) {$\ell_2$};
\node[poin] (v4b) at (0,-1) {$\phi$};
\node[poin] (v34) at (1.5,-1.4) {$\iddots$};
\node[poin] (v4)  at (2,-1) {$\ell_2$};
\node[poin] (v5)  at (3,0) {$\phi$};
\node[poin] (v5b) at (1,0) {$\phi$};

\draw[thick] (i)  edge["$\widetilde{h}$"] (v1);
\draw[thick] (v1) edge["$\widetilde{i}$"] (v2);
\draw[thick] (v1) edge["$\widetilde{h}$"'] (v3);
\draw[thick] (v3) edge["$\widetilde{i}$"] (v4b);
\draw[thick] (v4) edge["$\widetilde{i}$"'] (v5);
\draw[thick] (v4) edge["$\widetilde{i}$"] (v5b);
\end{tikzpicture}
}
\; = \; \frac{n!}2
\raisebox{-12.5mm}{
\begin{tikzpicture}[cir/.style={circle,draw=black,inner sep=0pt,minimum size=2mm},
        poin/.style={rectangle, inner sep=2pt,minimum size=0mm},scale=0.8, every node/.style={scale=0.8},
		every edge quotes/.style = {auto, font=\footnotesize}]

\node[poin] (i)   at (0,-4) {};
\node[poin] (v1)  at (0,-3) {$\ell_2$};
\node[poin] (v2)  at (1,-2) {$\phi$};
\node[poin] (v3)  at (-1,-2) {$\ell_2$};
\node[poin] (v4b) at (0,-1) {$\phi$};
\node[poin] (v34) at (-1.5,-1.4) {$\ddots$};
\node[poin] (v4)  at (-2,-1) {$\ell_2$};
\node[poin] (v5)  at (-3,0) {$\phi$};
\node[poin] (v5b) at (-1,0) {$\phi$};

\draw[thick] (i)  edge["$\widetilde{h}$"] (v1);
\draw[thick] (v1) edge["$\widetilde{i}$"'] (v2);
\draw[thick] (v1) edge["$\widetilde{h}$"] (v3);
\draw[thick] (v3) edge["$\widetilde{i}$"'] (v4b);
\draw[thick] (v4) edge["$\widetilde{i}$"] (v5);
\draw[thick] (v4) edge["$\widetilde{i}$"'] (v5b);
\end{tikzpicture}
}
\end{equation}
It follows from the explicit form of the perturbed strong deformation retract \eqref{SDR for field complex} that
\begin{equation}
\widetilde{i}_n(\phi^{\otimes n}) \; = \; n!
\raisebox{-12.5mm}{
\begin{tikzpicture}[cir/.style={circle,draw=black,inner sep=0pt,minimum size=2mm},
        poin/.style={rectangle, inner sep=2pt,minimum size=0mm},scale=0.8, every node/.style={scale=0.8},
		every edge quotes/.style = {auto, font=\footnotesize}]

\node[poin] (i)   at (0,-4) {};
\node[poin] (v1)  at (0,-3) {$\ell_2$};
\node[poin] (v2)  at (-1,-2) {$\phi$};
\node[poin] (v3)  at (1,-2) {$\ell_2$};
\node[poin] (v4b) at (0,-1) {$\phi$};
\node[poin] (v34) at (1.5,-1.4) {$\iddots$};
\node[poin] (v4)  at (2,-1) {$\ell_2$};
\node[poin] (v5)  at (3,0) {$\phi$};
\node[poin] (v5b) at (1,0) {$\phi$};

\draw[thick] (i)  edge["$h^+$"] (v1);
\draw[thick] (v1) edge["$i^0$"] (v2);
\draw[thick] (v1) edge["$h^+$"'] (v3);
\draw[thick] (v3) edge["$i^0$"] (v4b);
\draw[thick] (v4) edge["$h^+ \d^+_\Sigma i^0$"'] (v5);
\draw[thick] (v4) edge["$i^0$"] (v5b);
\end{tikzpicture}
}
\;+\; n!
\raisebox{-12.5mm}{
\begin{tikzpicture}[cir/.style={circle,draw=black,inner sep=0pt,minimum size=2mm},
        poin/.style={rectangle, inner sep=2pt,minimum size=0mm},scale=0.8, every node/.style={scale=0.8},
		every edge quotes/.style = {auto, font=\footnotesize}]

\node[poin] (i)   at (0,-4) {};
\node[poin] (v1)  at (0,-3) {$\ell_2$};
\node[poin] (v2)  at (-1,-2) {$\phi$};
\node[poin] (v3)  at (1,-2) {$\ell_2$};
\node[poin] (v4b) at (0,-1) {$\phi$};
\node[poin] (v34) at (1.5,-1.4) {$\iddots$};
\node[poin] (v4)  at (2,-1) {$\ell_2$};
\node[poin] (v5)  at (3,0) {$\phi$};
\node[poin] (v5b) at (1,0) {$\phi$};

\draw[thick] (i)  edge["$h^-$"] (v1);
\draw[thick] (v1) edge["$i^0$"] (v2);
\draw[thick] (v1) edge["$h^-$"'] (v3);
\draw[thick] (v3) edge["$i^0$"] (v4b);
\draw[thick] (v4) edge["$h^- \d^-_\Sigma i^0$"'] (v5);
\draw[thick] (v4) edge["$i^0$"] (v5b);
\end{tikzpicture}
}
\end{equation}
For all $n \geq 2$, we can write out explicitly what the above diagrams represent as
\begin{equation} \label{in on phi explicit}
\widetilde{i}_n(\phi^{\otimes n}) = n! \, \big( (F^+_\phi)^{n-1} h^+ \d^+_\Sigma  
+ (F^-_\phi)^{n-1} h^- \d^-_\Sigma \big) \, i^0(\phi),
\end{equation}
where we introduced the linear maps
\begin{align} \label{linear map F def}
F^\pm_\phi \,:\, \g \otimes \Omega^{\pm, (0, 0)}\big(X, L_{(\omega)_0^\pm + (\omega)^\pm_\infty}\big) \,&\longrightarrow\, \g \otimes \Omega^{\pm, (0, 0)}\big(X, L_{(\omega)_0^\pm + (\omega)^\pm_\infty}\big), \notag\\
s \,& \longmapsto\, h^\pm \ell_2\big( i^0(\phi), s \big).
\end{align}
Furthermore, the evaluation $F^\pm_\phi(s^\pm) = (F^\pm_\phi(s^\pm)_0,F^\pm_\phi(s^\pm)_\infty)$
on a section $s^\pm = \big( s^\pm_0, s^\pm_\infty \big) \in \g \otimes
\Omega^{\pm, (0, 0)}\big(X, L_{(\omega)_0^\pm + (\omega)^\pm_\infty}\big)$
reads explicitly as follows:
\begin{subequations} \label{F beta explicit}
\begin{align} 
F^\pm_\phi(s^\pm)_0(z) &= \int_\mathbb{C} \omega_{\rm FS}(z')\, P^{\alpha, \widetilde\alpha}_0(z')\, \frac{[\phi, s^\pm_0(z')]}{z' - z}, \\
F^\pm_\phi(s^\pm)_\infty(w) &= - \int_\mathbb{C} \omega_{\rm FS}(w')\, P^{\alpha, \widetilde\alpha}_\infty
(w')\, \frac{[\phi, s^\pm_\infty(w')]}{w' - w}.
\end{align}
\end{subequations}
Here we used \eqref{tau in application} to evaluate $i_0$, 
\eqref{eqn:Lie} to evaluate $\ell_2$ and \eqref{h def -1} to evaluate $h^\pm$.
For notational convenience, we also introduced the Fubini-Study volume form
\begin{equation}\label{eqn:FS}
\omega_{\rm FS} = - \frac{1}{2 \pi \ii} \frac{\d z \wedge \d \bar z}{(1+|z|^2)^2}
\end{equation}
associated with the Fubini-Study metric on $\CP$.
\begin{lem} \label{lem: F identities}
For any $\phi \in \Omega^0(\Sigma, \g) = \F(\Sigma)^1$ and 
$s^\pm \in \g \otimes \Omega^{\pm, (0, 0)}\big(X, L_{(\omega)_0^\pm + (\omega)^\pm_\infty}\big) \subset \F(X)^1$ 
we have
\begin{align}
\label{F identity a} \big\langle \phi, p^2 \ell_2 \big( F^+_\phi(s^+), s^- \big) \big\rangle
&= \big\langle \phi, p^2 \ell_2 \big( s^+, F^-_\phi(s^-) \big) \big\rangle,\\
\label{F identity b} \big\langle \phi, p^2 \d_\Sigma^\mp F^\pm_\phi(s^\pm) \big) \big\rangle
&= \big\langle \phi, p^2 \ell_2 \big( h^\mp \d_\Sigma^\mp i^0(\phi), s^\pm \big) \big\rangle,
\end{align}
where $\langle \cdot, \cdot \rangle : \g \otimes \g \to \bbC$
denotes an invariant bilinear pairing on the Lie algebra $\g$.
\begin{proof}
We consider first the identity \eqref{F identity a}. Using \eqref{F beta explicit}
to evaluate $F^\pm_\phi$, \eqref{eqn:Lie} to evaluate $\ell_2$ and \eqref{p def 0} to evaluate $p^2$, we find
\begin{align}
\big\langle \phi, p^2 \ell_2 \big( F^+_\phi(s^+), s^- \big) \big\rangle &= \int_{z \in \bbC} \int_{z' \in \bbC} \omega_{\rm FS}(z) \wedge \omega_{\rm FS}(z') P^{\alpha, \widetilde\alpha}_0(z) P^{\alpha, \widetilde\alpha}_0(z') \frac{\big\langle \phi, \big[[\phi, s^+_0(z')], s^-_0(z) \big] \big\rangle}{z' - z} \notag \\
&= \int_{z' \in \bbC} \int_{z \in \bbC} \omega_{\rm FS}(z') \wedge \omega_{\rm FS}(z) P^{\alpha, \widetilde\alpha}_0(z') P^{\alpha, \widetilde\alpha}_0(z) \frac{\big\langle \phi, \big[s^+_0(z'), [\phi, s^-_0(z)] \big] \big\rangle}{z - z'} \notag \\
&= \big\langle \phi, p^2 \ell_2 \big( s^+, F^-_\phi(s^-) \big) \big\rangle,
\end{align}
where in the second step we used Fubini's theorem and
$\big\langle \phi, \big[[\phi, s^+_0], s^-_0 \big] \big\rangle =
- \big\langle \phi, \big[s^+_0, [\phi, s^-_0] \big] \big\rangle$,
which follows from antisymmetry and Jacobi identity of the Lie bracket
$[\cdot,\cdot]$ on $\g$ and invariance of the bilinear pairing
$\langle \cdot, \cdot \rangle$.
\sk

Consider now the identity \eqref{F identity b}. Using \eqref{F beta explicit}
to evaluate $F_\pm^\phi$, \eqref{d pm} to evaluate $\d_\Sigma^\pm$,
\eqref{hdi phi} to evaluate $h^\mp \d_\Sigma^\mp i_0$ and
\eqref{p def 0} to evaluate $p^2$, we find
\begin{align}
\big\langle \phi, p^2 \d_\Sigma^\mp F^\pm_\phi(s^\pm) \big\rangle &= \int_{z \in \bbC} \int_{z' \in \bbC} \omega_{\rm FS}(z) \wedge \omega_{\rm FS}(z') \, (z \pm 1) \, P^{\alpha, \widetilde\alpha}_0(z') \, \frac{\big\langle \phi, \d_\Sigma^\mp [\phi, s^\pm_0(z')] \big\rangle}{z' - z} \notag\\
&= - \int_{z' \in \bbC} \int_{z \in \bbC} \omega_{\rm FS}(z') \wedge \omega_{\rm FS}(z) \, (z \pm 1) \, P^{\alpha, \widetilde\alpha}_0(z') \, \frac{\big\langle \phi, [\d_\Sigma^\mp \phi, s^0_\pm(z')] \big\rangle}{z - z'} \notag\\
&= \big\langle \phi, p^2 \ell_2 \big( h^\mp \d_\Sigma^\mp i^0(\phi), s^\pm \big) \big\rangle.
\end{align}
In the second step we used Fubini's theorem and
$\big\langle \phi, \big[\phi, [\d_\Sigma^\mp s^\pm_0] \big] \big\rangle = 0$,
which follows from antisymmetry of the Lie bracket $[\cdot,\cdot]$ on $\g$
and invariance of the bilinear pairing $\langle \cdot, \cdot \rangle$.
\end{proof}
\end{lem}

To move forward with our computation of \eqref{eqn:MCaction},
let us recall from \cite[Proposition 3.16]{BSV2} that the transferred
cyclic structure is given by
\begin{align}
\pair{\cdot}{\cdot}_\omega^\prime = \pair{\cdot}{\cdot}_\omega \circ (\widetilde{i} \otimes \widetilde{i}): \F_\cc(\Sigma) \otimes \F_\cc(\Sigma) \longrightarrow \bbC,
\end{align}
where the required compatibility conditions on the original cyclic structure
$\pair{\cdot}{\cdot}_\omega$ from \eqref{cyclic structure FX} are fulfilled as a consequence
of Lemma \ref{lem: compatible pairs}. Let us evaluate the transferred 
cyclic structure on $\phi \in \Omega_\cc^0(\Sigma,\g) = \F_\cc(\Sigma)^1$
and $\phi^\ddagger \in \Omega_\cc^2(\Sigma,\g) = \F_\cc(\Sigma)^2$:
\begin{align}\label{eqn:transcyclic}
\pair{\phi}{\phi^\ddagger}_\omega^\prime &= \frac{\ii}{2 \pi} \int_X \omega \wedge \wedgepair{i^0 \phi + (h^+ \d_\Sigma^+ + h^- \d_\Sigma^-) i^0 \phi}{i^2 \phi^\ddagger} \notag \\
&= \frac{\ii}{2 \pi} \int_\Sigma \int_\bbC (\alpha - \widetilde{\alpha})^2 \, \d z \wedge  \wedgepair{\phi \, \frac{\d \bar{z}}{(1 + |z|^2)^2}}{\phi^\ddagger} \notag \\
&= (\alpha - \widetilde{\alpha})^2 \int_\Sigma \langle \phi, \phi^\ddagger \rangle.
\end{align}
In the first step we used the definition of the original cyclic structure
\eqref{cyclic structure FX} and the formula for $\widetilde{i}$ from \eqref{SDR for field complex}.
In the second step we observed that the term $(h^+ \d_\Sigma^+ + h^- \d_\Sigma^-) i^0 \phi$
does not contribute for degree reasons, we used \eqref{tau in application},
we evaluated $i^2$ using \eqref{i def 0} and we interpreted the meromorphic
$1$-form $\omega$ from \eqref{omega PCM+WZ def} as an $L_{-(\omega)}$-valued
$(1,0)$-form on $\CP$ given by 
$\omega = \big( (\alpha - \widetilde{\alpha})^2 \d z, - (\alpha - \widetilde{\alpha})^2 \d w \big)$.
In the third step we recognized the Fubini-Study volume form $\omega_{\rm FS}$
from \eqref{eqn:FS}, whose integral equals $1$.
\sk

The combination of \eqref{eqn:MCaction} and \eqref{eqn:transcyclic} leads
to the computation of
\begin{align} \label{ell'n term MC}
\langle \phi, \ell'_n(\phi^{\otimes n}) \rangle &= n \, \Big\langle \phi, p^2 (\d^-_\Sigma F^+_\phi + \d^+_\Sigma F^-_\phi) \Big( \widetilde{i}_{n-1}\big( \phi^{\otimes (n-1)} \big) \Big) \Big\rangle \notag\\
&\phantom{=}+ n \, \Big\langle \phi, p^2 \, \ell_2\Big( (h^+ \d^+_\Sigma + h^- \d^-_\Sigma) i^0(\phi), \widetilde{i}_{n-1}\big( \phi^{\otimes (n-1)} \big) \Big) \Big\rangle \notag\\
&\phantom{=}+ \frac{1}{2} \sum_{m=2}^{n-2} \binom{n}{m} \, \Big\langle \phi, p^2 \, \ell_2 \Big( \widetilde{i}_m(\phi^{\otimes m}), \widetilde{i}_{n-m}\big( \phi^{\otimes (n-m)} \big) \Big) \Big\rangle \notag\\
&= (n+1)! \Big\langle \phi, p^2 \, \ell_2\Big( h^+ \d^+_\Sigma i^0(\phi), (F^-_\phi)^{n-2}\big( h^- \d^-_\Sigma i^0(\phi) \big) \Big) \Big\rangle.
\end{align}
In the first step we combined \eqref{elln on phi} with the explicit forms of
$\widetilde{i}$ and $\widetilde{p}$ from \eqref{SDR for field complex}, while
taking into account the graded antisymmetry of $\ell_2$ and the definition
\eqref{linear map F def} of $F^\pm_\phi$.
In the second step we used the explicit expression \eqref{in on phi explicit}
and applied Lemma \ref{lem: F identities} repeatedly.
It remains to compute the right hand side of \eqref{ell'n term MC},
which is the content of Proposition \ref{prop: all MC action terms} below.
We first need a preliminary lemma, which will also be useful in the next section.
\begin{lem} \label{lem: Fn explicit}
For any $n \in \bbZ_{\geq 0}$ and $\phi \in \Omega^0(\Sigma, \g) = \F(\Sigma)^1$ we have
\begin{subequations}\label{Fn explicit}
\begin{align}
(F^\pm_\phi)^n\big( h^\pm \d^\pm_\Sigma i^0(\phi) \big)_0(z) = \frac{1}{(n+1)!} \bigg( \frac{(\widetilde \alpha \mp 1) B(z)^{n+1}}{z - \widetilde\alpha} - \frac{(\alpha \mp 1) \widetilde B(z)^{n+1}}{z - \alpha} \bigg) \, \ad_\phi^n \, \d^\pm_\Sigma \phi,
\end{align}
where
\begin{align}\label{Fn explicit b}
B(z) \coloneqq \frac{(z-\widetilde\alpha) (1+\alpha \bar z)}{(\alpha - \widetilde\alpha) (1+ |z|^2)}, \qquad
\widetilde B(z) \coloneqq B(z) - 1 = \frac{(z-\alpha) (1+\widetilde\alpha \bar z)}{(\alpha - \widetilde\alpha) (1+ |z|^2)}.
\end{align}
\end{subequations}
\begin{proof}
Let $R_n^\pm$ denote the right hand side of \eqref{Fn explicit}. Direct inspection shows that
\begin{equation}
R^\pm_0(z) = \d^\pm_\Sigma \phi \, \frac{\mp \bar z - 1}{1+|z|^2} = \big( h^\pm \d^\pm_\Sigma i^0(\phi)\big)_0(z),
\end{equation}
where the second step is \eqref{hdi phi}, and also
\begin{equation}
\bar\partial R^\pm_n = - \frac{P_0^{\alpha,\widetilde\alpha}(z) \d \bar z}{(1+ |z|^2)^2} [\phi, R^\pm_{n-1}(z)],
\end{equation}
for all $n \geq 1$. Since $R^\pm_n(z) \to 0$ as $|z| \to \infty$,
from Corollary \ref{cor: del bar problem} one has
\begin{equation}
R^\pm_n(z) = - \int_\bbC \omega_{\rm FS}(z') \, P_0^{\alpha,\widetilde\alpha}(z') \, \frac{[\phi, R^\pm_{n-1}(z')]}{z' - z} = F_\pm^\phi \big( R^\pm_{n-1}(z') \big).
\end{equation}
Here the second step follows from the explicit formula \eqref{F beta explicit}.
The claim follows by induction.
\end{proof}
\end{lem}

\begin{propo} \label{prop: all MC action terms}
For $n \in \bbZ_{\geq 1}$ and any $\phi \in \Omega^0(\Sigma, \g) = \F(\Sigma)^1$ we have
\begin{equation} \label{ell'n term MC final}
\langle \phi, \ell'_n(\phi^{\otimes n}) \rangle = 
\begin{cases}
2 \big\langle \phi, \d^+_\Sigma \big( \ad_\phi^{n-1} \d^-_\Sigma \phi \big) \big\rangle, & \text{if $n$ is odd ,}\\
2 \kay \big\langle \phi, \d^+_\Sigma \big( \ad_\phi^{n-1} \d^-_\Sigma \phi \big) \big\rangle, & \text{if $n$ is even},
\end{cases}
\end{equation}
where $\kay$ was defined in \eqref{k def}.
\begin{proof}
The case $n=1$ is immediate from \eqref{eqn:ell'_1}. For $n \geq 2$,
insert \eqref{Fn explicit} into \eqref{ell'n term MC}, use \eqref{hdi phi}
and evaluate $\ell^2$ by \eqref{eqn:Lie} and $p^2$ by \eqref{p def 0 b}. This leads to
\begin{multline}
\langle \phi, \ell'_n(\phi^{\otimes n}) \rangle = -n(n+1) \int_\bbC \frac{(\bar z + 1) P^{\alpha,\widetilde\alpha}_0(z)}{1+|z|^2} \bigg( \frac{(\widetilde \alpha + 1) B(z)^{n-1}}{z - \widetilde\alpha} - \frac{(\alpha + 1) \widetilde B(z)^{n-1}}{z - \alpha} \bigg) \, \omega_{\rm FS}\\
\times \big\langle \phi, \d^+_\Sigma \big( \ad_\phi^{n-1} \d^-_\Sigma \phi \big) \big\rangle,
\end{multline}
where we used also the fact that $\big\langle \phi, [\d^+_\Sigma \phi, \ad_\phi^{n-2} \d^-_\Sigma \phi] \big\rangle 
= \big\langle \phi, \d^+_\Sigma \big( \ad_\phi^{n-1} \d^-_\Sigma \phi \big) \big\rangle$ by the invariance of 
the bilinear pairing $\langle\cdot,\cdot\rangle$ on $\g$.
The integral on the right hand side is computed using the identity
\begin{equation}
-N(N+1) \int_\bbC \frac{P(z') (\bar z' + 1) (1+ z \bar z')^{N-1}}{(1+|z'|^2)^{N}} \omega_{\rm FS}(z') = (z-1) P'(z) - N P(z),
\end{equation}
which holds for any polynomial $P$ of degree at most $N \geq 0$.
(In order to check the latter equation, start from its right hand side
and express the polynomial $P$ via Lemma \ref{lem: reproducing kernel},
recalling also the Fubini-Study volume form $\omega_{\rm FS}$ from \eqref{eqn:FS}.)
The previous identity shows that
\begin{align}
-n(n+1) \int_\bbC \frac{P^{\alpha,\widetilde\alpha}_0(z) (\bar z + 1)}{1+|z|^2} \frac{(\widetilde \alpha + 1) B(z)^{n-1}}{z - \widetilde\alpha}  \omega_{\rm FS} &= \frac{(\widetilde \alpha + 1) (\alpha - 1)}{\alpha - \widetilde\alpha},\\
-n(n+1) \int_\bbC \frac{P^{\alpha,\widetilde\alpha}_0(z) (\bar z + 1)}{1+|z|^2} \frac{(\alpha + 1) \widetilde B(z)^{n-1}}{z - \alpha} \omega_{\rm FS} &= - (-1)^n \frac{(\alpha + 1) (\widetilde \alpha - 1)}{\alpha - \widetilde\alpha},
\end{align}
from which the result follows.
\end{proof}
\end{propo}

We are ready to evaluate the Maurer-Cartan action \eqref{eqn:MCaction} on
a compactly supported degree $1$ element $\phi \in \Omega_\cc^0(\Sigma, \g) = \F_\cc(\Sigma)^1$:
\begin{align}
S_{\rm MC}[\phi] &= (\alpha - \widetilde{\alpha})^2 \sum_{n \geq 1} \frac{1}{(n+1)!} \int_\Sigma \left\langle \phi, \ell'_n(\phi^{\otimes n}) \right\rangle \notag\\
&= \sum_{n \geq 0} \frac{2 (\alpha - \widetilde{\alpha})^2}{(2n+2)!} \int_\Sigma \big\langle \phi, \d^+_\Sigma ( \ad_\phi^{2n} \d^-_\Sigma \phi) \big\rangle + \sum_{n \geq 0} \frac{2 \kay (\alpha - \widetilde{\alpha})^2}{(2n+3)!} \int_\Sigma \big\langle \phi, \d^+_\Sigma (\ad_\phi^{2n+1} \d^-_\Sigma \phi) \big\rangle,\label{MC action full}
\end{align}
where the first step involves \eqref{eqn:transcyclic} and the second step
follows from Proposition \ref{prop: all MC action terms}.
The first term on the right hand side of \eqref{MC action full} can be rewritten as
\begin{align} \label{Action PCM term}
\sum_{n \geq 0} \frac{2 (\alpha - \widetilde{\alpha})^2}{(2n+2)!} \int_\Sigma \big\langle \phi, \d^+_\Sigma ( \ad_\phi^{2n} \d^-_\Sigma \phi) \big\rangle &= - (\alpha - \widetilde{\alpha})^2 \int_\Sigma \bigg\langle \d^+_\Sigma \phi, \sum_{n \geq 0} \frac{2 \, \ad_\phi^{2n}}{(2n+2)!} \d^-_\Sigma \phi \bigg\rangle \notag\\
&= - (\alpha - \widetilde{\alpha})^2 \int_\Sigma \bigg\langle \frac{e^{\ad_\phi} - 1}{\ad_\phi} \d^+_\Sigma \phi, \frac{e^{\ad_\phi} - 1}{\ad_\phi}\d^-_\Sigma \phi \bigg\rangle \notag \\
&= - (\alpha - \widetilde{\alpha})^2 \int_\Sigma \big\langle \d^+_\Sigma g g^{-1}, \d^-_\Sigma g g^{-1} \big\rangle.
\end{align}
In the first step we used Stokes's theorem, in the second step we used
\begin{equation}
\sum_{n \geq 0} \frac{2 \, \ad_\phi^{2n}}{(2n+2)!} = \frac{e^{\ad_\phi} + e^{- \ad_\phi} - 2}{\ad_\phi^2} = \frac{1 - e^{- \ad_\phi}}{\ad_\phi} \frac{e^{\ad_\phi} - 1}{\ad_\phi}
\end{equation}
and invariance of the bilinear pairing $\langle \cdot, \cdot \rangle$ on $\g$, and in the third step
we introduced the group-valued element $g = e^\phi$ and we observed that the
associated right-invariant Maurer-Cartan forms $\d^\pm_\Sigma g g^{-1}$ can be expanded as
\begin{align} \label{dg gI expansion}
\d^\pm_\Sigma g g^{-1} = \sum_{r \geq 1} \frac{1}{r!} \ad_\phi^{r-1} \d^\pm_\Sigma \phi = \frac{e^{\ad_\phi} - 1}{\ad_\phi} \d^\pm_\Sigma \phi.
\end{align}
To rewrite the second term on the right hand side of \eqref{MC action full} we first note that
\begin{equation}
\sum_{n \geq 0} \frac{2 \, \ad_\phi^{2n+1}}{(2n+3)!} = \frac{e^{\ad_\phi} - e^{- \ad_\phi} - 2 \, \ad_\phi}{\ad_\phi^2} = - \int_I \d t \, (e^{- t \, \ad_\phi} - 1) \frac{e^{t \, \ad_\phi} - 1}{\ad_\phi},
\end{equation}
where we have introduced the closed interval $I \coloneqq [0,1]$. We then have
\begin{align}
\sum_{n \geq 0} & \frac{2 \kay (\alpha - \widetilde{\alpha})^2}{(2n+3)!} \int_\Sigma \big\langle \phi, \d^+_\Sigma (\ad_\phi^{2n+1} \d^-_\Sigma \phi) \big\rangle \notag \\
&= \kay (\alpha - \widetilde{\alpha})^2 \int_{\Sigma \times I} \d t \wedge \bigg\langle \ad_\phi \bigg( \frac{e^{t \, \ad_\phi} - 1}{\ad_\phi} \d^+_\Sigma \phi \bigg), \frac{e^{t \, \ad_\phi} - 1}{\ad_\phi} \d^-_\Sigma \phi \bigg\rangle \notag \\
&= \kay (\alpha - \widetilde{\alpha})^2 \int_{\Sigma \times I} \d t \wedge \bigg\langle \phi, \bigg[ \frac{e^{t \, \ad_\phi} - 1}{\ad_\phi} \d^+_\Sigma \phi, \frac{e^{t \, \ad_\phi} - 1}{\ad_\phi} \d^-_\Sigma \phi\bigg] \bigg\rangle \notag \\
&= \kay (\alpha - \widetilde{\alpha})^2 \int_{\Sigma \times I} \big\langle \d_I \hat g \hat g^{-1}, \big[ \d^+_\Sigma  \hat g \hat g^{-1}, \d^-_\Sigma \hat g \hat g^{-1} \big] \big\rangle. \label{eqn:WZterm}
\end{align}
In the first and second steps we used invariance of the bilinear pairing
$\langle \cdot, \cdot \rangle$ on $\g$ and in the third step we introduced the
group-valued path $I \ni t \mapsto \hat g(t) = e^{t \phi}$ from the identity
$\hat{g}(0) = \id$ to $\hat{g}(1) = g$, whose differential fulfills
$\d_I \hat g \hat g^{-1} = \d t \, \phi$.
Inserting \eqref{Action PCM term} and \eqref{eqn:WZterm} into \eqref{MC action full},
one recognizes that the Maurer-Cartan action \eqref{eqn:MCaction} takes the form
\begin{align}
S_{\rm MC}[\phi] = (\alpha - \widetilde{\alpha})^2 \left( - \int_\Sigma \big\langle \d^+_\Sigma g g^{-1}, \d^-_\Sigma g g^{-1} \big\rangle + \kay \int_{\Sigma \times I} \big\langle \d_I \hat g \hat g^{-1}, \big[ \d^+_\Sigma  \hat g \hat g^{-1}, \d^-_\Sigma \hat g \hat g^{-1} \big] \big\rangle \right),
\end{align}
thus reproducing the action of the principal chiral model with a WZ-term.

\subsection{Lax connection} \label{sec:Lax connection}
Let $\phi \in \Omega^0(\Sigma, \g) = \F(\Sigma)^1$ be a Maurer-Cartan element of the
$L_\infty$-algebra $\big( \F(\Sigma), \ell' \big)$. Its image $\lambda_\infty(\phi)$ under the $L_\infty$-morphism
\begin{equation}
\begin{tikzcd}
\lambda_\infty \coloneqq \widetilde{p}_\infty \, \psi \, \widetilde{i}_\infty \; : \; \big( \F(\Sigma), \ell' \big) \arrow[r, smooth squiggle] & \big( \L(\Sigma), \ell' \big),
\end{tikzcd}
\end{equation}
is again a Maurer-Cartan element. Our aim is to identify this element, 
up to gauge equivalence, with the usual Lax connection of the principal chiral model with a WZ-term.
\sk

To this end, consider the degree-$1$ element
\begin{subequations} \label{PCM+WZ Lax}
\begin{equation}
L = L_+ + L_- \in \L(\Sigma)^1 = \bigoplus_{i = \pm} \Omega^i(\Sigma, \g) \otimes \mathcal O_{(\omega)^i_0}(C)
\end{equation}
in $\big( \L(\Sigma), \ell' \big)$, given by the light-cone components
\begin{equation}
L_\pm \coloneqq \frac{\widetilde\alpha \mp 1}{\alpha - \widetilde\alpha} \frac{z - \alpha}{z \mp 1} \d^\pm_\Sigma g g^{-1}
\end{equation}
\end{subequations}
of the Lax connection for the principal chiral model with a WZ-term,
expressed in terms of the formal exponential $g = e^\phi$. We will 
show that 1.)~$L$ is a Maurer-Cartan element of $(\L(\Sigma),\ell')$ and 2.)~$\lambda_\infty(\phi)$ is gauge equivalent to $L$. 
\sk

First, let us argue that both 1.) and 2.) follow under the current hypothesis 
that $\psi \, \widetilde{i}_\infty(\phi)$ is related to $i(L)$ by a gauge transformation. 
For 1.) recall that $\phi$ is a Maurer-Cartan element and $\psi \,\widetilde{i}_\infty$ 
is an $L_\infty$-morphism, hence $\psi\, \widetilde{i}_\infty(\phi)$ is a Maurer-Cartan 
element. Gauge transformations in the differential graded Lie algebra $(\L(X),\ell)$ 
preserve the Maurer-Cartan equation, hence the current hypothesis entails that $i(L)$ 
is a Maurer-Cartan element and, since $i:(\L(\Sigma),\ell^\prime)\to(\L(X),\ell)$ 
is an injective morphism of differential graded Lie algebras, $L$ is a Maurer-Cartan element as well.
For 2.) recall that $\infty$-morphisms between $L_\infty$-algebras induce maps on the 
level of gauge equivalence classes of Maurer-Cartan elements 
\cite[Theorem 5.20 and Proposition 5.23]{KraftSchnitzer}. Since the $\infty$-quasi-isomorphism 
$\widetilde p_\infty$ is a quasi-inverse of $i$, the current hypothesis entails that 
the Maurer-Cartan elements $\lambda_\infty(\phi)$ and $L$ are gauge equivalent.
\sk

Therefore, it remains only to check the hypothesis that 
$\psi \, \widetilde{i}_\infty(\phi)$ is related to $i(L)$ by a gauge transformation.
Applying the quasi-isomorphism $i$ defined in Section \ref{SDR for LX} using \eqref{i def 0 bis}, we find
\begin{equation}
i(L)_0(z) = \sum_{j=\pm} (z-j) L_j(z),
\end{equation}
where $z - j$ for $j = \pm$ is understood to mean $z \mp 1$. Let us also compute $\psi(\widetilde{i}_\infty(\phi))_0(z)$.
By combining Lemma \ref{lem: Fn explicit} and \eqref{in on phi explicit} we find
\begin{equation}
\frac{1}{ n!} \widetilde{i}_n(\phi^{\otimes n})_0(z) = \sum_{j=\pm} \bigg( \frac{\widetilde \alpha - j}{z-\widetilde\alpha} \frac{B(z)^n}{n!} - \frac{\alpha - j}{z-\alpha} \frac{\widetilde B(z)^n}{n!} \bigg) \ad_\phi^{n-1} \d^j_\Sigma \phi.
\end{equation}
Then summing over $n \geq 2$ and introducing $\phi_1 \coloneqq B(z) \phi$ and $\phi_2 \coloneqq \widetilde B(z) \phi$ we have
\begin{equation}
\sum_{n=2}^\infty \frac{1}{n!} \widetilde{i}_n(\phi^{\otimes n})_0(z) = \sum_{j=\pm} \left( \frac{\widetilde \alpha - j}{z-\widetilde\alpha} \sum_{n=2}^\infty \frac{1}{n!} \ad_{\phi_1}^{n-1} \d^j_\Sigma \phi_1 - \frac{\alpha - j}{z-\alpha} \sum_{n=2}^\infty \frac{1}{n!} \ad_{\phi_2}^{n-1} \d^j_\Sigma \phi_2 \right).
\end{equation}
Now recalling also the expression of $\widetilde{i}_1 = \widetilde{i}$ 
from \eqref{tilde i def}, and using \eqref{tau in application} and \eqref{hdi phi}, we also have
\begin{align}
\widetilde{i}_1(\phi)_0(z) = \frac{\phi \, \d\bar z}{(1+|z|^2)^2} - \sum_{j=\pm} \frac{j \bar z + 1}{1+|z|^2} \d^j_\Sigma \phi
= \frac{\phi \, \d\bar z}{(1+|z|^2)^2} + \sum_{j=\pm} \left( \frac{\widetilde\alpha - j}{z - \widetilde\alpha} \d^j_\Sigma \phi_1 - \frac{\alpha - j}{z - \alpha} \d^j_\Sigma \phi_2 \right),
\end{align}
where the second step involves the identity
\begin{align}
\frac{\widetilde{\alpha} \pm 1}{z - \widetilde{\alpha}} B(z) - \frac{\alpha \pm 1}{z - \alpha} \widetilde{B}(z) = \frac{1}{1 + |z|^2},
\end{align}
which is an immediate consequence of \eqref{Fn explicit b}.
Adding the $n=1$ contribution to the above sum over $n \geq 2$ and 
introducing $g_1 = e^{\phi_1}$ and $g_2 = e^{\phi_2}$ we obtain
\begin{align}
\widetilde{i}_\infty(\phi)_0(z) &= \sum_{n=1}^\infty \frac{1}{n!} \widetilde{i}_n(\phi^{\otimes n})_0(z) \notag \\
&= \frac{\phi \, \d\bar z}{(1+|z|^2)^2} + \sum_{j=\pm} \left( \frac{\widetilde \alpha - j}{z-\widetilde\alpha} \frac{e^{\ad_{\phi_1}} - 1}{\ad_{\phi_1}} \d^j_\Sigma \phi_1 - \frac{\alpha - j}{z-\alpha} \frac{e^{\ad_{\phi_2}} - 1}{\ad_{\phi_2}} \d^j_\Sigma \phi_2 \right) \notag \\
&= \frac{\phi \, \d\bar z}{(1+|z|^2)^2} + \sum_{j=\pm} \left( \frac{\widetilde \alpha - j}{z-\widetilde\alpha}\d^j_\Sigma g_1 g_1^{-1} - \frac{\alpha - j}{z-\alpha} \d^j_\Sigma g_2 g_2^{-1} \right),
\end{align}
where for the last step we used also the expansion \eqref{dg gI expansion} of the right-invariant Maurer-Cartan form. 
Let us now also apply the morphism $\psi : \F(X) \to \L(X)$, which is induced from the bundle morphism 
associated with the divisor inequalities arising from the negative divisors entering the 
construction of $\F(X)$, but not of $\L(X)$.
In particular, the evaluation of $\psi$ on $\widetilde{i}_\infty(\phi)$ amounts to 
multiplication by the polynomials in \eqref{P alpha poly def}, therefore we obtain
\begin{equation}
\psi(\widetilde{i}_\infty(\phi))_0(z) = \frac{P_0^{\alpha,\widetilde\alpha}(z) \, \phi \, \d\bar z}{(1+|z|^2)^2} + P_0^{\alpha,\widetilde\alpha}(z) \sum_{j=\pm} \left( \frac{\widetilde \alpha - j}{z-\widetilde\alpha}\d^j_\Sigma g_1 g_1^{-1} - \frac{\alpha - j
}{z-\alpha} \d^j_\Sigma g_2 g_2^{-1} \right).
\end{equation}
Using the fact that $B(z) - \widetilde B(z) = 1$ we can write $g_2^{-1} g_1 = g = e^\phi$ so that
\begin{equation}
\d^\pm_\Sigma g_1 g_1^{-1} = \d^\pm_\Sigma g_2 g_2^{-1} + g_2 \big( \d^\pm_\Sigma g g^{-1} \big) g_2^{-1}.
\end{equation}
Next we observe that we have the following identities
\begin{align}
\frac{P_0^{\alpha,\widetilde\alpha}(z) \, \phi \, \d\bar z}{(1+|z|^2)^2} &= - \bar\partial g_i g_i^{-1},\\
\frac{P_0^{\alpha,\widetilde\alpha}(z)}{z \pm 1} \bigg( \frac{\widetilde \alpha \pm 1}{z-\widetilde\alpha} - \frac{\alpha \pm 1}{z-\alpha} \bigg) &= - 1.
\end{align}
It follows that
\begin{equation} \label{gamma def}
\psi(\widetilde{i}_\infty(\phi))_0(z) = - \bar\partial g_2 g_2^{-1} + \sum_{j=\pm} (z-j) \big( g_2 L_j(z) g_2^{-1} - \d^j_\Sigma g_2 g_2^{-1} \big).
\end{equation}
Recalling from \eqref{d pm Lax} that the contribution $\d_\Sigma$ to the differential 
$\d_\Sigma + \bar\partial$ on $\L(X)$ involves bundle morphisms, we have verified the 
hypothesis that $\psi\,\widetilde{i}_\infty(\phi)$ is related to $i(L)$ by the gauge 
transformation associated with $g_2$. It therefore follows from the above argument that 
$\lambda_\infty(\phi)$ is gauge equivalent to the usual Lax connection $L$ of the 
principal chiral model with a WZ-term, given by \eqref{PCM+WZ Lax}, and that the latter 
is flat as a consequence of $g = e^\phi$ being on-shell.



\section*{Acknowledgments}
The work of M.B.\ is supported in part by the MUR Excellence 
Department Project awarded to Dipartimento di Matematica, 
Universit{\`a} di Genova (CUP D33C23001110001) and it is fostered by 
the National Group of Mathematical Physics (GNFM-INdAM (IT)). 
R.A.C., A.S.\ and B.V.\ gratefully acknowledge the support of the Engineering and Physical
Sciences Research Council (UKRI1723).


%
%


\appendix

\section{\label{sec:proof of SDR}Proofs of Theorems \ref{thm: explicit SDRs 0}, \ref{thm: explicit SDRs -1} and \ref{thm: explicit SDRs -2} and Lemma \ref{lem: compatible pairs}}

We will use two results from complex analysis which we recall below.
\sk

The first result we will need is the weighted Koppelman formula on $\bbC$. 
The standard Koppelman formula is a fundamental result from complex analysis in 
several variables, which provides an integral representation of smooth Dolbeault 
forms on $\bbC^n$, see for instance \cite{Laurent}. In the special case $n=1$, it 
incorporates both the Cauchy-Pompeiu formula when applied to $(0,0)$-forms 
(which itself reduces to the Cauchy integral formula for holomorphic functions) 
and the solution to the $\bar\partial$-problem on $\bbC$ when applied to $(0,1)$-forms. 
The weighted version of the Koppelman formula, see for instance \cite{Gotmark}, 
is a generalization in which the Cauchy kernel is deformed by a smooth weight 
function $W : \bbC^2 \to \bbC$ with the property that its restriction to the 
diagonal is equal to $1$, so that the singularity of the Cauchy kernel is unaffected. 
The weight function is often assumed to be holomorphic in its first argument, but 
we will need also the case when $W$ is only holomorphic in its second argument. 
For completeness, we include the proof of this result.
\begin{theo} \label{thm: weighted Koppelman formula}
Let $R \subset \bbC$ be a bounded open subset and let 
$\zeta \in \Omega^{0,\bullet}(\overline{R})$ be a smooth Dolbeault 
form on the closure $\overline{R}$. Let $W : \bbC^2 \to \bbC$ be a smooth 
function such that $W(z,z) = 1$ for $z \in \bbC$. Then for all $z \in R$ we have the weighted Koppelman formula
\begin{align} \label{weighted Koppelman formula}
\zeta(z) &= \frac{1}{2 \pi \ii} \int_{z' \in \partial R} W(z, z') \frac{\d z' \wedge \zeta(z')}{z' - z} + \frac{1}{2 \pi \ii} \int_{z' \in R} W(z, z') \frac{\d z' \wedge \bar\partial_{z'} \zeta(z')}{z' - z}\\
&\qquad + \frac{1}{2 \pi \ii} \bar\partial_z \int_{z' \in R} W(z, z') \frac{\d z' \wedge \zeta(z')}{z' - z} - \frac{1}{2 \pi \ii} \int_{z' \in R} (\bar\partial_z + \bar\partial_{z'}) W(z, z') \frac{\d z' \wedge \zeta(z')}{z' - z}. \notag
\end{align}
\begin{proof}
Let $\zeta \in \Omega^{0,\bullet}(\overline{R})$ and consider differential form
\begin{equation} \label{eta Koppelman proof}
\eta(z, z') \coloneqq \frac{1}{2 \pi \ii} W(z, z') \frac{\d z' \wedge \zeta(z')}{z' - z} \in \Omega^{1,\bullet}(\overline{R} \times \overline{R} \setminus \text{im}\, \Delta),
\end{equation}
where $\text{im}\, \Delta \subset \overline{R} \times \overline{R}$ 
denotes the image of the diagonal embedding 
$\Delta : \overline{R} \hookrightarrow \overline{R} \times \overline{R}$, 
$z \mapsto (z,z)$. For any $\alpha \in \Omega^{0,\bullet}(\overline{R} \times \overline{R})$ 
the wedge product $\eta(z, z') \wedge \alpha(z, z')$ is locally integrable with respect 
to $z' \in \overline{R}$ so that the fiber integration against \eqref{eta Koppelman proof} 
defines a distribution on $\overline{R} \times \overline{R}$ valued in $\Omega^{0,\bullet}(\overline{R})$.
We then have
\begin{align} \label{del eta Koppelman proof}
(\bar\partial_z + \bar\partial_{z'}) \eta(z, z') &= \frac{1}{2 \pi \ii} (\bar\partial_z + \bar\partial_{z'}) W(z, z') \frac{\d z' \wedge \zeta(z')}{z' - z} - \frac{1}{2 \pi \ii} W(z, z') \frac{\d z' \wedge \bar\partial_{z'} \zeta(z')}{z' - z} \notag\\
&\qquad\qquad + \frac{\ii}{2} W(z, z') \delta(z' - z) (\d \bar z - \d \bar z') \wedge \d z' \wedge \zeta(z'),
\end{align}
where $\delta(z' - z)$ is the Dirac $\delta$-distribution at $z$, with the 
property that for any $\alpha \in \Omega^{0, \bullet}(\overline{R} \times \overline{R})$ 
we have $\int_{z' \in R} \frac{\ii}{2} \delta(z' - z) (\d z - \d \bar z') \wedge \d z'  \wedge \alpha(z, z') 
= (\Delta^\ast \alpha)(z)$. On the other hand, we have
\begin{equation}
\int_{z' \in R} (\bar\partial_z + \bar\partial_{z'}) \eta(z, z') = \bar\partial_z \int_{z' \in R} \eta(z, z') + \int_R \d_{z'} \eta(z, z')
= \bar\partial_z \int_{z' \in R} \eta(z, z') + \int_{z' \in \partial R} \eta(z, z'),
\end{equation}
where in the second step we used Stokes's theorem on the second term. The result 
now follows from substituting \eqref{del eta Koppelman proof} into the integral 
on the left hand side, using the definition of the Dirac $\delta$-distribution 
and the fact that $(\Delta^\ast W)(z) = 1$, and substituting \eqref{eta Koppelman proof} 
into both integrals on the right hand side. 
\end{proof}
\end{theo}

The following corollary will be used to compute certain integrals.
\begin{cor} \label{cor: del bar problem}
For any $\zeta \in \Omega^{0,1}(\bbC)$ such that $\zeta(z) = O(|z|^{-3}) \, \d \bar z$ as $|z| \to \infty$, the integral
\begin{equation}
u(z) = \frac{1}{2 \pi \ii} \int_\bbC \frac{\d z' \wedge \zeta(z')}{z'-z}
\end{equation}
is the unique solution of $\bar\partial u = \zeta$ such that $u(z) \to 0$ as $|z| \to \infty$.
\begin{proof}
The integral defining $u(z)$ converges absolutely since $\zeta(z) = O(|z|^{-3}) \d \bar z$ as $|z| \to \infty$.
\sk

We apply Theorem \ref{thm: weighted Koppelman formula} to the open disc
$R = D_\rho(0)$ of radius $\rho > 0$ around the origin, with the trivial weight
$W(z, z') = 1$ and $\zeta \in \Omega^{0,1}(\bbC)$ restricted to closure
$\overline{R}$. The first, second and fourth terms on the right hand side of
\eqref{weighted Koppelman formula} vanish and in the limit $\rho \to \infty$
we deduce that $\bar\partial_z u(z) = \zeta(z)$.
\sk

To see that $u(z) \to 0$ as $|z| \to \infty$, split the integral over $z' \in \bbC$
into the subregions identified by $|z' - z| < \frac{|z|}{2}$ and $|z' - z| \geq \frac{|z|}{2}$.
The integrals over these subregions are of order $O(|z|^{-2})$ and $O(|z|^{-1})$, respectively, as $|z| \to \infty$.
\sk

Finally, suppose $\tilde u \in \Omega^{0,0}(\bbC)$ is another solution to
$\bar\partial_z \tilde u(z) = \zeta(z)$ such that $\tilde u(z) \to 0$ as $|z| \to \infty$.
Then $u - \tilde u \in \Omega^{0,0}(\bbC)$ is holomorphic and bounded, so constant by Liouville's theorem,
and therefore vanishes since it tends to zero as $|z| \to \infty$.
\end{proof}
\end{cor}

The second complex analysis result that we require is an integral representation 
for polynomials of a bounded degree. The reproducing kernel in this formula will 
serve as motivation for the specific form of the weight function $W$ that we will 
use in applications of Theorem \ref{thm: weighted Koppelman formula}.
\begin{lem} \label{lem: reproducing kernel}
For any polynomial $P$ of degree at most $N \geq 0$ we have
\begin{equation} \label{generic p-integral}
P(z) = - \frac{1+N}{2 \pi \ii} \int_\bbC \frac{P(z') (1+ z \bar z')^N}{(1+|z'|^2)^{N+2}} \d z' \wedge \d \bar z'.
\end{equation}
\begin{proof}
Since $\deg(P) \leq N$ we can write $P(z) = \sum_{k=0}^{N} c_k z^k$. Then expanding 
the numerator of the integrand on the right hand side of \eqref{generic p-integral} in powers of $z$ we find
\begin{equation}
- \frac{1+N}{2 \pi \ii} \int_\bbC \frac{P(z') (1+ z \bar z')^N \d z' \wedge \d \bar z'}{(1+|z'|^2)^{N+2}} 
= - \sum_{k=0}^N \sum_{\ell=0}^N \frac{1}{2 \pi \ii} c_\ell z^k \frac{(N+1)!}{k! (N-k)!} \int_\bbC \frac{z'^\ell \bar z'^k \d z' \wedge \d \bar z'}{(1+|z'|^2)^{N+2}}.
\end{equation}
The integral in the last expression can be computed by moving to polar coordinates 
$z' = r e^{\ii \theta}$. The integral over $\theta$ forces $\ell = k$ in the double 
sum and the integral over $r$ produces a Beta function which is given by a ratio of Gamma functions. Explicitly, we have
\begin{equation}
- \frac{1}{2 \pi \ii} \int_\bbC \frac{z'^\ell \bar z'^k \d z' \wedge \d \bar z'}{(1+|z'|^2)^{N+2}} = \frac{k! (N-k)!}{(N+1)!} \delta_{\ell, k},
\end{equation}
for all $\ell, k \in \{ 0, \ldots, N \}$, from which the result follows.
\end{proof}
\end{lem}

\subsection{Proof of Theorem \ref{thm: explicit SDRs 0}}
We break down the proof of the properties of the strong deformation retract into several lemmas.
\begin{lem} \label{lem SDR pos deg 1}
For any $\alpha \in \bbC^{1+\deg(D)}$ we have $p_D i_D (\alpha) = \alpha$.
\begin{proof}
Given $\alpha = (\alpha_j)_{j=0}^{\deg(D)} \in \bbC^{1+\deg(D)}$, we compute
\begin{equation}
\sum_{j=0}^{\deg(D)} p_D\big( i_D(\alpha) \big)_j z^j = - \frac{1+\deg(D)}{2 \pi \ii} \int_\bbC \frac{\sum_{j=0}^{\deg(D)} \alpha_j z'^j (1 + z \bar z')^{\deg(D)}}{(1 + |z'|^2)^{2+\deg(D)}} \, \d z' \wedge \d \bar z' = \sum_{j=0}^{\deg(D)} \alpha_j z^j,
\end{equation}
where the last equality is by Lemma \ref{lem: reproducing kernel}.
\end{proof}
\end{lem}

\begin{lem} \label{lem SDR pos deg 2}
For any $\zeta \in \Omega^{0,1}(C, L_D)$ we have $- \zeta = \bar\partial h_D(\zeta)$.
Moreover, for any $s \in \Omega^{0,0}(C, L_D)$ we have $i_D p_D(s) = s + h_D(\bar\partial s)$.
\begin{proof}
The claim follows from Theorem \ref{thm: weighted Koppelman formula} 
applied to a disc $R = D_\rho(0) \subset \bbC$ of radius $\rho > 0$ 
around the origin with the weight $W(z, z') = \big( \frac{1+z \bar z'}{1+|z'|^2} \big)^{1+\deg(D)}$, 
after taking the limit $\rho \to \infty$.
\sk

Specifically, let $\zeta = (\zeta_0, \zeta_\infty) \in \Omega^{0,1}(C, L_D)$ and 
take $\zeta$ in Theorem \ref{thm: weighted Koppelman formula} to be the restriction 
$\zeta_0|_{D_\rho(0)} \in \Omega^{0,1}(D_\rho(0))$. The first and second integrals 
on the right hand side of \eqref{weighted Koppelman formula} vanish on degree grounds. 
The third integral tends exactly to $- \bar\partial h_D(\zeta)_0(z)$ as $\rho \to \infty$. 
In the fourth integral we have $\bar\partial_z W(z,z') = 0$ since $W(z,z')$ is holomorphic 
in $z$ and the remaining integrand vanishes on degree grounds. Therefore 
$\zeta_0 = - \bar\partial h_D(\zeta)_0$, and similarly we have 
$\zeta_\infty = - \bar\partial h_D(\zeta)_\infty$, which proves the first statement.
\sk

Now let $s = (s_0, s_\infty) \in \Omega^{0,0}(C, L_D)$ and consider 
$\zeta = s_0|_{D_\rho(0)} \in \Omega^{0,0}(D_\rho(0))$ in Theorem \ref{thm: weighted Koppelman formula}.
Consider the first integral on the right hand side of 
\eqref{weighted Koppelman formula}. Note that as $|z'| \to \infty$, its integrand behaves as
\begin{equation} \label{eta large z'}
W(z, z') \frac{\d z' \wedge s_0(z')}{z' - z} \underset{|z'| \to \infty} \sim s_0(z') \frac{z^{1+\deg(D)}}{z'^{2+\deg(D)}} \d z' = z^{1+\deg(D)} s_\infty(w') \frac{\d z'}{z'^2},
\end{equation}
where in the last step we have used the transformation property 
$s_0(z') = z'^{\deg(D)} s_\infty(w')$ with $w' = z'^{-1}$. Since $s_\infty(w')$ 
is bounded as $|w'| \to 0$, i.e.\ $|z'| \to \infty$, it follows that the first 
integral on the right hand side of \eqref{weighted Koppelman formula} vanishes 
by the estimation lemma. The second integral on the right hand side of 
\eqref{weighted Koppelman formula} tends to $- h_D(\bar\partial s)_0(z)$ as 
$\rho \to \infty$, while the third integral vanishes on degree grounds. As 
for the last integral, since $W(z, z')$ is holomorphic in $z$ the only contribution is
\begin{equation}
- \frac{1}{2 \pi \ii} \int_{z' \in D_\rho(0)} \bar\partial_{z'} W(z, z') \frac{\d z' \wedge s_0(z')}{z' - z} = - \frac{1+\deg(D)}{2\pi \ii} \int_{z' \in D_\rho(0)} \frac{s_0(z') (1+z \bar z')^{\deg(D)}}{(1+|z'|^2)^{2+\deg(D)}} \d z' \wedge \d \bar z',
\end{equation}
which in the limit $\rho \to \infty$ coincides exactly with the expression 
for $i_D p_D(s)_0(z)$. This establishes that $s_0 = - h_D(\bar\partial s)_0 + i_D p_D(s)_0$, 
and similarly we find $s_\infty = - h_D(\bar\partial s)_\infty + i_D p_D(s)_\infty$, 
from which the second statement now follows.
\end{proof}
\end{lem}

\begin{lem} \label{lem SDR pos deg 4}
For any $\zeta \in \Omega^{0,1}(C, L_D)$ we have $p_D h_D(\zeta) = 0$ and $h_D^2(\zeta) = 0$.
Moreover, for any $\alpha \in \bbC^{1+\deg(D)}$ we have $h_D i_D(\alpha) = 0$.
\begin{proof}
The last two facts $h_D^2(\zeta) = 0$ and $h_D i_D(\alpha) = 0$ are both 
trivial on degree grounds. To show also $p_D h_D(\zeta) = 0$,
recall that $\Omega^{0, \bullet}(C, L_D)$ has trivial $1^{\rm st}$ cohomology 
because $\deg(D) \geq 0$. In particular, any $\zeta \in \Omega^{0,1}(C, L_D)$ can be written as
$\zeta = \bar\partial s$ for some $s \in \Omega^{0,0}(C, L_D)$. Taking this into account
and applying the second part of Lemma \ref{lem SDR pos deg 2} and Lemma \ref{lem SDR pos deg 1},
we conclude $p_D h_D (\zeta) = p_D h_D(\bar\partial s) = p_D i_D p_D(s) - p_D(s) = 0$.
\end{proof}
\end{lem}

\subsection{Proof of Theorem \ref{thm: explicit SDRs -1}}
Since in this case the maps $i_D$ and $p_D$ are trivial and $h_D^2 = 0$ 
on degree grounds, to establish the strong deformation retract, the only 
non-trivial property that we must show is 
$\text{id}_{\Omega^{0, \bullet}(C, L_D)} = - \bar\partial h_D - h_D \bar\partial$.
\begin{lem} \label{lem SDR pos deg -1}
For any $\zeta \in \Omega^{0,1}(C, L_D)$ we have $- \zeta = \bar\partial h_D(\zeta)$.
Moreover, for any $s \in \Omega^{0,0}(C, L_D)$ we have $0 = s + h_D(\bar\partial s)$.
\begin{proof}
The proof of Lemma \ref{lem SDR pos deg 2} holds also in the present 
case $\deg(D) = -1$. In particular, note that in this case the weight 
$W(z,z') = 1$ is trivial.
\end{proof}
\end{lem}

\subsection{Proof of Theorem \ref{thm: explicit SDRs -2}}
As in the proof of Theorem \ref{thm: explicit SDRs 0}, we break down the
proof of the properties of the strong deformation retract from Theorem 
\ref{thm: explicit SDRs -2} into several lemmas.
\begin{lem} \label{lem SDR neg deg 1}
For any $\alpha \in \bbC^{-1-\deg(D)}[-1]$ we have $p_D i_D (\alpha) = \alpha$.
\begin{proof}
This check resembles very closely the proof of Lemma \ref{lem SDR pos deg 1} 
by using the complex conjugate of the statement in Lemma \ref{lem: reproducing kernel}.
\end{proof}
\end{lem}

\begin{lem} \label{lem SDR neg deg 2}
For any $\zeta \in \Omega^{0,1}(C, L_D)$ we have $i_D p_D(\zeta) 
= \zeta + \bar\partial h_D(\zeta)$. Moreover, for any $s \in \Omega^{0,0}(C, L_D)$ 
we have $- s = h_D(\bar\partial s)$.
\begin{proof}
The claim follows by the same argument as in the proof of Lemma \ref{lem SDR pos deg 2}, 
however with weight $W(z, z') = \big( \frac{1 + \bar z z'}{1+|z|^2} \big)^{-1-\deg(D)}$.
\end{proof}
\end{lem}

\begin{lem} \label{lem SDR neg deg 4}
For any $\zeta \in \Omega^{0,1}(C, L_D)$ we have $p_D h_D(\zeta) = 0$ and $h_D^2(\zeta) = 0$.
Moreover, for any $\alpha \in \bbC^{-1-\deg(D)}[-1]$ we have $h_D i_D(\alpha) = 0$.
\begin{proof}
The first two facts $p_D h_D(\zeta) = 0$ and $h_D^2(\zeta) = 0$ are both trivial on degree grounds.
To show also $h_D i_D(\alpha) = 0$, we observe that $\bar\partial h_D i_D(\alpha) = i_D p_D i_D(\alpha) - i_D(\alpha) = 0$, 
where we used the first part of Lemma \ref{lem SDR neg deg 2} and Lemma \ref{lem SDR neg deg 1}.
Recalling that $\Omega^{0, \bullet}(C, L_D)$ has trivial $0^{\rm th}$ cohomology 
because $\deg(D) < 0$, we conclude that $h_D i_D(\alpha) = 0$.
\end{proof}
\end{lem}

\subsection{Proof of Lemma \ref{lem: compatible pairs}}
Taking into account the complementarity of the divisors $D + D' = (\omega)$, 
where $\omega \in \mathcal M^{(1)}(C) \setminus \{0\}$ is a given meromorphic $1$-form 
(in particular, $\deg (\omega) = -2$), and the graded symmetry of the wedge product $\wedge$ in the pairing
\eqref{deg -1 pairing}, it suffices to consider two cases, namely 
(a)~$\deg(D) = -1 = \deg(D')$ or (b)~$\deg(D) \geq 0$ and, equivalently, $\deg(D') \leq -2$.

\paragraph{Case (a):} Items (1) and (2) of Lemma \ref{lem: compatible pairs} hold true because
in this case $\text{im}(i_{D}) = 0$ and $\text{im}(i_{D'}) = 0$ are both trivial.
For item (3) the only non-trivial check is when $\zeta \in \Omega^{0,1}(C, L_{D'})$ and
$\alpha \in \Omega^{0,1}(C, L_D)$ have degree $|\zeta| = 1 = |\alpha|$.
We note that, interpreting the meromorphic $1$-form $\omega$ as an $L_{-(\omega)}$-valued $(1,0)$-form on $C$, 
in the chart $z : U_0 \to \bbC$ one has the local presentation $\omega = c \, \dd z$, for some constant $c \in \bbC$.
Then, using the definition \eqref{h def -1} of $h_{D'}$ and $h_D$ and computing the pairing \eqref{deg -1 pairing}
in the chart $z : U_0 \to \bbC$, we have
\begin{align} \label{h zeta alpha 0 1}
\langle\!\langle h_{D'}(\zeta), \alpha \rangle\!\rangle_\omega &= - \frac{c}{(2 \pi)^2} \int_\bbC \dd z \wedge \int_\bbC \frac{\d z' \wedge \zeta_0(z')}{z' - z} \wedge \alpha_0(z) \notag\\
&= \frac{c}{(2 \pi)^2} \int_\bbC \dd z' \wedge \zeta_0(z') \wedge \int_\bbC \frac{\dd z \wedge \alpha_0(z)}{z - z'} = (-1)^{|\zeta|} \langle\!\langle \zeta, h_D(\alpha) \rangle\!\rangle_\omega,
\end{align}
where the second equality follows from Fubini's theorem.

\paragraph{Case (b):} Consider item (1) of Lemma \ref{lem: compatible pairs}. The only non-trivial check is when
$\zeta = i_{D'}(\beta) \in \text{im}(i_{D'})^1$ and $\alpha \in \text{ker}(p_D)^0$. 
It follows from \eqref{p def 0 b} that the condition $p_D(\alpha) = 0$ 
computed using the chart $z : U_0 \to \bbC$ translates into 
\begin{equation}\label{p_D vanishing}
\int_\bbC \frac{\alpha_0(z')\, \bar z'^j}{(1+|z'|^2)^{2+\deg(D)}}\, \d z' \wedge \d \bar z' = 0,
\end{equation}
for all $j = 0, \ldots, \deg(D)$. 
Computing the pairing \eqref{deg -1 pairing} in the chart $z : U_0 \to \bbC$, 
using the local presentation $\omega = c\, \d z$ (see case (a) above),
and recalling the definition of $i_{D'}$ from \eqref{i def -2}, one finds 
\begin{align}
\langle\!\langle \zeta, \alpha \rangle\!\rangle_\omega &= \frac{\ii}{2\pi}\int_\bbC c\, \d z \wedge (1 + |z|^2)^{\deg(D')} \sum_{j=0}^{-2-\deg(D')} \beta_j\, \bar z^j\, \d \bar z \wedge \alpha_0(z) \nn \\
&= \frac{\ii\,c}{2\pi} 
\sum_{j=0}^{\deg(D)} \beta_j \int_\bbC \frac{\alpha_0(z)\, \bar z^j}{(1+|z|^2)^{2+\deg(D)}}\, \d z \wedge \d \bar z = 0,
\end{align}
where we used $\deg(D') = -2 - \deg(D)$ in the second step and \eqref{p_D vanishing} in the last step.
To conclude the proof, we observe that item (2) of Lemma \ref{lem: compatible pairs} is proven by 
a similar argument and item (3) is proven by mimicking case (a).


\end{document}